\documentclass[a4paper,UKenglish,cleveref, autoref, thm-restate]{lipics-v2021}

\pdfoutput=1 
\hideLIPIcs  

\title{Approximation Algorithm for the Min-Cost Bipartite Matching with Penalties} 

\author{Eunjin Oh}{Department of Computer Science and Engineering, Pohang University of Science and Technology (POSTECH), Pohang, Republic of Korea \and \url{https://sites.google.com/view/eunjinoh/}}{eunjin.oh@postech.ac.kr}{https://orcid.org/0000-0003-0798-2580}{}

\author{Seongbin Park}{Department of Computer Science and Engineering, Pohang University of Science and Technology (POSTECH), Pohang, Republic of Korea}{seongbin.park@postech.ac.kr}{https://orcid.org/0009-0000-9018-5798}{}

\author{Chanho Song}{Department of Computer Science and Engineering, Pohang University of Science and Technology (POSTECH), Pohang, Republic of Korea}{sch0622@postech.ac.kr}{https://orcid.org/0009-0001-3522-3517}{Supported by the Basic Science Research Program through the National Research Foundation of Korea (NRF) funded by the Ministry of Education (No.RS-2025-25432435).}

\authorrunning{E. Oh, S. Park, and C. Song} 

\Copyright{Eunjin Oh, Seongbin Park, and Chanho Song} 

\ccsdesc[100]{Theory of computation~Design and analysis of algorithms} 
\ccsdesc[100]{Theory of computation~Computational geometry}

\keywords{Bipartite matching, penalty, doubling space} 

\category{} 

\funding{Supported by Institute of Information \& Communications Technology Planning \& Evaluation (IITP) grant funded by the Korea government (MSIT) (No.RS-2024-00440239, Sublinear Scalable Algorithms for Large-Scale Data Analysis) and the National Research Foundation of Korea (NRF) grant funded by the Korea government (MSIT) (No.RS-2024-00358505).}

\acknowledgements{All authors contributed equally to this work. Authors are listed in alphabetical order.}

\nolinenumbers 

\EventEditors{John Q. Open and Joan R. Access}
\EventNoEds{2}
\EventLongTitle{42nd Conference on Very Important Topics (CVIT 2016)}
\EventShortTitle{CVIT 2016}
\EventAcronym{CVIT}
\EventYear{2016}
\EventDate{December 24--27, 2016}
\EventLocation{Little Whinging, United Kingdom}
\EventLogo{}
\SeriesVolume{42}
\ArticleNo{23}

\DeclareMathOperator{\poly}{poly}
\newcommand{\ddim}{\textsf{ddim}}

\newcommand{\sdist}{\mathsf d}

\begin{document}

\maketitle

\begin{abstract}
In this paper, we study the minimum-cost bipartite matching with penalties problem in metric spaces with bounded doubling dimension: Given two disjoint sets $R, B$ in a metric space $\mathcal{M}$ with $|R|+|B|=n$ and a penalty function $p \colon R \cup B \to \mathbb{R}_{\ge 0}$, the goal is to select a set of pairs in $R\times B$ so that every point belongs to at most one pair and the sum of the distances of the selected pairs and the penalties of the points not belonging to any pair is minimized.
While near-linear time approximation algorithms are known for the minimum-cost perfect matching problem in geometric settings, no such algorithm was previously known for the penalty setting.
We present a randomized algorithm that computes a $(1+\varepsilon)$-approximate minimum-cost bipartite matching with penalties in $O(n \poly(\log n, 1/\varepsilon))$ time with high probability.
To the best of our knowledge, this is the first near-linear time approximation algorithm for the problem in the penalty setting.
\end{abstract}

\section{Introduction}
A central task in geometric data analysis is to measure the similarity between two finite point sets $R$ and $B$ in a metric space. 
When the two sets have the same cardinality, a classical approach is to match points of $R$ to points of $B$ one-to-one to minimize 
the sum of pairwise distances over the matched pairs.
The resulting problem is known as the \emph{geometric minimum-cost perfect bipartite matching} problem, 
which has been studied extensively over the past several decades~\cite{DBLP:conf/stoc/AgarwalCRX22,DBLP:conf/compgeom/AgarwalV04,DBLP:conf/soda/Indyk07,DBLP:journals/jacm/RaghvendraA20,DBLP:journals/siamcomp/Vaidya89a,DBLP:conf/soda/VaradarajanA99}.
In particular, 
for two point sets $R$ and $B$ in $\mathbb{R}^{\mathrm{d}}$ with $|R|=|B|=n$, 
a $(1+\varepsilon)$-approximate solution can be found in $O(n\poly(\log n,1/\varepsilon))$ expected time~\cite{DBLP:journals/jacm/RaghvendraA20}, and $n(\varepsilon^{-1}\log n)^{O(\mathrm{d})}$ deterministic time~\cite{DBLP:conf/stoc/AgarwalCRX22}.
Despite this progress, the perfect-matching requirement is not always natural, since real data may be unbalanced, noisy, or contaminated by outliers, which can force undesirable high-cost pairings.
This motivates a more flexible geometric matching model in which the solution may be a \emph{non-perfect} matching.

A natural way is the $k$-partial matching, where the goal is to choose exactly $k$ pairs of minimum total length.
Partial matching models arise in real applications where forcing all objects to be matched is inappropriate, such as computer vision tasks with occlusions or outliers, and transportation and logistics settings with imbalanced supply and demand~\cite{DBLP:conf/iclr/JiangLFWY25,DBLP:conf/aaai/WangB22}.
For the $k$-partial matching problem on the plane, Agarwal et al.~\cite{DBLP:conf/compgeom/AgarwalCX19} gave an exact algorithm with running time $O((n+k^2)\operatorname{polylog} n)$ and a $(1+\varepsilon)$-approximation algorithm with running time $O((n+k\sqrt{k})\operatorname{polylog} (n/\varepsilon))$.
However, the $k$-partial matching controls only the number of selected pairs and does not capture point-specific priorities, such as which points should preferably be matched or can be ignored.
Moreover, when $k$ is close to $n$, the known bounds become larger than those for geometric perfect matching.
Another way to handle unbalanced input is many-to-many matching, which relaxes the one-to-one constraint by allowing a point to be matched to multiple points and has been studied for geometric point sets~\cite{an2025approximation,  DBLP:conf/isaac/BandyapadhyayMS21,DBLP:conf/compgeom/Bandyapadhyay024,DBLP:conf/swat/ParkO26}.
This model captures situations where every point should be covered, but it is not appropriate when the desired output is a one-to-one correspondence.

\medskip 
Motivated by the limitations of $k$-partial matching and many-to-many matching, we study the \emph{minimum-cost bipartite matching with penalties} problem in metric spaces.
In this problem, each point is assigned a penalty, and a feasible solution may either match it to a point of the opposite color or leave it unmatched and pay the penalty.
This model captures point-specific importance and preserves the one-to-one nature of the selected pairs.
More formally, 
we are given two disjoint sets $R$ and $B$ in 
a metric space $(\mathcal{M}, d)$ 
along with a \emph{penalty function} $p: X \rightarrow \mathbb{R}_{\geq 0}$, where $X=R\cup B$. 
A \emph{matching} $M \subseteq R \times B$ is a set of edges such that each point in $X$ is incident to at most one edge in $M$. 
The \emph{cost} of $M$ is defined as $\mathrm{w}(M) = \sum_{(r,b) \in M} d(r, b) + \sum_{v \in U(M)} p(v)$, where $U(M)$ is the set of unmatched points in $X$ with respect to $M$. 
The goal is to find a matching that minimizes its cost.

While the one-to-one matching problem has been studied extensively, its penalty variant has been much less explored. 
From a graph-theoretic viewpoint, the penalty variant is not  different from the non-penalty version: by adding dummy vertices that encode the option of leaving points unmatched, one can reduce the problem to the non-penalty variant.
However, this reduction does not preserve the geometric structure of the input.
Thus, from a geometric viewpoint, the penalty variant is fundamentally different from the non-penalty version.
However, except for restricted metric spaces such as tree metrics~\cite{DBLP:conf/nips/SatoYK20},\footnote{
The problem studied in~\cite{DBLP:conf/nips/SatoYK20} is more general; our problem can be viewed as an integral unit-mass analogue of their problem. 
For details, see Appendix~\ref{sec:penaly_matching_tree}.
} 
no near-linear-time algorithm was previously known for the penalty variant in geometric settings.

\subparagraph*{Our result.}
Our main result is stated in Theorem~\ref{thm:main}.
Our algorithm extends beyond Euclidean spaces to metric spaces with bounded \emph{doubling dimension}.
The \emph{doubling dimension} of a metric space, denoted by $\ddim$, is the smallest value $\lambda$ such that every ball can be covered by $2^\lambda$ balls of half the radius.
It is known that $\mathbb{R}^{\mathrm{d}}$ has doubling dimension $\Theta(\mathrm{d})$~\cite{assouad1983plongements}.
\begin{restatable}{theorem}{main}
\label{thm:main}
Let $(\mathcal{M}, d)$ be a metric space of constant doubling dimension $\ddim$.
Let $R$ and $B$ be two point sets in $\mathcal M$,
and let $X= R \cup B$ with $n=|X|$. 
Let $p \colon X \to \mathbb{R}_{\ge 0}$ be a penalty function.
Given a parameter $\varepsilon > 0$, we can compute a $(1+\varepsilon)$-approximate minimum-cost bipartite matching with penalties between $R$ and $B$ in $O(n \poly(\log n, 1/\varepsilon))$ time, with probability at least $1 - 1 / n^{\Omega(1)}$.
\end{restatable}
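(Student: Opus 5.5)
We plan to follow the Raghvendra--Agarwal framework for near-linear $(1+\varepsilon)$-approximate geometric matching, adapted to doubling metrics and extended to handle penalties. The first step is a reduction to a bounded spread instance. We compute an $O(\log n)$-approximation (or an $O(1)$-approximation) $\tilde{w}$ of the optimum cost, for instance via a randomly shifted hierarchical net tree (HST) embedding in which the penalty problem can be solved greedily bottom-up. We then clamp every penalty to at most $\tilde{w}$: a point whose penalty exceeds $\mathrm{OPT}$ is never left unmatched in an optimal solution, so clamping preserves the optimum. Penalties below $\varepsilon\tilde{w}/n$ are rounded down to $0$; since the total change is at most $\varepsilon \tilde{w}$, only an additive $O(\varepsilon)\mathrm{OPT}$ error arises after a constant-factor bootstrap. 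Likewise we snap points to a net of scale $\varepsilon\tilde{w}/n^2$. After this, all relevant distances and penalties lie in a range of spread $\poly(n/\varepsilon)$, giving $O(\log(n/\varepsilon))$ levels.

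Next we build a randomly shifted hierarchical decomposition of the doubling space, with $O(\log n)$ levels and $2^{O(\ddim)}$ children per cell. At each cell we place $(\varepsilon^{-1}\log n)^{O(\ddim)}$ portals. The standard argument gives that, with constant probability (boosted to high probability by $O(\log n)$ independent repetitions and keeping the best), there is a portal-respecting matching of cost at most $(1+\varepsilon)\mathrm{OPT}$. An edge $(r,b)$ is cut at a level of diameter $\Delta$ with probability $O(\ddim\cdot d(r,b)/\Delta)$, and the portal detour at that level is $\varepsilon\Delta/\log n$. We then encode penalties geometrically: each point $v$ is given the option of being routed up the tree to the level whose scale is about $p(v)$ and then ``discarded'' there. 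Equivalently, we attach to the tree a sink reachable from each cell at cost comparable to the penalties routed through it, rounded to powers of $(1+\varepsilon)$. Because $p(v)$ is arbitrary, we will instead keep penalties exact by adding, at the leaf of $v$, a dummy edge to a global sink of cost $p(v)$. This yields a min-cost flow in a graph with $\tilde O(n)$ edges, namely the portal graph plus $n$ sink edges.

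The core computation is to solve this uncapacitated-transshipment/min-cost flow problem $(1+\varepsilon)$-approximately in near-linear time. Since the sink edges make it a transshipment with a supply at each point and demands that may be absorbed either by opposite-colour points or by the sink (paying $p$), we will formulate it as uncapacitated transshipment on a graph with $\tilde O(n)$ edges. We then apply Sherman-style preconditioned gradient methods, or alternatively a divide-and-conquer dynamic program over the tree that combines children's solutions via cost functions over portal-flow vectors. The DP is too expensive in general, so the gradient/boosting route is preferred. Integrality follows from total unimodularity: the transshipment has an integral optimum, and a fractional near-optimal flow can be rounded, for example via the link-cut tree rounding used by Raghvendra--Agarwal, into an integral one. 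The integral flow is then decomposed into $R$--$B$ paths, which become matched pairs, and $R$/$B$--sink paths, which become unmatched points; by the triangle inequality, shortcutting paths does not increase cost.

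The main obstacle we expect is the interaction between penalties and the portal approximation. The $(1+\varepsilon)$ structure theorem charges portal detours against edge lengths, but unmatched points contribute no length, so we must verify that the sink edges never cross cell boundaries and therefore incur no detour. Attaching the sink at each leaf resolves this, but it destroys the planarity/tree-like structure that the fast flow solver relies on. Our first attempt at this step is to use a sink-augmented version of the transshipment solver, noting that adding one universal vertex with $n$ edges preserves the $\ell_1$-embedding-based preconditioner up to an additive term, because a sink edge's cost is the fixed penalty of its point. If that fails, the fallback is to round penalties to $O(\varepsilon^{-1}\log n)$ geometric classes and attach one sink per class per cell, accepting an additional $(1+\varepsilon)$ factor.
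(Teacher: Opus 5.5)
\textbf{There is a genuine gap in how you encode the penalties, and it breaks the core step.}

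An uncapacitated transshipment of the kind Sherman-style preconditioned solvers handle (undirected $\ell_1$-minimization of a weighted flow) lets flow pass through any vertex. So the unit supplied by a point $v$ can travel through the portal graph to the leaf of another point $u$ and leave through $u$'s sink edge. In effect you charge $\min_u\{d_G(v,u)+p(u)\}$, with $d_G$ the portal-graph distance, instead of $p(v)$. The relaxation can therefore be arbitrarily smaller than the optimum.

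Concretely, let $r_1,r_2\in R$ be co-located with $p(r_1)=100$ and $p(r_2)=0$, and let $b\in B$ lie at distance $1$ with $p(b)=0$. The optimum is $1$: match $(r_1,b)$ and drop $r_2$. But routing both red units out through $r_2$'s sink edge and feeding $b$ from the sink is a feasible flow of cost $0$.

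Total unimodularity only says this wrong LP has an integral optimum, and your decoding cannot repair it. The path $r_1\to r_2\to\text{sink}$ costs $d(r_1,r_2)+p(r_2)=0$, but it decodes to ``$r_1$ unmatched'' at cost $p(r_1)=100$. No triangle inequality relates $p(r_1)$ to $d(r_1,r_2)+p(r_2)$, so shortcutting is not cost-non-increasing; here you would output cost $100$ against $\mathrm{OPT}=1$. Your fallback (one sink per penalty class per cell) has the same defect, because a high-penalty point can reach a low-class sink in its own cell. The real obstacle is therefore not the one you anticipated, namely whether sink edges cross cell boundaries.

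To make the formulation exact, each penalty edge must carry only its own point's unit, via private terminal vertices with directed edges or unit capacities. That version is a correct, totally unimodular LP, but it is a directed/capacitated flow problem. The undirected preconditioning machinery you plan to use (Sherman, and the geometric-transport algorithms built on it) does not apply to it, and you give no near-linear $(1+\varepsilon)$ solver for it. Running an undirected solver on a dummy-vertex reduction such as the prism graph does not help either, since several units of flow can cross a single link edge and the same sharing reappears.

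The paper sidesteps all of this by never relaxing to a flow. It reduces to minimum-cost perfect matching on the prism graph, where the link edge $(u_0,u_1)$ of cost $2p(u)$ touches only the two copies of $u$. It then runs the combinatorial Raghvendra--Agarwal augmenting-path scheme: split-tree distances, the $\theta$-surcharge on non-local edges, symmetric augmenting paths, and the ACI/LSI invariants. There the matching constraint itself prevents one point from using another point's penalty. Note also that Raghvendra--Agarwal do not round a fractional flow, so there is no link-cut-tree rounding step to borrow from them.
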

Note that setting all penalties sufficiently large yields a near-linear time approximation algorithm for minimum-cost bipartite perfect matching in metrics of bounded doubling dimension, which extends the result of~\cite{DBLP:journals/jacm/RaghvendraA20} for fixed-dimensional Euclidean spaces.

\subparagraph*{Additional related work.}
Minimum-cost bipartite matching is a classical problem in graph theory, but directly applying graph algorithms to geometric instances is expensive because the complete bipartite graph has quadratic size.
This has motivated a long line of work on exploiting geometric structure~\cite{DBLP:conf/stoc/AgarwalCRX22,DBLP:conf/compgeom/AgarwalV04,DBLP:conf/soda/Indyk07,DBLP:journals/jacm/RaghvendraA20,DBLP:journals/siamcomp/Vaidya89a,DBLP:conf/soda/VaradarajanA99}.
For geometric bipartite perfect matching in the plane, the best-known exact algorithm runs in $O(n^{2.5}\operatorname{polylog} n)$ time under the $L_1$, $L_2$, and $L_\infty$ metrics~\cite{DBLP:journals/siamcomp/Vaidya89a}.
If the point sets have bounded integer coordinates in $[\Delta]^2$, the optimal solution can be computed in $O(n^{3/2+\delta}\log(n\Delta))$ time for any fixed $\delta>0$~\cite{DBLP:conf/compgeom/Sharathkumar13}.

The penalty variant studied in this paper can also be viewed as a generalization of many-to-many matching, since the standard many-to-many matching cost is recovered when the penalty of each point is set to its nearest-neighbor distance to the opposite color
\cite{DBLP:conf/compgeom/Bandyapadhyay024}.
Thus we provide an alternative near-linear-time approximation algorithm for the many-to-many matching problem in doubling metric spaces. 
The many-to-many matching can be solved exactly in $O(n\log n)$ time for the 1D Euclidean space~\cite{DBLP:journals/gc/ColanninoDHLMRST07}, and 
exactly in $O(n^2\operatorname{polylog} n)$ time for the 2D Euclidean space.
For planar point sets with bounded integer coordinates in $[\Delta]^2$, the running time can be improved to 
$O(n^{3/2}\log (n\Delta))$ time~\cite{DBLP:conf/swat/ParkO26}.
Also, $O_\varepsilon(n\log n)$-time $(1+\varepsilon)$-approximation algorithms for any doubling space were presented~\cite{an2025approximation,DBLP:conf/compgeom/Bandyapadhyay024}.

Our work is also related to unbalanced optimal transport, where the two input measures need not have the same total mass.
The optimal partial transport problem, in which only a prescribed amount of mass is transported, has been studied from the viewpoint of free boundaries and regularity theory~\cite{caffarelli2010free,figalli2010optimal}.
Piccoli and Rossi~\cite{piccoli2014generalized} studied a generalized Wasserstein distance for different total masses by adding a mass-variation penalty to the transportation cost.
Chizat et al.~\cite{chizat2018unbalanced} later developed a more general unbalanced optimal transport framework where mass variation is modeled by divergence penalties and can be interpreted as mass creation or destruction.
These models are fractional in nature, while our problem is an integral matching problem with arbitrary point-dependent penalties.

\subparagraph*{Terminology for matching.}
Let $H$ be a bipartite graph with edge costs.
A \emph{matching} $M$ in $H$ is a set of pairwise vertex-disjoint edges, and its cost is defined as the sum of its edge costs.
A matching $M$ in $H$ is \emph{perfect} if every vertex is incident to exactly one edge of $M$.
A vertex of the graph is \emph{free} (with respect to $M$) if no edge of $M$ is incident to it. 
An \emph{alternating path} (with respect to $M$) is a path in the graph whose edges alternately belong to $M$ and not to $M$. 
An \emph{augmenting path} (with respect to $M$) is an alternating path whose two endpoints are both free. 
Augmenting $M$ along an augmenting path $P$ by taking the symmetric difference $M \oplus P$ increases its cardinality by one.
A matching $M$ has maximum cardinality if no augmenting path exists.
Classically, maximum-cardinality (or minimum-cost) matchings are computed by repeatedly augmenting $M$ along paths with certain properties until none remain.

\section{Main Obstacles and Our Approach}
A natural first attempt to handle penalties is to incorporate them directly into the augmenting-path framework of Raghvendra and Agarwal~\cite{DBLP:journals/jacm/RaghvendraA20} for minimum-cost perfect bipartite matching.
In this section, we explain why this direct approach fails and how we overcome the difficulty.

\subparagraph*{The framework of Raghvendra and Agarwal.}
The framework of Raghvendra and Agarwal~\cite{DBLP:journals/jacm/RaghvendraA20}
builds on the following classical algorithm of Gabow and Tarjan~\cite{doi:10.1137/0218069}: 
Starting from $M=\emptyset$, the algorithm repeatedly computes a minimum
\emph{net-cost} augmenting path $P$ and augments $M$ along $P$, until $M$
becomes perfect. For now, let us define the \emph{net cost} of an augmenting path
$P$ as
$\mathrm{w}(M\oplus P)-\mathrm{w}(M)+c\cdot |P\setminus M|$ for a fixed constant $c=\Theta(\varepsilon \mathrm{w}(M^*)/n)$.
Thus it favors augmenting paths that are
short while increasing the matching cost by only a small amount. At the end, the
resulting matching is a $(1+\varepsilon)$-approximate perfect matching. Moreover, 
the total length of all augmenting paths is
$O((n/\varepsilon)\log n)$.

However, computing all $n$ augmenting paths in near-linear time is still
nontrivial, even though the number of edges in these paths is small. To achieve a near-linear running time, Raghvendra and
Agarwal~\cite{DBLP:journals/jacm/RaghvendraA20} use a quadtree-based
perturbation of the distances between vertices in $R\cup B$. Also, for technical reasons, Raghvendra and Agarwal refine the third term $c\cdot |P\setminus M|$ 
in the definition of the net cost.
This perturbation
allows them to compute a minimum net-cost augmenting path in time near-linear
in the number of edges of the path.

\subparagraph*{Why the direct approach is insufficient.}
In the penalty setting, an optimal matching may leave some vertices unmatched.
Let us start with the classical algorithm by Gabow and Tarjan.
As before, the net cost of an alternating path $P$
is defined 
as $\mathrm{w}(M\oplus P)-\mathrm{w}(M) + c\cdot |P\setminus M|$. Here, $\mathrm{w}(\cdot)$ is
the total length of all matched edges plus the total penalty of all unmatched vertices. 
The algorithm repeatedly augments the current matching along a minimum 
net-cost augmenting path. This produces a sequence of partial matchings, one for
each iteration, and one can show that the minimum-cost partial matching among
those computed is a $(1+\varepsilon)$-approximate matching with penalties.
Moreover, by slightly modifying the data structure of 
Raghvendra and Agarwal, one can compute a minimum net-cost augmenting path in time near-linear in the number of edges in the path (after the weight perturbation).

However, it is unclear whether the total length of all augmenting paths computed
in this way is near-linear. More precisely, the analysis of Gabow and
Tarjan~\cite{doi:10.1137/0218069} does not carry over directly. Let $M^*$ be an
optimal matching with penalties. Let $M$ be the current matching. 
Then $M^* \oplus M$
decomposes into alternating cycles and alternating paths; not all these paths 
are augmenting paths with respect to $M$.
Moreover, 
the number of augmenting paths having both endpoints unmatched in $M$ can be smaller than $n-i$
even if $|M|=i$. 
This is precisely where the penalty setting differs from the non-penalty variant, and 
these properties are crucial in the analysis  in~\cite{doi:10.1137/0218069}.
In the non-penalty variant, 
the net cost of a minimum net-cost augmenting path is at most the average net cost of these $n-i$ augmenting paths,
namely
$\frac{\mathrm{w}(M^*)-\mathrm{w}(M)+cn}{n-i}$.
By summing this bound over all iterations, together with
the lower bound in which the perturbation term charges the length of the chosen
paths, we can show that the total length of all augmenting paths is 
$O((n/\varepsilon)\log n)$.
However, this analysis does not carry over directly to the penalty variant due to the differences we mentioned earlier.

\subparagraph*{Our approach.}
Our algorithm keeps the spirit of the Gabow–Tarjan framework,
but allows some \emph{alternating} paths to change the current
matching. In the non-penalty setting, every useful augmentation connects two
free vertices and increases the cardinality of the matching. 
In the penalty
setting, however, this is too restrictive: an optimal solution may prefer to
replace the currently unmatched vertex rather than increase the cardinality.
Thus, besides ordinary augmenting paths whose two endpoints are free, we also
need to consider alternating paths that start at a free vertex and end at a
matched vertex. Augmenting along such a path preserves the cardinality, but it
changes which vertex remains unmatched and hence changes the penalty paid by the
solution.
To make these two operations comparable, we define the net cost so that it
accounts not only for the change in matched-edge length, but also for the change
in unmatched-vertex penalties. The definition is chosen so that ordinary
augmenting paths are favored when they give comparable cost, while exchange-type
alternating paths are selected when the saving in penalties justifies them. With
this choice, the same charging intuition as in the Gabow–Tarjan analysis can be
recovered: the perturbation term controls the total length of the paths, and the
net-cost optimality gives the approximation guarantee.

To formulate this idea, we use 
the \emph{prism
graph} where the vertex set is divided into two \emph{layers}, and show that 
it provides the right way to view this idea. 
Both operations become ordinary augmentations with augmenting paths in the prism graph. A path contained in one layer corresponds to an augmenting path in the original instance, whereas a path that
crosses the two layers corresponds to an alternating path that starts at a free
vertex. 
Thus, the prism graph turns the above
informal rule into a standard framework.
This interpretation is conceptually simple, but making it algorithmic requires
several technical ingredients; developing these ingredients is the main technical
contribution of this paper.

\section{Main Algorithm}
In this section, we first introduce the prism graph, and
provide a reduction from the original problem into 
the minimum cost perfect matching problem on the prism graph.
Then we show that the framework of~\cite{DBLP:journals/jacm/RaghvendraA20}, which gives a near-linear time algorithm for a minimum cost perfect matching problem in \emph{Euclidean spaces}, can be applied in the prism graph.  
Later, we will see that the running time of the entire algorithm
depends on the 
height of the split-tree of $X$.
However, if the spread of $X$, that is, the ratio of the longest distance and the smallest distance, is unbounded,
the height of the split-tree is unbounded.
Thus as preparation, we reduce the problem into an instance with bounded spread.

\subparagraph*{Decomposition into subproblems with bounded spread.}
We decompose the problem into independent subproblems defined on pseudometric spaces with bounded spread and bounded doubling dimension.
We first scale the metric and the penalty function to bound the optimal cost, and map the points to representative centers to enforce a minimum non-zero distance of at least one.
We then apply a randomized low-diameter decomposition to partition the space into disjoint clusters of bounded \emph{diameter}\footnote{The \emph{diameter} of a set is the maximum distance between any pair of points.}, preserving an optimal matching with probability at least $1/2$.
$(1+\varepsilon)$-approximate solutions of independent subproblems yield a $(1+O(\varepsilon))$-approximate solution of the original instance.
We then focus on a single cluster and redefine the problem instance under the pseudometric and the scaled penalty function.
Distinct points may have a distance of zero under this pseudometric, and we maintain these co-located points as distinct entities to preserve their individual penalties.
Consequently, we assume the input instance has a minimum non-zero distance of at least one and a bounded spread of $O(n^2/\varepsilon)$.
This bounded spread restricts the depth of the split-tree of the resulting point set and enables a near-linear running time.
Details are provided in Section~\ref{sec:decompose_bounded_spread}.

\subsection{Prism Graph}\label{sec:prism}
We now describe the prism graph,
and provide a reduction from the original problem into 
the \emph{minimum-cost perfect matching} problem on the prism graph.
The \emph{prism graph} $\tilde{G}=(V_0\cup V_1,\tilde{E})$ is defined as follows. 
We create two layers $V_0$ and $V_1$. 
The upper layer $V_0=R_0\cup B_0$ is an ordinary copy of the input, where $R_0$ and $B_0$ are copies of $R$ and $B$, respectively. 
The lower layer $V_1=R_1\cup B_1$ is a color-reversed copy, where $R_1$ contains copies of points in $B$ and $B_1$ contains copies of points in $R$. 
For each point $u\in R\cup B$, we use $u_0$ and $u_1$ to denote its copies in the upper and lower layers, respectively. 
The edge set $\tilde{E}$ consists of three disjoint types of edges, each with its corresponding cost $\tilde{c}(\cdot,\cdot)$:

\begin{itemize}
    \item For each pair $(r, b) \in R \times B$, we add an edge $(r_0, b_0) \in R_0 \times B_0$ with cost $\tilde{c}(r_0, b_0) =  {d(r, b)}$.
    \item For each pair $(r, b) \in R \times B$, we add an edge $(b_1, r_1) \in R_1 \times B_1$ with cost $\tilde{c}(b_1, r_1) = d(r, b)$.
    \item For each point $u \in R \cup B$, we add an edge connecting its two copies. 
    Specifically, for each $r \in R$, we add $(r_0, r_1) \in R_0 \times B_1$ with cost $\tilde{c}(r_0, r_1) = 2p(r)$. 
    For each $b \in B$, we add $(b_1, b_0) \in R_1 \times B_0$ with cost $\tilde{c}(b_1, b_0) = 2p(b)$.
\end{itemize}

\begin{figure}
    \centering
    \includegraphics[width=0.78\textwidth]{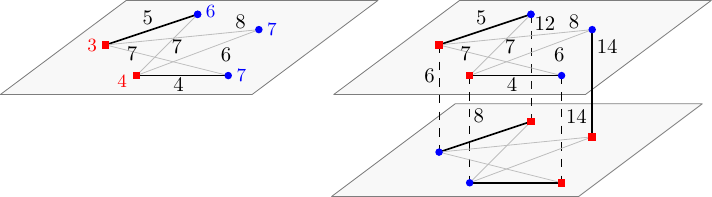}
    \caption{Illustration of the prism graph (right) for the original instance with penalties (left). The left figure shows the min-cost matching with penalties, where the unmatched vertex incurs a penalty. Note that the colored values next to vertices denote penalties. The right figure shows the corresponding min-cost perfect matching in the prism graph. Thick segments indicate edges in the optimal matching in the original instance and the prism graph.}
    \label{fig:prism_graph}
\end{figure}

We call the edges inside a single layer the \emph{geometric edges}, and the edges connecting two copies of the same point the \emph{link edges}.
The following lemma formalizes the role of the two-layer prism graph and the link edges of cost $2p(v)$.
It shows that the minimum-cost perfect matching value in $\tilde{G}$ is exactly twice the optimum value of the penalty matching problem on $R$ and $B$.
See also Figure~\ref{fig:prism_graph}.

\begin{restatable}{lemma}{reduction} \label{lem:reduction-pm-mcpm}
    Let $\textsf{OPT}$ denote the optimum value of the penalty matching instance on $R$ and $B$, and let $\widetilde{\textsf{OPT}}$ denote the optimum value of a minimum-cost perfect matching in $\tilde{G}$.
    Then, $\widetilde{\textsf{OPT}} = 2 \cdot \textsf{OPT}$. 
    Furthermore, an optimal penalty matching of $R$ and $B$ can be recovered from an optimal perfect matching in $\tilde{G}$ in $O(n)$ time.
\end{restatable}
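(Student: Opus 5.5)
We prove the two inequalities $\widetilde{\textsf{OPT}} \le 2\,\textsf{OPT}$ and $2\,\textsf{OPT} \le \widetilde{\textsf{OPT}}$ separately. The recovery procedure will come out of the second direction.

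For $\widetilde{\textsf{OPT}} \le 2\,\textsf{OPT}$, start from an optimal penalty matching $M$ of $R$ and $B$ and build a perfect matching $\tilde M$ in $\tilde G$. For every pair $(r,b)\in M$, put both geometric copies $(r_0,b_0)$ and $(b_1,r_1)$ into $\tilde M$; together they cost $2d(r,b)$. For every unmatched point $u\in U(M)$, put its link edge into $\tilde M$: $(r_0,r_1)$ if $u=r\in R$, and $(b_1,b_0)$ if $u=b\in B$. Each such edge costs $2p(u)$. Every point $u$ is either matched in $M$ or in $U(M)$, so both copies $u_0,u_1$ are covered exactly once and $\tilde M$ is perfect. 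Its cost is $2\mathrm{w}(M)=2\,\textsf{OPT}$.

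For $2\,\textsf{OPT}\le\widetilde{\textsf{OPT}}$, take an optimal perfect matching $\tilde M$ of $\tilde G$. Define $M_0=\{(r,b):(r_0,b_0)\in\tilde M\}$ and $M_1=\{(r,b):(b_1,r_1)\in\tilde M\}$; both are matchings in $R\times B$. Let $L$ be the set of points $u$ whose link edge lies in $\tilde M$.

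The key observation is that perfectness forces the unmatched sets to agree. A point $u$ is left uncovered by $M_0$ exactly when $u_0$ is matched by its link edge, i.e.\ $u\in L$. The same holds for $M_1$. Hence $U(M_0)=U(M_1)=L$, and therefore
\[
\tilde c(\tilde M)=\sum_{e\in M_0}d(e)+\sum_{e\in M_1}d(e)+2\sum_{u\in L}p(u)=\mathrm{w}(M_0)+\mathrm{w}(M_1).
\]
It follows that $\min\{\mathrm{w}(M_0),\mathrm{w}(M_1)\}\le \widetilde{\textsf{OPT}}/2$, so $\textsf{OPT}\le\widetilde{\textsf{OPT}}/2$.

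Combining the two directions gives equality. Then $\mathrm{w}(M_0)+\mathrm{w}(M_1)=2\,\textsf{OPT}$ with each term at least $\textsf{OPT}$, so in fact both $M_0$ and $M_1$ are optimal penalty matchings. For recovery, we scan the edges of $\tilde M$ and output $M_0$ (the upper-layer geometric edges). This takes $O(n)$ time, since $\tilde M$ has $n$ edges.

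The only delicate step is the identity $U(M_0)=U(M_1)=L$. It relies on the link edge being the unique edge of $\tilde G$ that leaves a layer, so a copy $u_i$ not covered by a geometric edge must be covered by its link edge, which simultaneously covers the other copy. Once this is established, the rest is bookkeeping.
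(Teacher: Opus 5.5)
Your proof is correct and follows essentially the same route as the paper. You use the doubling construction for $\widetilde{\textsf{OPT}}\le 2\,\textsf{OPT}$, and for the reverse inequality you split an optimal perfect matching into upper-layer, lower-layer and link edges, whose projections $M_0,M_1$ both leave exactly the link-matched points unmatched; this split also gives the $O(n)$ recovery of $M_0$. Your explicit justification of $U(M_0)=U(M_1)=L$ from perfectness, and the fact that link edges are the only edges between layers, is exactly the step the paper relies on, stated a bit more carefully.
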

\begin{proof}
    We prove the equivalence by establishing bounds in both directions.

    Let $M^* \subseteq R \times B$ be an optimal penalty matching. 
    We construct a perfect matching $\tilde{M}$ in $\tilde{G}$ as follows.
    For every matched pair $(r, b) \in M^*$, we add both $(r_0, b_0)$ and $(b_1, r_1)$ to $\tilde{M}$.
    For every unmatched point $u \in U(M^*)$, we add its corresponding penalty edge $(u_0, u_1)$ to $\tilde{M}$.
    Because every original point $u \in R \cup B$ is either matched in $M^*$ or left unmatched, exactly one of these cases applies.
    Thus, every vertex in $V_0 \cup V_1$ is incident to exactly one edge in $\tilde{M}$.
    Let $\tilde{c}(\tilde{M})$ be the total cost of $\tilde{M}$.
    Then,
    \begin{align*}
       \tilde{c}(\tilde{M}) &= \sum_{(r,b) \in M^*} \big(\tilde{c}(r_0, b_0) + \tilde{c}(b_1, r_1)\big) + \sum_{u \in U(M^*)} \tilde{c}(u_0, u_1) \\
        &= 2 \sum_{(r,b) \in M^*} d(r,b) + \sum_{u \in U(M^*)} 2p(u) \\
        &= 2 \cdot \textsf{OPT}(R, B).
    \end{align*}
    Since $\widetilde{\textsf{OPT}}$ is the minimum over all perfect matchings, we obtain $\widetilde{\textsf{OPT}} \le 2 \cdot \textsf{OPT}(R, B)$.

    Let $\tilde{M}^*$ be an optimal perfect matching in $\widetilde{G}$ with cost $\widetilde{\textsf{OPT}}$.
    Because the only edges connecting the upper and lower parts of $\tilde{G}$ are the penalty edges, the matching decomposes into three disjoint sets: edges in upper part $\tilde{M}^*_0$, edges in lower part $\tilde{M}^*_1$, and a set of penalty edges $\tilde{M}^*_P$.
    Let $U \subseteq R \cup B$ be the set of original points whose corresponding penalty edges are in $\tilde{M}^*_P$.
    Since $\tilde{M}^*$ perfectly matches all vertices in $\tilde{G}$, the remaining vertices in the upper part of $\tilde{G}$, which correspond to $(R \cup B) \setminus U$, must be perfectly matched by $\tilde{M}^*_0$.
    Projecting $\tilde{M}^*_0$ and $\tilde{M}^*_1$ back to the original point sets gives two valid penalty matchings $M_0$ and $M_1$, respectively, both of which leave exactly the points in $U$ unmatched.
    The total cost of $\tilde{M}^*$ can be written as
    \begin{align*}
        \widetilde{\textsf{OPT}} &= \sum_{(r_0, b_0) \in \tilde{M}^*_0} d(r_0, b_0) + \sum_{(b_1, r_1) \in \tilde{M}^*_1} d(b_1, r_1) + \sum_{u \in U} 2p(u) \\
        &= \left( \sum_{(r,b) \in M_0} d(r,b) + \sum_{u \in U} p(u) \right) + \left( \sum_{(r,b) \in M_1} d(r,b) + \sum_{u \in U} p(u) \right).
    \end{align*}
    The two parenthetical terms are exactly the objective values of the penalty matchings $M_0$ and $M_1$.
    Since each parenthetical term is the cost of a valid penalty matching, each is at least $\textsf{OPT}$.
    Therefore, $\widetilde{\textsf{OPT}}\ge 2\textsf{OPT}$.

    Combining both bounds establishes $\widetilde{\textsf{OPT}}= 2 \cdot \textsf{OPT}$. 
    This equality also forces the cost of $M_0$ to be exactly $\textsf{OPT}$. 
    Thus, an optimal penalty matching can be trivially recovered in $O(n)$ time by extracting the edges of the upper part of $\tilde{G}$ from $\tilde{M}^*$.
\end{proof}

\subsection{Approximating Distances by the Split-Tree}
In this subsection, we introduce a randomized split-tree and use it to define a split-tree-induced distance function $\sdist$ that approximates $d$.
The randomized split-tree plays the same role in doubling metrics as the randomly shifted quadtree does in Euclidean space.
We use $\sdist$ only for the geometric edges of the prism graph, while the link edges retain their penalty costs.

\subparagraph*{Randomized split-tree.}
Recall that the input point set $X$ has spread $\Delta=O(n^2/\varepsilon)$, and the minimum nonzero distance in $X$ is at least $1$, and set $\delta := \lceil \log_2 \Delta \rceil$.
A \emph{randomized split-tree} on $X$ is a hierarchical partition
$\mathcal{P}_0,\mathcal{P}_1,\ldots,\mathcal{P}_{\delta+1}$, where $\mathcal{P}_0$ is the singleton partition and $\mathcal{P}_{\delta+1}=\{X\}$.
We call each element of $\mathcal{P}_i$ a level-$i$ cluster.
The hierarchical partition is \emph{nested}: Each cluster in $\mathcal{P}_i$ is the union of clusters in $\mathcal{P}_{i-1}$.
Thus it can be viewed as a rooted tree whose leaves are the points of $X$.
Every level-$i$ cluster in $\mathcal{P}_i$ has diameter 
at most $2^{i+1}$, and has at most $2^{O(\ddim)}$ children.
Moreover, for some constant $C_{\mathrm{split}}>0$ depending only on $\ddim$, every pair $u,v\in X$ and every level $i$ satisfy
\begin{align}\label{eq:split-tree-separation}
    \Pr\bigl[u \text{ and } v \text{ belong to different clusters of } \mathcal{P}_i\bigr]
    \le C_{\mathrm{split}} \cdot \frac{d(u,v)}{2^i}.
\end{align}
It says that two points whose distance is much smaller than the scale $2^i$ are unlikely to be separated at level $i$.
We use this property to define an approximate distance induced by the split-tree, which has small expected distortion.

In general metrics, constructing such a randomized split-tree may require quadratic time, since the input metric can contain $\Theta(n^2)$ pairwise distances.
For bounded doubling metrics, however, Cohen-Addad et al.~\cite{DBLP:journals/jacm/Cohen-AddadFS21} showed that a randomized split-tree can be constructed in near-linear time.
In particular, for constant doubling dimension and bounded spread $\Delta=O(n^2/\varepsilon)$,
the construction time is $O(n\log \Delta)=O(n\log (n/\varepsilon))$.

\subparagraph*{Split-tree-induced distance.}
We now define an approximate distance function $\sdist$ using the randomized split-tree.
This distance function plays the same role as the quadtree-based distance function in~\cite{DBLP:journals/jacm/RaghvendraA20}.
The purpose of this distance is to group many geometric edges into a bounded number of classes with identical costs. In other words, this allows us to represent $\Theta(n^2)$ pairwise distances using a near-linear number of classes. 
This is crucial for keeping the complexity of the data structure in Section~\ref{sec:Data-Structure} within our target bound.
We show that the expected error introduced by this perturbation is sufficiently small.

We first construct the split-tree on the set of distinct locations of $X$.
For each cluster $A$ in the split-tree, we fix an arbitrary point $c(A)\in A$ and call it the \emph{representative point} of $A$.
Let $\omega=\left\lceil\log_2\left(\frac{C\delta}{\varepsilon}\right)\right\rceil$, where $C>0$ is a sufficiently large constant depending only on $\ddim$.
For a cluster $A$ in the split-tree, let $\mathcal D[A]$ be the family of descendants of $A$ that lie $\omega$ levels below $A$.
If fewer than $\omega$ levels remain below $A$, we use the  descendants of $A$ in $\mathcal P_0$, which are singletons.
Note that these descendants partition $A$.
We call the elements of $\mathcal D[A]$ the \emph{refined descendants} of $A$.
Since every split-tree cluster has at most $2^{O(\ddim)}$ children, we have $|\mathcal D[A]|=\poly(\log\Delta,1/\varepsilon)$ for constant $\ddim$.

For two points $u,v\in X$ that have the same location, we define $\sdist(u,v):=0$.
For two points $u,v\in X$ with distinct locations, let $\mathcal{L}(u,v)$ be the least common ancestor cluster of $u$ and $v$ in the split-tree, and let $\ell(u,v)$ be its level.
Let $U,V\in\mathcal D[\mathcal{L}(u,v)]$ be the refined descendants containing $u$ and $v$, respectively.
We define the split-tree-induced distance on $X$ by
$\sdist(u,v)
:=
d(c(U),c(V))
+
2^{\ell(u,v)-\omega+2}$. 
The following lemma shows that $\sdist$ dominates the original distance $d$ and has small expected distortion.
\begin{restatable}{lemma}{splittree}
\label{lem:split-tree-distance-approx}
For any $u,v\in X$, we have $d(u,v)\le \sdist(u,v)$ and
 $\mathbb{E}[\sdist(u,v)]\le (1+\varepsilon/2)d(u,v)$.
\end{restatable}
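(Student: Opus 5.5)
Both claims follow directly from the definition of $\sdist$ together with the separation bound \eqref{eq:split-tree-separation}. If $u,v$ share a location, both sides are zero, so assume their locations are distinct and write $\ell=\ell(u,v)$.

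\textbf{Domination.} Let $U,V\in\mathcal D[\mathcal L(u,v)]$ be the refined descendants containing $u$ and $v$. Each lies $\omega$ levels below level $\ell$, so each is a cluster of level $\ell-\omega$, or a singleton. Its diameter is therefore at most $2^{\ell-\omega+1}$, and this also holds for singletons, whose diameter is $0$. The triangle inequality gives
\[
d(u,v)\le d(u,c(U))+d(c(U),c(V))+d(c(V),v)\le d(c(U),c(V))+2\cdot 2^{\ell-\omega+1},
\]
and the right-hand side equals $\sdist(u,v)$.

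\textbf{Expectation.} The same inequality applied in the other direction gives the pointwise bound $\sdist(u,v)\le d(u,v)+2^{\ell-\omega+2}+2^{\ell-\omega+2}=d(u,v)+2^{\ell-\omega+3}$. It therefore suffices to bound $\mathbb E[2^{\ell}]$. The event $\ell=i$ requires that $u$ and $v$ are separated at level $i-1$. Let $i_0$ be the smallest level with $2^{i_0}\ge d(u,v)$. Two facts control the sum:
\begin{itemize}
\item For $i\le i_0$, the level cannot be below $\lceil\log_2 d(u,v)\rceil-1$, because a level-$j$ cluster has diameter at most $2^{j+1}$ and so cannot contain both points when $2^{j+1}<d(u,v)$. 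These terms contribute $O(d(u,v))$ in total.
\item For $i>i_0$, \eqref{eq:split-tree-separation} gives $\Pr[\ell=i]\le C_{\mathrm{split}}\,d(u,v)/2^{i-1}$, so each such level contributes at most $2C_{\mathrm{split}}\,d(u,v)$ to $\mathbb E[2^\ell]$.
\end{itemize}
There are at most $\delta+1$ levels, so
\[
\mathbb E[2^{\ell}]\le O(1)\,d(u,v)+2C_{\mathrm{split}}(\delta+1)\,d(u,v)=O(\delta)\,d(u,v).
\]
Multiplying by $2^{-\omega+3}\le 8\varepsilon/(C\delta)$, the error term is at most $O(C_{\mathrm{split}}/C)\,\varepsilon\, d(u,v)$. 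This is at most $(\varepsilon/2)\,d(u,v)$ once the constant $C$ is chosen large relative to $C_{\mathrm{split}}$, which only depends on $\ddim$.

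\textbf{Main obstacle.} The delicate step is the summation over levels. The $\log\Delta$ factor that accumulates there is precisely what $\omega=\lceil\log_2(C\delta/\varepsilon)\rceil$ absorbs, and the bookkeeping must be done carefully:
\begin{itemize}
\item the off-by-one between "LCA at level $i$" and "separated at level $i-1$";
\item the low levels, where the probability bound is useless but the diameter constraint takes over;
\item the case where fewer than $\omega$ levels remain below the LCA, where $U,V$ are singletons and the bound holds trivially.
\end{itemize}
One must also note that building the tree on distinct locations changes none of these estimates.
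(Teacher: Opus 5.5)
Your proof is correct and follows the paper's argument: the same triangle-inequality domination, the same pointwise bound $\sdist(u,v)\le d(u,v)+2^{\ell(u,v)-\omega+3}$, and the same level-by-level summation using $\Pr[\ell(u,v)=i]\le C_{\mathrm{split}}\,d(u,v)/2^{i-1}$. The only difference is your split into low and high levels, which is harmless but unnecessary: the separation bound holds at every level and already limits each level's contribution to at most $16C_{\mathrm{split}}2^{-\omega}d(u,v)$, whatever $i$ is.
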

\begin{proof}
If $u$ and $v$ have the same location, then $d(u,v)=0$ and $\sdist(u,v)=0$ by definition.
Thus, both inequalities hold in this case.
Assume from now on that $u$ and $v$ have distinct locations.
Let $U,V\in\mathcal{D}[\mathcal{L}(u,v)]$ be the refined descendants containing the locations of $u$ and $v$, respectively.
Since every level-$i$ cluster has diameter at most $2^{i+1}$ and singleton clusters have diameter $0$, we have
$d(u,c(U))\le 2^{\ell(u,v)-\omega+1}
\quad\text{and}\quad
d(v,c(V))\le 2^{\ell(u,v)-\omega+1}.$
By the triangle inequality,
\begin{align*}
d(u,v)
&\le d(c(U),c(V))+d(u,c(U))+d(v,c(V))
\le d(c(U),c(V))+2^{\ell(u,v)-\omega+2}
= \sdist(u,v).
\end{align*}
This proves the domination property.
We next prove the expected distortion bound.
By the triangle inequality and the same diameter bound,
\begin{align*}
d(c(U),c(V))
&\le d(u,v)+d(u,c(U))+d(v,c(V))
\le d(u,v)+2^{\ell(u,v)-\omega+2}.
\end{align*}
Therefore,
$\sdist(u,v)
=d(c(U),c(V))+2^{\ell(u,v)-\omega+2}
\le d(u,v)+2^{\ell(u,v)-\omega+3}.$

If $\ell(u,v)=i$, then the locations of $u$ and $v$ are separated at level $i-1$.
By the split-tree separation property~(\ref{eq:split-tree-separation}) and using the previous upper bound on $\sdist(u,v)$, we obtain
\begin{align*}
\mathbb{E}[\sdist(u,v)]
&\le d(u,v)+\sum_{i=1}^{\delta+1}\Pr[\ell(u,v)=i]\cdot 2^{i-\omega+3}\\
&\le d(u,v)+\sum_{i=1}^{\delta+1} C_{\mathrm{split}}\cdot \frac{d(u,v)}{2^{i-1}}\cdot 2^{i-\omega+3}\\
&\le d(u,v)+16C_{\mathrm{split}}(\delta+1)2^{-\omega}d(u,v).
\end{align*}
Since $\omega=\lceil\log_2(C\delta/\varepsilon)\rceil$ and $C>0$ is chosen sufficiently large depending only on $\ddim$, we have
$16C_{\mathrm{split}}\cdot\delta \cdot2^{-\omega}\le \varepsilon/2$.
Therefore,
$\mathbb{E}[\sdist(u,v)]\le (1+\varepsilon/2)d(u,v)$.
\end{proof}

In summary, the split-tree-induced distance gives 
the identical cost to geometric edges whose endpoints lie in the same pair of refined descendants.
This uniformity enables a compact cluster-level representation of the graph, while the above lemma guarantees that the resulting approximation error remains small.

\subsection{Locality and Compact Augmenting Paths}
We now classify the geometric edges in the prism graph with respect to a current matching.
This classification is based on the refined descendants used in the split-tree-induced distance and will be used to simplify augmenting paths.

By symmetry, it is enough to define the classification in the upper layer.
Fix a matching $M$ on the prism graph $\tilde{G}$.
Let $M^{\mathrm{up}}$ be the set of matched geometric edges of $M$ in the upper layer.
We partition the edges of $M^{\mathrm{up}}$ into equivalence classes as follows.
For each edge in $M^{\mathrm{up}}$, we write it as $(u,v)$ with $u\in R_0$ and $v\in B_0$.
Recall that we define $\mathcal{L}(u,v)$ as the least common ancestor cluster of the locations of $u$ and $v$ in the split-tree.
If $u$ and $v$ have the same location, we define $\mathcal{L}(u,v)$ as the singleton cluster containing that location.
Let $U$ and $V$ be the refined descendants of $\mathcal{L}(u,v)$ that contain $u$ and $v$, respectively.
Two matched geometric edges $(u,v),(u',v')\in M^{\mathrm{up}}$ belong to the same equivalence class if and only if $\mathcal{L}(u,v)=\mathcal{L}(u',v')$, $U=U'$, and $V=V'$, where $U,V$ and $U',V'$ are the refined descendants associated with $(u,v)$ and $(u',v')$, respectively.
Let $\mathcal{K}_M$ denote the resulting partition of the matched geometric edges in $M^{\mathrm{up}}$.
For an equivalence class $K\in\mathcal{K}_M$, let $R_K$ and $B_K$ be the sets of $R_0$- and $B_0$-endpoints of the edges in $K$, respectively.
A geometric edge $(r,b)\in R_0\times B_0$ is called \emph{local} with respect to $M$ if there exists a class $K\in\mathcal{K}_M$ such that $r\in R_K$ and $b\in B_K$.
Otherwise, $(r,b)$ is called \emph{non-local}.
The same classification is applied symmetrically in the lower layer.
See Figure~\ref{fig:local_nonlocal}(a).

\begin{figure}
    \centering
    \includegraphics[width=0.9\textwidth]{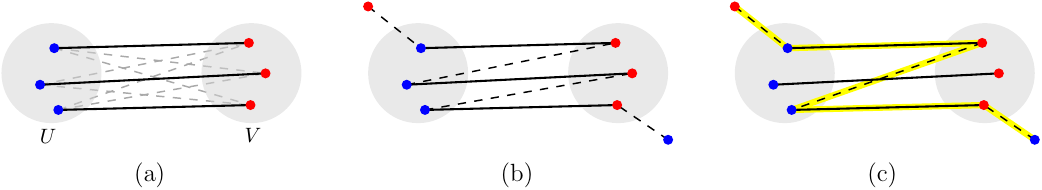}
    \caption{(a) The sets $U$ and $V$ are two refined descendants of the same split-tree cluster. The black edges are the matched geometric edges in the same equivalence class. Hence, every edge between $U$ and $V$ is local.
    (b) An augmenting path containing a long subpath consisting only of local geometric edges. 
    (c) The long local subpath can be replaced by a shorter local subpath with the same net cost. The highlighted augmenting path is compact and has the same net cost as the path in (b).}
    \label{fig:local_nonlocal}
\end{figure}

We define the adjusted cost of an edge $e$ with respect to $M$ by
\begin{align*}
    \Phi_M(e)
    :=
    \begin{cases}
        \sdist(e), & \text{if } e \text{ is a local geometric edge},\\
        \sdist(e)+\theta, & \text{if } e \text{ is a non-local geometric edge},\\
        \tilde{c}(e), & \text{if } e \text{ is a link edge}.
    \end{cases}
\end{align*}
Here, $\theta$ is a parameter such that
$\frac{\varepsilon \tilde{W}^*}{6n}\le \theta \le \frac{\varepsilon \tilde{W}^*}{3n}$, where $\tilde{W}^*$ denotes the optimum perfect matching cost in the prism graph.
Note that link edges do not receive the additional $\theta$ term.
For a link edge $e=(u_0,u_1)$, we have $\Phi_M(e)=\tilde{c}(e)=2p(u)$.

Here, we do not compute $\theta$ exactly.
Instead, we use standard geometric guessing for $\tilde{W}^*$ using the following lemma: for each such value $\theta$, we run the algorithm to compute a matching and return the matching with smallest cost. 
The number of guesses is absorbed into the $\poly(\log n,1/\varepsilon)$ factor in the running time.
\begin{restatable}{lemma}{candidate}\label{lem:candidate}
    We can find $O(\log n)$ values 
    one of which is in
    $[\frac{\varepsilon \tilde{W}^*}{6n}, \frac{\varepsilon \tilde{W}^*}{3n}]$ in near-linear time.
\end{restatable}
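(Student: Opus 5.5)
Our plan is to compute, in near-linear time, a crude estimate $\hat W$ with $\hat W \le \tilde{W}^* \le \poly(n)\cdot \hat W$, and then enumerate powers of two across this range. Concretely, suppose we have $\hat W$ with $\hat W\le \tilde W^*\le n^{c}\hat W$ for a constant $c$. Then the candidates are
\[
\theta_j:=\frac{\varepsilon\, 2^j \hat W}{6n},\qquad j=0,1,\ldots,\lceil c\log_2 n\rceil+1 .
\]
There are $O(\log n)$ of them. Take the largest $j$ with $2^j\hat W\le \tilde W^*$. Then $2^j\hat W\in(\tilde W^*/2,\tilde W^*]$, so $\theta_j\in[\varepsilon\tilde W^*/(12n),\varepsilon\tilde W^*/(6n)]$. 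To land in the interval required by the statement, we instead use $\theta_j:=\varepsilon 2^j\hat W/(3n)$, which lies in $(\varepsilon\tilde W^*/(6n),\varepsilon\tilde W^*/(3n)]$. So the whole task is to obtain $\hat W$.

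For $\hat W$ we work in the original instance and use $\tilde W^*=2\,\textsf{OPT}$ from Lemma~\ref{lem:reduction-pm-mcpm}. For each $u\in X$ let $\nu(u)$ be the distance from $u$ to its nearest point of the opposite color, and set $q(u):=\min\{p(u),\nu(u)\}$.

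The lower bound goes as follows. In any penalty matching, each point $u$ either pays $p(u)$ or is matched by an edge of length at least $\nu(u)$. Each matched edge is counted by at most its two endpoints, so
\[
\textsf{OPT}\ \ge\ \tfrac12\sum_{u}q(u).
\]

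The upper bound uses the many-to-many viewpoint. We need some penalty matching of cost $\poly(n)\sum_u q(u)$. The natural choice is a greedy matching: process pairs in order of length, or simply leave every point with $q(u)=p(u)$ unmatched. The delicate case is when many points have $q(u)=\nu(u)\ll p(u)$ but share the same nearest neighbour. To handle it we use a maximal matching argument, in the spirit of Gabow–Tarjan style crude bounds. We choose $\hat W:=\max_u q(u)$, which is within a factor $n$ of $\sum_u q(u)$. We then argue that $\textsf{OPT}\le \poly(n)\max_u q(u)$ whenever this maximum is $0$-separated appropriately, by matching greedily within the bounded-spread clusters.

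Alternatively, and more robustly, we can simply avoid needing an upper bound tied to $q$. After the reduction of Section~\ref{sec:decompose_bounded_spread}, the instance has minimum nonzero distance $1$ and spread $O(n^2/\varepsilon)$, and the scaling there already bounds $\textsf{OPT}$. So we take $\hat W$ to be the lower bound above (or $1$ if it vanishes, noting $\textsf{OPT}=0$ is detectable directly by colocated matchings). For the upper bound we take the cost of leaving everything unmatched, capped by the cost of any perfect-ish greedy matching. That cap is at most $n\cdot\mathrm{diam}+\sum\min$ penalties, which is within $\poly(n/\varepsilon)$ of $\hat W$. This yields $O(\log(n/\varepsilon))$ guesses, which is still absorbed into the claimed bounds.

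Each $\nu(u)$ can be computed $2$-approximately in near-linear time via approximate nearest-neighbour structures in doubling metrics, for example from the split-tree. A constant-factor error only changes $c$.

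The main obstacle is the upper bound. We must certify that $\textsf{OPT}$ is within a polynomial factor of a quickly computable lower bound, even when penalties are huge and the near-zero nearest-neighbour distances conflict. I would first try the bounded-spread normalization, which guarantees that every nonzero distance is at least $1$ and at most $O(n^2/\varepsilon)$. Under it, either $\textsf{OPT}=0$ (checkable by exact matching of colocated points, a counting problem per location) or $\textsf{OPT}\ge\min(1,\min p)$-type lower bounds hold, which sandwich $\textsf{OPT}$ within $\poly(n/\varepsilon)$.
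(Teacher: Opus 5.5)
There is a genuine gap. The whole argument rests on a quickly computable $\hat W$ with $\hat W\le\tilde W^*\le\poly(n)\,\hat W$, and you never obtain one. The doubling step applied to such a $\hat W$ would be fine. The problem is that your lower bound $\frac12\sum_u\min\{p(u),\nu(u)\}$, although valid, is not polynomially tight. It ignores the imbalance forced when several points compete for the same nearest neighbour, which is exactly the ``delicate case'' you flag but never resolve.

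Concretely, put $r_1,r_2,b$ at one location with $p(r_1)=p(r_2)=1$, and a blue point $b'$ at distance $1$ with $p(b')=\eta\in(0,1]$. Then $\sum_u q(u)=\max_u q(u)=\eta$, but $\textsf{OPT}=1$, so the ratio is unbounded as $\eta\to0$. Your fallback fails too. Take only $r_1,r_2,b$, co-located, with $p(r_1)=p(r_2)=\eta$. Then $q\equiv0$ and $\textsf{OPT}=\eta$, so setting $\hat W=1$ violates $\hat W\le\tilde W^*$.

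The bounded-spread normalization does not rescue this. It controls distances, and in the good event the total scaled optimum, but it says nothing about the range of penalties inside a cluster. A lower bound of the form $\min\{1,\min_{p(u)>0}p(u)\}$ against an upper bound of $O(n^2/\varepsilon)$ leaves a ratio that depends on the penalty values. That ratio can be arbitrarily large, so the number of guesses is not $O(\log n)$, nor even $\poly\log(n/\varepsilon)$.

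The missing idea is an estimator distorted by the same bounded factor on every matching simultaneously, so that penalties never have to be compared with distances. The paper gets this in four steps.
\begin{enumerate}
\item Build a $2$-spanner of $(X,d)$ in $O(n\log n)$ time.
\item Turn it into an HST whose metric satisfies $d\le d_H\le 2(n-1)d$.
\item Solve the penalty-matching problem exactly on this tree metric, using the slope-merging dynamic program of the appendix, in $O(n\log^2 n)$ time.
\item Since penalties are unchanged, $\mathrm{w}(M)\le\mathrm{w}_H(M)\le 2(n-1)\mathrm{w}(M)$ for every matching $M$. Hence $\textsf{OPT}\le W^*_H\le 2(n-1)\textsf{OPT}$, and $T=W^*_H/(n-1)$ satisfies $T\le\tilde W^*\le 2(n-1)T$.
\end{enumerate}
Your doubling scheme applied to this $T$ then yields the $O(\log n)$ candidates. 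Local nearest-neighbour quantities cannot substitute for this global exact computation.
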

\begin{proof}
We first construct a 2-spanner $G_2$ of $(X,d)$, build an HST $H$ from $G_2$, and compute the exact penalty-matching optimum $W_H^*$ under the tree metric $d_H$.
For every matching $M$, the HST guarantee gives
$w(M)\le w_H(M)\le 2(n-1)w(M)$.
Hence, 
$\textsf{OPT}\le W_H^*\le 2(n-1)\textsf{OPT}$,
where $\textsf{OPT}$ denotes the optimum value of the penalty-matching instance under $d$.
Since the optimum value of the prism instance is $\tilde{W}^*=2\textsf{OPT}$, setting
$T:=\frac{W_H^*}{n-1}$
gives
$T\le \tilde{W}^*\le 2(n-1)T.$
Let 
$G_i=2^iT$  for 
$i=0,1,\ldots,\left\lceil\log_2(2(n-1))\right\rceil+1$.
Then we return the set of all values  
$\theta_i=\frac{\varepsilon G_i}{6n}$.
For at least one index $i$, we have
$\tilde{W}^*\le G_i<2\tilde{W}^*.$
Therefore,
$\frac{\varepsilon \tilde{W}^*}{6n}\le \theta_i<\frac{\varepsilon \tilde{W}^*}{3n}$.
\end{proof}

For a set of edges $\Gamma$, let $\Phi_M(\Gamma):=\sum_{e\in \Gamma}\Phi_M(e)$.
For an alternating path or cycle $\Pi$, we define its \emph{net cost} with respect to $M$ as
$\phi_M(\Pi):=\Phi_M(\Pi\setminus M)-\Phi_M(\Pi\cap M)$.

By construction, all local edges in $R_K\times B_K$ have the same adjusted cost.
Indeed, they have the same least common ancestor cluster and the same pair of refined endpoint clusters.
This simple uniformity property is the reason why a long subpath consisting only of local edges can later be replaced by a shorter subpath without changing its net cost.

We call an alternating path $\Pi$ \emph{compact} if the number of local edges in $\Pi$ is at most three times the number of non-local edges in $\Pi$.
Equivalently, if $\Pi_1$ and $\Pi_{2}$ denote the sets of local and non-local edges in $\Pi$, respectively, then
$|\Pi_1|\le 3|\Pi_{2}|$. 
The following lemma shows that it is sufficient to search for compact augmenting paths.
The proof is almost identical to the argument of~\cite{DBLP:journals/jacm/RaghvendraA20}, because compactness is imposed only on geometric edges.
See also Figure~\ref{fig:local_nonlocal}(b--c).

\begin{lemma}[\cite{DBLP:journals/jacm/RaghvendraA20}]
\label{lem:compact-augmenting-path}
Among all augmenting paths of minimum net cost with respect to $M$ in the prism graph $\tilde{G}$, there exists one whose geometric edges are compact.
\end{lemma}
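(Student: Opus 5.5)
Among all minimum net-cost augmenting paths, we pick one, $\Pi$, that uses the fewest edges, and show that this choice forces compactness. The plan is a shortcutting argument: any long run of local edges can be replaced by a shorter run with the same net cost. Since the two layers are symmetric and link edges are never local, it suffices to look at maximal runs of consecutive local geometric edges within a single layer.

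\textbf{Structure of local runs.} Every matched local edge belongs to $M^{\mathrm{up}}$ (or its lower-layer analogue), so it lies in some class $K\in\mathcal{K}_M$. An unmatched local edge $(r,b)$ has $r\in R_K$ and $b\in B_K$ for some class $K$. Each matched vertex lies in exactly one class, namely the class of its matching edge. Hence consecutive local edges of $\Pi$ share vertices, and all of them lie in $R_K\times B_K$ for the same $K$. So a maximal local run $\sigma$ is an alternating subpath living entirely inside $R_K\cup B_K$. Because all edges of $R_K\times B_K$ have the same $\sdist$ value and carry no $\theta$ term, $\phi_M(\sigma)$ depends only on how many unmatched edges minus how many matched edges $\sigma$ contains. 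For an alternating path this difference lies in $\{-1,0,1\}$ and is determined by the types (matched or unmatched) of the first and last edges of $\sigma$.

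\textbf{Shortcutting.} Let $x$ and $y$ be the endpoints of $\sigma$. If $\sigma$ has more than three edges, we replace it by a path of at most three edges inside $R_K\cup B_K$ that starts and ends with the same edge types as $\sigma$. We distinguish cases by the colors of $x$ and $y$.
\begin{itemize}
\item If $x$ and $y$ have different colors and the first and last edges of $\sigma$ are unmatched, take the single edge $(x,y)$. This edge lies in $R_K\times B_K$, so it is local.
\item If $x$ and $y$ have different colors and the first and last edges are matched, use $x\,M(x)\,M(y)\,y$. This is matched, unmatched local, matched.
\item If $x$ and $y$ have the same color, use the corresponding two-edge path through $M(x)$ or $M(y)$.
\end{itemize}
In each case the new subpath has the same net cost as $\sigma$, and it alternates correctly with the non-local edges adjacent to $\sigma$ in $\Pi$. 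A subpath with $0$ net contribution could even be removed entirely if its endpoints coincide.

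\textbf{Main obstacle and counting.} The hardest step is making sure the shortcut still yields a \emph{simple} augmenting path. The new middle vertices $M(x)$ and $M(y)$ may already appear elsewhere in $\Pi$. I would handle this by arguing that if such a repeat occurs, $\Pi$ splits into a shorter augmenting path plus an alternating cycle. That cycle has nonnegative net cost: it consists of the matched edges that remain in $M$ together with their local partners, so it has net cost $0$ within a class or at least $0$ by optimality. This contradicts the minimality of the number of edges. The same minimal-length argument also shows that no two local runs of $\Pi$ can use the same class in a way that could be merged.

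After shortcutting, every maximal local run has at most three edges. Each such run is bounded by a non-local edge or a link edge, or by an endpoint of the path. Moreover, the geometric part of $\Pi$ has at most two endpoint runs per layer segment, and the free endpoints are never matched. So the number of runs is at most the number of non-local edges, plus possibly a constant absorbed by noting that a path consisting only of local edges cannot be augmenting: its free endpoints are unmatched and therefore belong to no $R_K$ or $B_K$. This gives $|\Pi_1|\le 3|\Pi_2|$, following the accounting of \cite{DBLP:journals/jacm/RaghvendraA20}.
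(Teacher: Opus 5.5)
Your shortcutting step is the argument the paper defers to (Raghvendra--Agarwal). A maximal run $\sigma$ of local edges lies in one class $K$, and all edges of $R_K\times B_K$ have the same adjusted cost, so in a fewest-edge minimum path every run has at most three edges. This step is simpler than you make it. A maximal run must begin and end with a matched edge: a vertex incident to an unmatched local edge is geometrically matched, its matched edge is local and would extend the run, and the free endpoints of $\Pi$ touch no local edge. So the only case is the replacement $x,M(x),M(y),y$, whose vertices all already lie on $\sigma$. Simplicity is therefore automatic, and your appeal to nonnegative cycles (which would need the ACI) is unnecessary.

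The genuine gap is the counting step. You assert that the number of runs is at most $|\Pi_2|$. In the prism graph, however, a run can be bounded on both sides by link edges, which are not in $\Pi_2$. The number of ``layer segments'' is governed by the link edges, not by $|\Pi_2|$.

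This cannot be fixed by bookkeeping, because for an arbitrary $M$ the statement is false. Let $R=\{r\}$ and $B=\{b\}$ be at distinct locations with $p(r)+p(b)<\sdist(r,b)$, and let $M=\{(b_1,r_1)\}$. The only augmenting paths are $\langle b_0,r_0\rangle$, of net cost $\sdist(r,b)+\theta$, and $\langle b_0,b_1,r_1,r_0\rangle$, of net cost $2p(b)+2p(r)-\sdist(r,b)$. The second is the unique minimum, and it has one local and no non-local geometric edge. The paper only cites Raghvendra--Agarwal, whose setting has no link edges, so this is exactly where extra work is needed.

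Without further hypotheses, your argument yields only $|\Pi_1|\le 3(|\Pi_2|+k)$, where $k$ is the number of link edges, since each run is immediately preceded by its own non-local or link edge. For symmetric paths ($k\le 1$) this still gives the $|\Pi_i|=O(\nu_i+1)$ used in Lemma~\ref{lem:length-restriction}.

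To obtain the stated bound, you must use the invariants. For symmetric $M$ satisfying the ACI, a minimum path either has no link edge (and so lies in one layer) or, by Lemma~\ref{lem:lsi_link}, can be taken as $\pi\circ e\circ\hat\pi$. Shortcut each run of $\pi$ and its mirror in $\hat\pi$ identically; they lie in mirrored classes, so the path stays simple, symmetric and of the same net cost. Now every run of $\pi$ is immediately preceded by a distinct non-local edge of $\pi$, because $\pi$ starts at a free vertex and contains no link edge. Doubling then gives $|\Pi_1|\le 3|\Pi_2|$.
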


Geometric edges inside the same equivalence class have the same split-tree-induced distance, and hence a long alternating subpath that uses only local geometric edges can be replaced by a shorter subpath without changing its net cost.
Therefore, the essential complexity of an augmenting path is captured by its non-local geometric edges.
To bound the complexity of augmenting paths in terms of their non-local edges, we add the term $\theta$ only to non-local geometric edges in the adjusted cost.
The compactness lemma shows that among minimum-net-cost augmenting paths, it is sufficient to consider paths whose number of local geometric edges is proportional to the number of non-local geometric edges.
This property will later allow us to bound the total length of all augmenting paths.

\subsection{Algorithm}
We now describe a near-linear time approximation algorithm
for computing a minimum cost perfect matching on the prism graph.
Starting from the empty matching, the algorithm repeatedly finds a minimum-net-cost augmenting path with respect to the current matching and augments along it.
Among all minimum-net-cost augmenting paths, we use a minimum-net-cost augmenting path in a specific form, which we call a \emph{symmetric augmenting path}.

Before defining symmetric augmenting paths, we first define the \emph{mirrored path} of a geometric path.
If a geometric path $\pi$ in one layer is mapped to a geometric path in the other layer by replacing each vertex with the copy of the same original point in the other layer, then the resulting path is called the mirrored path of $\pi$ and is denoted by $\widehat{\pi}$.
Next, an augmenting path $\Pi$ is called 
a symmetric augmenting path if it satisfies one of the following two conditions:
(i) $\Pi$ consists only of geometric edges in one layer, or (ii)
$\Pi$ contains exactly one link edge $e$ and can be written as $\Pi=\pi \circ e \circ \hat{\pi}$, where $\pi$ is a geometric subpath in one layer and $\widehat{\pi}$ is its mirrored path in the other layer.
We use symmetric augmenting paths to keep the matching states in the two layers identical throughout the algorithm.

For a symmetric augmenting path belonging to type (i),
we prove below that, after augmenting along $\Pi$, the mirrored path $\hat{\Pi}$ remains a minimum-net-cost augmenting path.
This is not straightforward; $M$ is augmented along $\Pi$, and
the adjusted cost of an edge is defined with respect to the locality of the edges.
Since we show that $\hat{\Pi}$ has minimum net-cost, 
the algorithm is allowed to augment the current matching along $\hat{\Pi}$, so the two augmentations together restore the symmetry between the two layers.
For a symmetric augmenting path belonging to type (ii),
the path itself has a symmetric form around the link edge, and therefore augmenting along it directly preserves the symmetry between the two layers.
This is why we call augmenting paths satisfying either of the above conditions symmetric augmenting paths.

\subparagraph*{Invariants.}
We state the main invariants maintained by our algorithm.
First, the algorithm maintains the \emph{Alternating Cycle Invariant} (ACI), which ensures that no alternating cycle has negative net cost, which is crucial in analyzing the approximation guarantee and the time complexity.
In addition, the prism graph consists of two geometric layers connected by link edges, and we maintain the two layers symmetrically, which is called \emph{Layer Symmetry Invariant} (LSI).
By using symmetric augmenting paths, whenever an augmenting path changes the matching in one layer, the corresponding symmetric change is made in the other layer.
\begin{description}
\item[(ACI)] No alternating cycle with negative net cost with respect to $M$ exists.
\item[(LSI)] The matchings in the upper and lower layers are maintained symmetrically.
\end{description}
The invariant ACI is the key correctness condition used in the approximation analysis.
The proof of this invariant is given in Section~\ref{subsec:correctness_and_bound}.
The invariant LSI is maintained by using symmetric augmenting paths.
In Section~\ref{subsec:correctness_and_bound}, we prove that among all minimum-net-cost augmenting paths, there always exists a symmetric augmenting path.
We also prove that, if the selected augmenting path lies entirely in one layer, then augmenting along its mirrored path in the other layer is safe and preserves the minimum-net-cost property.
\begin{restatable}[Alternating Cycle Invariant]{lemma}{aci}\label{lem:aci}
For every phase $i \ge 0$, no alternating cycle has a negative net cost under $M_i$.
\end{restatable}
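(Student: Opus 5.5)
We will prove Lemma~\ref{lem:aci} by induction on the phase $i$. The base case $M_0=\emptyset$ is immediate: an alternating cycle must alternate between matched and unmatched edges, so it contains at least one matched edge, and there is none. For the inductive step, let $M_{i+1}$ be obtained from $M_i$ by augmenting along a minimum-net-cost augmenting path $\Pi$; if $\Pi$ is of type (i), it is followed by its mirror $\widehat{\Pi}$, and we treat this as two consecutive single augmentations, each along a minimum-net-cost path. The mirror's minimality is exactly what is proved in Section~\ref{subsec:correctness_and_bound}, and we assume it here. So it suffices to show: if $M$ has no negative alternating cycle and $P$ is a minimum-net-cost augmenting path, then $M'=M\oplus P$ has none.

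The natural tool is a Hungarian-style potential (dual) argument. Suppose instead that some cycle $C$ is alternating with respect to $M'$ and $\phi_{M'}(C)<0$. Consider $P\oplus C$ (symmetric difference of edge sets). With respect to $M$, it decomposes into one augmenting path $P'$ plus a collection of alternating cycles $C_1,\dots,C_k$. If adjusted costs did not depend on $M$, net costs would be additive:
$$\phi_M(P')+\sum_j\phi_M(C_j)=\phi_M(P)+\phi_{M'}(C)<\phi_M(P).$$
By the induction hypothesis each $\phi_M(C_j)\ge 0$, so $\phi_M(P')<\phi_M(P)$, contradicting the minimality of $P$. An equivalent way to organize this is via vertex potentials $y$ with $y(u)+y(v)\le \Phi_M(u,v)$ on non-matching edges and equality on matching edges (with free vertices having extremal potentials). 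Minimality of $P$ lets us update $y$ via shortest-path distances so that feasibility is preserved for $M'$, and feasible potentials certify that every cycle has nonnegative net cost.

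The main obstacle is that $\Phi_M$ depends on $M$ through locality. Augmenting along $P$ changes the equivalence classes $\mathcal{K}$, so edges can switch between local and non-local, which changes their adjusted cost by $\theta$. I would handle this by showing the change is monotone in the helpful direction.

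First, every edge of $P$ that becomes matched keeps its cost, because matched edges are counted through their own class. An edge can become local only if a new matched edge joins its class, and this happens only for edges on $P$.

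Second, an edge that loses locality (its class lost matched edges) only increases in cost by $\theta$. Non-matching edges appear with a $+$ sign in net costs, so such an increase only raises $\phi_{M'}(C)$ and cannot create a negative cycle. For edges that gain locality, I would compare their cost with the $\sdist$ cost of the matched edge in the same class, which is identical; this lets us substitute them in the exchange argument (the same replacement idea behind Lemma~\ref{lem:compact-augmenting-path}), so their decreased cost is dominated by a path through matched edges of equal cost.

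Concretely, I will write $\Phi_{M'}(e)\ge\Phi_M(e)$ for every non-matching edge of $C$ not in $P$, and treat the remaining edges by the equal-cost substitution. This gives $\phi_M(P\oplus C\text{ decomposition})\le\phi_M(P)+\phi_{M'}(C)$, and the contradiction follows as above. Link edges have fixed cost $2p(u)$ and need no care. The two layers are handled identically by symmetry (LSI).
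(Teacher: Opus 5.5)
Your fixed-cost skeleton is sound. With $M$-independent costs, $P\oplus C$ splits into one $M$-augmenting path $P'$ plus $M$-alternating cycles, and $\phi_M(P')+\sum_j\phi_M(C_j)=\phi_M(P)+\phi_{M'}(C)$ holds. The gap is in the step that handles the locality-dependent costs, and both concrete claims you make there are false.

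First, a newly matched edge $e\in P\setminus M$ does not keep its cost. If it was non-local under $M$, then $\Phi_M(e)=\sdist(e)+\theta$ but $\Phi_{M'}(e)=\sdist(e)$. This error is harmless in direction: for $e\in P\cap C\cap M'$ it contributes $\Phi_M(e)-\Phi_{M'}(e)=\theta\ge 0$ to $\phi_M(P)+\phi_{M'}(C)-\phi_M(P\oplus C)$. But that $\theta$ is exactly the slack you will need, and your claim discards it.

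Second, and more seriously, $\Phi_{M'}(e)\ge\Phi_M(e)$ fails for unmatched edges of $C$ off $P$. Let $r\in P$ receive the new partner $b$, and let $b'\notin P$ keep its partner $r''$. If $(r,b)$ and $(r'',b')$ fall into the same class of $\mathcal K_{M'}$, then $(r,b')$ becomes local and its cost drops by $\theta$. The paper calls such an edge \emph{reducing}. It lies in $C\setminus P$, so it does not cancel in $P\oplus C$, and each one makes your key inequality fail by $\theta$. Hence no contradiction follows.

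Your ``equal-cost substitution'' cannot remove such an edge. Any $M'$-alternating replacement of $\langle b,r,b',r''\rangle$ must still begin with $(b,r)$ and end with $(b',r'')$. The replacement trick only applies when a matched edge has reducing edges on \emph{both} sides. The potential-based variant fails for the same reason, since the dual constraint on a reducing edge tightens by $\theta$.

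The missing idea is to pay for each reducing edge with the $\theta$ of slack from the first point. Take $C$ with the fewest edges among negative cycles under $M'$. Every reducing edge of $C$ is adjacent in $C$ to the $M'$-edge $e$ at its endpoint on $P$, and $e\in P\cap C\cap M'$. Two facts are then needed:
(i) $e$ was non-local under $M$. Otherwise its class under $M$ would already contain the unchanged matched edge at the other endpoint, making the reducing edge local under $M$.
(ii) $e=(r,b)$ is adjacent to at most one reducing edge of $C$. If $\langle b',r,b,r'\rangle$ had reducing edges on both sides, then $(b',r')$ would be local with the same $\sdist$. It would then either close a zero-cost $4$-cycle or shortcut $C$ at equal net cost, contradicting minimality.

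This injection gives $\phi_{M'}(C)\ge (n^0-m)\theta\ge 0$. Here $n^0$ is the number of edges of $P\cap C\cap M'$ that were non-local under $M$, and $m$ is the number of reducing edges of $C$. This is what the paper does in Lemma~\ref{lem:reducing_edge_structure} and Lemma~\ref{lem:ci_bounds}. There, your $P\oplus C$ exchange is made explicit as the shared, forward and backward pieces of $C_i$.
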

\begin{restatable}[Layer Symmetry Invariant]{lemma}{lsi}\label{lem:lsi}
For any phase $i$, if the matching $M_{i-1}$ is symmetric, a symmetric matching is obtained within at most two minimum net cost augmentation phases.
\end{restatable}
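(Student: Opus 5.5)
We argue by cases on the form of the minimum-net-cost augmenting path chosen at phase $i$. By the claim announced before the invariants, among all minimum-net-cost augmenting paths with respect to the symmetric matching $M_{i-1}$ there is a symmetric one; we assume it is chosen, and this existence statement is itself part of the plan below. Symmetry means the map $\sigma$ that swaps each vertex $u_0$ with $u_1$ is an automorphism of $\tilde G$ fixing $M_{i-1}$. The map $\sigma$ sends upper geometric edges to lower ones with the same $\sdist$ (both copies are derived from the same pair $(r,b)$), and it fixes each link edge. Since the equivalence classes $\mathcal K_M$ are defined by the same split-tree data in both layers, $\sigma$ also preserves locality whenever $M$ is symmetric. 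Hence $\Phi_{M}(\sigma(e))=\Phi_M(e)$, and $\sigma$ maps augmenting paths to augmenting paths of equal net cost.

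\emph{Case (ii).} Here $\Pi=\pi\circ e\circ\widehat\pi$ with $e$ a link edge. Since $\sigma(\Pi)=\Pi$, the matching $M_{i-1}\oplus\Pi$ is $\sigma$-invariant, so $M_i$ is symmetric after a single phase.

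\emph{Case (i).} Here $\Pi$ lies, say, in the upper layer, and $M_i=M_{i-1}\oplus\Pi$ is not symmetric. The path $\widehat\Pi=\sigma(\Pi)$ lies in the lower layer and is vertex-disjoint from $\Pi$. Because $\Pi$ only changes upper edges, the lower matching and hence lower locality are unchanged, so $\widehat\Pi$ is still augmenting under $M_i$ with $\phi_{M_i}(\widehat\Pi)=\phi_{M_{i-1}}(\widehat\Pi)=\phi_{M_{i-1}}(\Pi)=:\kappa$. Once we show that $\kappa$ is the minimum net cost under $M_i$, phase $i+1$ may augment along $\widehat\Pi$, and $M_{i+1}=M_{i-1}\oplus\Pi\oplus\widehat\Pi$ is symmetric.

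The main obstacle is this minimality: we must rule out any augmenting path $P$ with $\phi_{M_i}(P)<\kappa$. The first approach is the standard Gabow--Tarjan style argument. Consider $P\oplus\Pi$ with respect to $M_{i-1}$. The union decomposes into two augmenting paths $P_1,P_2$ plus possibly alternating cycles, all with respect to $M_{i-1}$. We would then show
\[
\phi_{M_{i-1}}(P_1)+\phi_{M_{i-1}}(P_2)+\sum_{C}\phi_{M_{i-1}}(C)\le \phi_{M_{i-1}}(\Pi)+\phi_{M_i}(P).
\]
By ACI (Lemma~\ref{lem:aci}) each cycle term is nonnegative, and by minimality each path term is at least $\kappa$. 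This gives $2\kappa\le\kappa+\phi_{M_i}(P)$, so $\phi_{M_i}(P)\ge\kappa$.

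The delicate part of this inequality is that $\Phi$ depends on $M$ through locality. An edge of $\Pi$ that was matched and non-local may become unmatched while changing locality classes, or an edge of $P$ may be local under $M_i$ but non-local under $M_{i-1}$. We would handle this the way Raghvendra and Agarwal do. Only upper-layer equivalence classes touched by $\Pi$ change, and the adjusted costs differ by at most $\theta$ on exactly those edges. Where the accounting fails, we would use Lemma~\ref{lem:compact-augmenting-path} together with the uniform cost within $R_K\times B_K$ to reroute local subpaths without increasing cost.

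If this direct exchange proves too lossy, the fallback is to carry a dual certificate. We would use the Hungarian-type potentials implicit in ACI: since no negative cycle exists, there are potentials $y$ with reduced costs nonnegative on non-matching edges and zero on matching edges. Minimum-net-cost augmenting paths are then shortest paths in the residual graph, and after augmenting along $\Pi$ the lower-layer potentials are untouched. This makes $\widehat\Pi$ tight and shows it is a shortest augmenting path.
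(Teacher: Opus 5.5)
\textbf{Gap 1: existence of a symmetric minimum-net-cost path.} Your plan breaks down at the first step. You assume that, under a symmetric $M_{i-1}$, some minimum-net-cost augmenting path is symmetric, and then split into cases by its form. That existence claim is not an earlier result you can cite. It is half of what this lemma must prove, and nothing in your plan supplies it. The $\sigma$-invariance of $\Phi$ only maps paths to paths of equal cost; it does not turn an asymmetric path into a symmetric one. Suppose the minimum-net-cost path $\Pi$ uses link edges: several of them, or one with $\Pi=\pi_1\circ e\circ\pi_2$ and $\pi_2\neq\widehat{\pi}_1$. Then neither of your cases applies, and you give no way to restore symmetry.

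The missing idea is an exchange around the \emph{first} link edge $e$. Write $\Pi=\pi_1\circ e\circ\pi_2$, where $\pi_1$ is purely upper-layer geometric.
\begin{itemize}
\item Let $\Pi^1=\pi_1\circ e\circ\widehat{\pi}_1$. It is a genuine augmenting path, since $\pi_1$ and $\widehat{\pi}_1$ lie in different layers.
\item Let $\Pi^2=\widehat{\pi}_2\circ e\circ\pi_2$. It is a walk that decomposes into an augmenting path plus alternating cycles, and the cycles are nonnegative by Lemma~\ref{lem:aci}.
\end{itemize}
Symmetry of $M_{i-1}$ gives $\phi_{M_{i-1}}(\Pi^1)+\phi_{M_{i-1}}(\Pi^2)=2\phi_{M_{i-1}}(\Pi)$. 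Each term is at least $\phi_{M_{i-1}}(\Pi)$, so $\phi_{M_{i-1}}(\Pi^1)=\phi_{M_{i-1}}(\Pi)$. Only after this is your Case~(ii) justified.

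\textbf{Gap 2: the key inequality in Case~(i).} Your structure here is the right one: decompose $P\oplus\Pi$ with respect to $M_{i-1}$, then use minimality and ACI. But the displayed inequality is the entire content, and you leave it unproven. The obstruction is the \emph{reducing} edges: edges of $P\setminus\Pi$ that touch $\Pi$, were non-local under $M_{i-1}$, are local under $M_i$, and so lose $\theta$.

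What compensates is an injective charging. Choose $P$ edge-minimal among minimum-net-cost paths.
\begin{itemize}
\item Every reducing edge of $P$ is adjacent, at its shared vertex, to an $M_i$-matched edge of $P$ that lies on $\Pi\cap P$.
\item By Lemma~\ref{lem:reducing_edge_structure}, each such edge is adjacent to at most one reducing edge and was non-local under $M_{i-1}$.
\item Each such edge therefore contributes $+\theta$ to the $\Pi\cap P$ part of the cost difference, giving $r_2\ge r_1$.
\end{itemize}
Uniform-cost rerouting is only one ingredient here: edge-minimality of $P$ rules out an edge sitting between two reducing edges. The counting itself is what your sketch lacks. Compactness controls path length, not how $\Phi$ changes from $M_{i-1}$ to $M_i$.

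Your potential-function fallback fails for the same reason. $\Phi_M$ depends on $M$, so potentials certifying ACI for $M_{i-1}$ can give reducing edges negative reduced cost under $M_i$. You would have to build new valid potentials for $M_i$ with controlled values at free vertices. Tightness of $\widehat{\Pi}$ alone does not make it a shortest path.
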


It remains to find a symmetric augmenting path efficiently.
The following lemma states that the data structure supports both finding a symmetric augmenting path and updating the matching within the desired time bound.
Its proof is given in Section~\ref{sec:Data-Structure}.

\begin{restatable}{lemma}{DataStructureLemma}
\label{lem:data-structure-time-complexity}
Assuming ACI and LSI, there exists a data structure that maintains $M$ and returns a compact symmetric augmenting path $\Pi$ in the prism graph in $|\Pi|\cdot\poly(\log \Delta,1/\varepsilon)$ time.
After augmenting along $\Pi$, the data structure can be updated within the same asymptotic time bound.
\end{restatable}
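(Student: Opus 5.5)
We will adapt the Raghvendra–Agarwal data structure to the prism graph, using a Hungarian-style search driven by dual weights. Throughout we maintain dual weights $y(\cdot)$ on the vertices of $\tilde G$ that are feasible with respect to the adjusted cost $\Phi_M$: $y(u)+y(v)\le \Phi_M(u,v)$ for every edge, with equality on matched edges. Such duals exist precisely because of ACI (Lemma~\ref{lem:aci}). By LSI (Lemma~\ref{lem:lsi}), the duals can be kept mirror-symmetric, i.e.\ $y(u_0)=y(u_1)$ up to the color reversal. The slack $s(u,v)=\Phi_M(u,v)-y(u)-y(v)$ turns the search for a minimum-net-cost augmenting path into a shortest-path computation with nonnegative reduced costs. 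We run a Dijkstra-type search from the free vertices. It stops as soon as it reaches either a free vertex of the opposite side within the same layer, giving a type (i) path, or a link edge leading to a vertex whose mirror has already been labelled, giving a type (ii) path. In the latter case we exploit symmetry: we search only in one layer and treat each link edge $(u_0,u_1)$ as a special terminal of cost $2p(u)$. A path $\pi$ ending at $u_0$ closes to $\pi\circ(u_0,u_1)\circ\widehat{\pi}$, and its net cost is twice that of $\pi$ plus $2p(u)$. So the single-layer search can compare both types of termination directly.

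To make each search step cheap, we group geometric edges by split-tree cluster, following~\cite{DBLP:journals/jacm/RaghvendraA20}. For every cluster $A$ and every pair $(U,V)\in\mathcal D[A]^2$ of refined descendants, all non-local edges between $U$ and $V$ have the same cost $\sdist+\theta$. The local edges form complete bipartite blocks $R_K\times B_K$ of equal cost. For each cluster $A$ and each $U\in\mathcal D[A]$, we keep a priority structure storing $\min y$ over the free (respectively, currently unlabelled) vertices of each color in $U$. A Dijkstra relaxation from a labelled vertex $v$ then does not scan individual neighbours. It walks the $O(\delta)$ ancestors $A$ of $v$ and, for each of the $\poly(\log\Delta,1/\varepsilon)$ refined descendants $V$ of $A$, queries the best candidate in one step. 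Matched edges are followed for free, and local classes $K$ are treated as single supernodes, entered and left once each. Since each vertex is labelled once, a search that labels $k$ vertices costs $k\cdot\poly(\log\Delta,1/\varepsilon)$. The only global changes are dual adjustments, and those we perform lazily with a global offset, as in the Hungarian method.

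The key step is to bound the number of labelled vertices by $O(|\Pi|)$ instead of by the full size of the explored region. We plan to follow the Gabow–Tarjan/Raghvendra–Agarwal scaling argument. We organize the search in rounds: each round raises the duals by the minimum slack and labels only the vertices that become tight. The $\theta$ term on non-local edges means every round that does not yet terminate requires a non-local edge crossing. Using Lemma~\ref{lem:compact-augmenting-path}, we then restrict attention to compact paths, whose number of edges is within a constant factor of their number of non-local edges. Charging each round to a non-local edge of the eventual path $\Pi$, we expect the work to be $|\Pi|\cdot\poly(\log\Delta,1/\varepsilon)$. After augmenting along $\Pi$, only the $O(|\Pi|)$ matched edges on $\Pi$ change, and so only the equivalence classes in $\mathcal K_M$ containing them change. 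Each such class is re-inserted, and each affected vertex's membership is updated in its $O(\delta)$ ancestor structures, at polylogarithmic cost. Symmetric counterparts in the other layer are updated identically, possibly via a second augmentation along $\widehat\Pi$ as in Lemma~\ref{lem:lsi}.

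We expect the main obstacle to be this output-sensitive bound in the presence of link edges. A penalty edge can make a far-away vertex attractive at low dual cost, so a naive search might label many vertices before reaching it. We would handle this by keeping, for each vertex, the value $2p(u)-y(u_0)-y(u_1)$ in a global heap alongside the cluster heaps. Termination via a link edge then becomes just another key in the Dijkstra frontier, and it never forces any exploration beyond the final path.
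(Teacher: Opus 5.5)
Your proposal has a genuine gap at the step you flag as key: bounding the work of the Dijkstra/Hungarian search by $|\Pi|\cdot\poly(\log\Delta,1/\varepsilon)$.

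\textbf{The search is not output-sensitive.} A Dijkstra-type search must label every vertex (or local class) whose tentative distance is below the net cost of the path it finally returns, and nothing forces these vertices onto $\Pi$. For instance, take free $b\in B_0$ and $r\in R_0$ with $\sdist(b,r)=D$. Add $\Theta(n)$ matched pairs $(r_j,b_j)$, each co-located at its own location so that each pair is its own class, with $\sdist(b,r_j)\approx D/2$ but on the far side of $b$ from $r$. Every detour through them then costs about $2D$. This matching satisfies ACI and LSI, and $b\to r$ is a minimum-net-cost augmenting path, so $|\Pi|=1$. Yet every $r_j$ has tentative distance about $D/2+\theta<D+\theta$ and is labelled first, so the search costs $\Omega(n)$.

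Your round-based charging does not fix this. A single round can make arbitrarily many vertices tight at once. The non-local edges crossed in non-terminating rounds generally lie on dead-end branches of the search tree, not on $\Pi$. The Gabow--Tarjan potential argument (Lemma~\ref{lem:length-restriction}) bounds the output $\sum_i|\Pi_i|$, not the size of the searches that produce it. This is why the data structure of \cite{DBLP:journals/jacm/RaghvendraA20}, and the paper's adaptation of it, precompute summaries rather than search.

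\textbf{Two further points are missing.} First, after augmenting, every edge of $\Pi$ that was non-local under $M_{i-1}$ becomes matched and local. Its $\Phi$ drops by $\theta$, so the tightness $y(u)+y(v)=\Phi_M(u,v)$ fails; reducing edges can likewise acquire negative slack. ACI only guarantees that repaired potentials exist, and you give no way to restore them while touching $O(|\Pi|)$ vertices. Second, matched link edges, of weight $-2p(u)$, can lie on symmetric augmenting paths. Your terminal key $2p(u)-y(u_0)-y(u_1)$ covers only unmatched ones.

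\textbf{The missing idea: precomputed hierarchical summaries.} The paper makes the query cost independent of how much a search would explore.
\begin{itemize}
\item It splits each split-tree cluster $A$ into $\poly(\log\Delta,1/\varepsilon)$ subclusters by refined descendant and matching status: free, internal, boundary (indexed by the refined location $\eta\in\mathcal N^*(A)$ of the partner), and link-matched. All prism edges between subclusters in different children of $A$ then share one direction and one weight.
\item It builds a compressed graph $H_A$. Interior edges carry the children's precomputed values $\psi_{A_s}$, and bridge edges carry the uniform weights. Link-crossing states $Y^{+}$ and $Y^{-}$ model crossing an unmatched or matched link edge, with an extra $+\min_{v\in Y}p(v)$ or $-\max_{v\in Y}p(v)$.
\item The values $\psi_A$ are shortest-path values in these small graphs, computed bottom-up by \textsc{Ascend}. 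ACI rules out negative cycles, so no dual variables are needed.
\item A query reads the best candidate at the root, with link-crossing values doubled, in $\poly(\log\Delta,1/\varepsilon)$ time. \textsc{ExtractPath} expands only the compressed edges that realize $\Pi$.
\item After augmenting, only the clusters on the leaf-to-root paths of the $O(|\Pi|)$ vertices of $\Pi$ are rebuilt. This gives the $|\Pi|\cdot\poly(\log\Delta,1/\varepsilon)$ bound.
\end{itemize}
Your one-layer treatment of type~(ii) paths, closing $\pi$ by $\widehat{\pi}$ with a terminal weight at $u_0$, does match the paper's vertex-weight reduction. What you need is this precomputed structure in place of the search.
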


\subsection{Analysis}
We briefly show that our algorithm returns a $(1+\varepsilon)$-approximate matching with penalties in near-linear time.
We defer the detailed proofs of the following lemmas to Section~\ref{sec:correctness_and_time_complexity}.

The invariant ACI is used in the proof of the following lemma.
\begin{restatable}{lemma}{approxbound}\label{lem:approx_bound}
The expected cost of the matching $\tilde{M}_f$ we have at the end of the algorithm under $\tilde{\mathrm{w}}$ satisfies $\mathbb{E}[\tilde{\mathrm{w}}(\tilde{M}_f)] \le (1+\varepsilon) \cdot \tilde{W}^*$, where $\tilde{W}^*$ is the optimal cost in $\tilde{G}$ under $\tilde{\mathrm{w}}$.
\end{restatable}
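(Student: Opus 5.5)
The argument compares the final perfect matching $\tilde M_f$ with an optimal perfect matching $\tilde M^*$ of $\tilde G$ under $\tilde{\mathrm w}$, using the Alternating Cycle Invariant (Lemma~\ref{lem:aci}) applied to the final matching. Since $\tilde M_f$ and $\tilde M^*$ are both perfect, the symmetric difference $\tilde M_f\oplus\tilde M^*$ splits into vertex-disjoint alternating cycles $\mathcal C_1,\dots,\mathcal C_t$ with respect to $\tilde M_f$. By ACI, each cycle satisfies $\phi_{\tilde M_f}(\mathcal C_j)\ge 0$, that is,
$\Phi_{\tilde M_f}(\mathcal C_j\setminus \tilde M_f)\ge \Phi_{\tilde M_f}(\mathcal C_j\cap \tilde M_f)$.
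Summing over all cycles and adding the common edges $\tilde M_f\cap\tilde M^*$ to both sides gives
$\Phi_{\tilde M_f}(\tilde M_f)\le \Phi_{\tilde M_f}(\tilde M^*)$.

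Next we translate adjusted costs back to $\tilde{\mathrm w}$. On the left, Lemma~\ref{lem:split-tree-distance-approx} gives $\sdist\ge d$ on geometric edges, and link edges keep their cost $2p(u)$. The extra term $\theta$ is nonnegative. Hence $\tilde{\mathrm w}(\tilde M_f)\le \Phi_{\tilde M_f}(\tilde M_f)$. On the right, each edge of $\tilde M^*$ has adjusted cost at most $\sdist(e)+\theta$ for geometric edges and exactly $\tilde c(e)$ for link edges. Since $|\tilde M^*|\le |V_0\cup V_1|/2=n$, this gives
$\Phi_{\tilde M_f}(\tilde M^*)\le \sum_{e\in\tilde M^*}\sdist'(e)+n\theta$,
where $\sdist'$ equals $\sdist$ on geometric edges and $\tilde c$ on link edges.

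Finally we take expectations over the randomized split-tree. We fix $\tilde M^*$ as a deterministic optimum for the original costs $\tilde c$; it does not depend on the tree. Then Lemma~\ref{lem:split-tree-distance-approx} gives $\mathbb E[\sum_{e\in\tilde M^*}\sdist'(e)]\le (1+\varepsilon/2)\tilde W^*$, because link edges are unaffected and each geometric edge is inflated by at most a factor $1+\varepsilon/2$ in expectation. For the guess from Lemma~\ref{lem:candidate} that lies in the right range, $n\theta\le \varepsilon\tilde W^*/3$. Combining these,
$\mathbb E[\tilde{\mathrm w}(\tilde M_f)]\le (1+\varepsilon/2+\varepsilon/3)\tilde W^*\le(1+\varepsilon)\tilde W^*$.
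Returning the minimum-cost matching over all guesses only improves this bound.

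The delicate step is the cycle decomposition. ACI bounds net cost with respect to $\tilde M_f$, and locality is defined relative to $\tilde M_f$, so edges of $\tilde M_f$ may be local and carry no $\theta$ while edges of $\tilde M^*$ may carry $\theta$. This asymmetry runs in the favorable direction: we only need a lower bound on the left and an upper bound on the right, and both hold. One must also check that $\tilde M_f$ is genuinely a perfect matching of $\tilde G$, which holds because the algorithm runs until no free vertex remains. Then $\tilde M_f\oplus\tilde M^*$ contains only cycles and no paths, which is exactly what lets ACI apply directly.
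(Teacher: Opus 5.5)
Your proposal is correct and follows essentially the paper's proof: decompose $\tilde M_f\oplus\tilde M^*$ into alternating cycles, use Lemma~\ref{lem:aci} to get $\Phi_{\tilde M_f}(\tilde M_f)\le\Phi_{\tilde M_f}(\tilde M^*)\le \tilde{\mathrm w}_{\sdist}(\tilde M^*)+n\theta$, and finish with Lemma~\ref{lem:split-tree-distance-approx} and $n\theta\le\varepsilon\tilde W^*/3$. The only difference is cosmetic: the paper notes that every edge of $\tilde M_f$ is local, so $\Phi_{\tilde M_f}(\tilde M_f)=\tilde{\mathrm w}_{\sdist}(\tilde M_f)\ge\tilde{\mathrm w}(\tilde M_f)$, whereas you get the same lower bound directly from $\sdist\ge d$ and $\theta\ge 0$. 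You also state explicitly two points the paper leaves implicit: $\tilde M^*$ does not depend on the random split-tree, and the correct guess for $\theta$ can be selected.
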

For the analysis of the time complexity, we first show that 
the expected total length of all augmenting paths is near-linear
using the invariant ACI.
\begin{restatable}{lemma}{length}
\label{lem:length-restriction}
Let $\Pi_1,\Pi_2,\ldots,\Pi_n$ be the augmenting paths produced by the algorithm.
Then $\mathbb{E}[\sum_{i=1}^n |\Pi_i|]=O((n/\varepsilon)\log n)$.
\end{restatable}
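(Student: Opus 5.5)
We follow the Gabow--Tarjan / Raghvendra--Agarwal charging argument, carried out in the prism graph, where every augmentation is an ordinary augmentation of a perfect-matching instance. The prism graph $\tilde G$ has $2n$ vertices, and a perfect matching has $n$ edges. Each augmentation increases $|M|$ by one, so there are $n$ augmentations in total; the mirrored augmentations of Lemma~\ref{lem:lsi} are counted among them. Let $M_{i}$ be the matching before the $(i+1)$-st augmentation, so $|M_i| = i$, and let $\Pi_{i+1}$ be the chosen compact minimum-net-cost augmenting path. The proof has three steps: a lower bound on each net cost in terms of path length, an upper bound on each net cost via the optimum, and a summation over all phases.

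\textbf{Lower bound on each net cost.} We write $\phi_{M_i}(\Pi_{i+1})$ as the increase in $\sdist$-cost plus the link-edge change, plus $\theta$ times the number of non-local non-matching edges, minus $\theta$ times the number of non-local matching edges. By compactness (Lemma~\ref{lem:compact-augmenting-path}), the number of non-local edges is at least a constant fraction of $|\Pi_{i+1}|$, up to the link edge.

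I would then follow RA's potential argument. Define the telescoping quantity $\Phi(M_i)$, the adjusted cost of the matching. Summing net costs over all phases gives $\sum_i \phi_{M_i}(\Pi_{i+1}) \ge \sum_i \bigl(\Phi(M_{i+1}) - \Phi(M_i)\bigr) + \Omega(\theta)\sum_i |\Pi_{i+1}| - (\text{error})$. The error comes from edges that change local/non-local status; RA bound it by charging it to the $\theta$ term, because newly matched edges are local by definition.

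The cleaner route is RA's: every non-matching edge on the path pays $+\theta$ if non-local. Matched edges are local with respect to their own class, so their adjusted cost excludes $\theta$. Hence $\phi_{M_i}(\Pi_{i+1}) \ge \tilde{\mathrm w}$-increase $+\ \theta\cdot\Omega(|\Pi_{i+1}|)$. This is the key lower bound.

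\textbf{Upper bound via the optimum.} Take the optimal perfect matching $\tilde M^*$ of $\tilde G$. Since $|M_i| = i$, the symmetric difference $M_i \oplus \tilde M^*$ contains $n-i$ vertex-disjoint augmenting paths, together with alternating cycles. This is the point where the prism graph pays off: $\tilde M^*$ is perfect, so the Gabow--Tarjan counting applies verbatim, unlike in the original penalty instance.

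By ACI (Lemma~\ref{lem:aci}), every cycle has nonnegative net cost. Therefore the sum of net costs of these $n-i$ paths is at most $\Phi(\tilde M^*) - \Phi(M_i)$, up to the $\theta$ terms. Since $\Phi(\tilde M^*) \le \sdist(\tilde M^*) + 2n\theta$, minimality of $\Pi_{i+1}$ gives
\[
\phi_{M_i}(\Pi_{i+1}) \le \frac{\sdist(\tilde M^*) + 2n\theta - \Phi_{\ge}(M_i)}{n-i}.
\]
Here the current matching's adjusted cost is nonnegative, and we drop it. By Lemma~\ref{lem:split-tree-distance-approx}, $\mathbb E[\sdist(\tilde M^*)] \le (1+\varepsilon/2)\tilde W^*$. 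Since $\theta \ge \varepsilon \tilde W^*/(6n)$, this is $O(n\theta/\varepsilon)$.

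\textbf{Summation and main obstacle.} Combining the two bounds phase by phase gives $\theta|\Pi_{i+1}| \lesssim n\theta/(\varepsilon(n-i))$ plus telescoping cost increments. The sum of the increments telescopes to the final cost, which is $O(n\theta/\varepsilon)$. Summing the harmonic series over $i$ gives $\sum_i |\Pi_{i+1}| = O((n/\varepsilon)\log n)$ in expectation.

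The main obstacle is making the per-phase lower bound hold. The net cost of a path measures change relative to $\Phi_{M_i}$, but locality is redefined after every augmentation. The telescoping must therefore use the true $\sdist$-cost plus link penalties, which do not depend on $M$. One must check that the $\theta$ terms only hurt by $O(\theta)$ per newly matched non-local edge. I would follow RA: matched edges never carry $\theta$ in $\Phi_M$. I would also verify that link edges, which carry no $\theta$, are at most one per symmetric path, so compactness still controls $|\Pi|$.
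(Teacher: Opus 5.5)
Your proposal is correct and follows essentially the same route as the paper: the exact identity $\phi_{M_{i-1}}(\Pi_i)=\tilde{\mathrm w}_{\sdist}(M_i)-\tilde{\mathrm w}_{\sdist}(M_{i-1})+\theta\nu_i$ (every matched geometric edge is local), the averaging bound over the $n-i+1$ augmenting paths in $\tilde M^*\oplus M_{i-1}$, harmonic summation, and compactness plus at most one link edge per symmetric path to pass from $\nu_i$ to $|\Pi_i|$. Two small remarks. First, you do not need ACI for the upper bound: the paper simply bounds each path's net cost by the adjusted cost of its non-matching edges, all of which lie in $\tilde M^*$. Second, the telescoped increments enter with a minus sign and sum to $-\tilde{\mathrm w}_{\sdist}(M_n)\le 0$, so no bound on the final cost is needed.
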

By combining this with the preparation step that decomposes an instance into subinstances with bounded spread and Lemma~\ref{lem:data-structure-time-complexity}, we have the following theorem.
\main*

\section{Analysis of the Algorithm}\label{sec:correctness_and_time_complexity}
In this section, we analyze the correctness and time complexity of our algorithm. 
We establish the \emph{Alternating Cycle Invariant} (ACI) and the \emph{Layer Symmetry Invariant} (LSI) to prove the $(1+\varepsilon)$-approximation guarantee.
We then derive a near-linear running time by bounding the total length of the augmenting paths. 
These establish the theoretical guarantees for our approach.

\subsection{Correctness and Approximation Bound}\label{subsec:correctness_and_bound}
We prove the correctness of our algorithm and analyze its approximation bound.
We first prove the ACI and the LSI.
We then show that the resulting matching is a $(1+\varepsilon)$-approximation of the optimal matching.

We define the terminology for edges affected by an augmentation.
Recall that in phase $i$, the matching $M_i$ is computed by augmenting $M_{i-1}$ along a compact minimum net-cost augmenting path $\Pi_i$.
An edge in the prism graph $\tilde{G}$ of $G$ is \emph{affected} by $\Pi_i$ if it shares at least one vertex with $\Pi_i$.
An affected edge $e \notin \Pi_i$ is \emph{reducing} if $\Phi_{M_{i-1}}(e) > \Phi_{M_i}(e)$.
Therefore, a reducing edge is a geometric edge that is non-local under $M_{i-1}$ and local under $M_i$.

To prove the two invariants, we first establish a property of reducing edges.
Let $C_i$ be an alternating cycle with a negative net cost under $M_i$, assuming such a cycle exists.
Let $\Pi_{i+1}$ be a minimum net-cost augmenting path under $M_i$.
We assume both $C_i$ and $\Pi_{i+1}$ have the minimum number of edges among all such cycles and paths, respectively.
The following lemma characterizes how the reducing edges in these structures interact with $\Pi_i$.
\begin{lemma}\label{lem:reducing_edge_structure}
Let $e \in \Pi_i \cap M_i$ be an edge in either $C_i$ or $\Pi_{i+1}$.
If $e$ is adjacent to a reducing edge in the respective structure, then $e$ is a geometric edge, is adjacent to at most one reducing edge in the structure, and was non-local under $M_{i-1}$.
\end{lemma}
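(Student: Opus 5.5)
The plan is to exploit three things: what it means for an edge to become local, the fact that all local edges of one equivalence class have the same adjusted cost, and the assumed edge-minimality of $C_i$ and $\Pi_{i+1}$. Fix $e\in\Pi_i\cap M_i$ lying in the structure $S\in\{C_i,\Pi_{i+1}\}$, and let $f\in S$ be a reducing edge adjacent to $e$ at a shared vertex $v$. Since $S$ alternates with respect to $M_i$ and $e\in M_i$, we have $f\notin M_i$. By the remark after the definition of reducing edges, $f$ is a geometric edge that is non-local under $M_{i-1}$ and local under $M_i$. So there is a class $K\in\mathcal K_{M_i}$ whose endpoint sets $R_K,B_K$ contain both endpoints of $f$.

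\emph{Step 1: $e$ is geometric.} The vertex $v$ lies in $R_K\cup B_K$, so $v$ is an endpoint of a matched geometric edge of $M_i$. Its unique $M_i$-edge is $e$, so $e$ is geometric and $e\in K$. In particular, every reducing edge adjacent to $e$ is made local by the class $K$ of $e$ itself. If $e$ were a link edge, then $v$ would be matched by a link edge and could not belong to any $R_K$ or $B_K$, which is a contradiction.

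\emph{Step 2: at most one reducing neighbour.} Suppose both neighbours of $e=(r,b)$ in $S$ are reducing, so $S$ contains the subpath $x \xrightarrow{f_1} r \xrightarrow{e} b \xrightarrow{f_2} y$. By Step 1, both $f_1$ and $f_2$ are local through $K$, so $x\in B_K$ and $y\in R_K$, and hence $(y,x)$ is also a local edge of class $K$. Every local edge of $R_K\times B_K$, and every matched edge of $K$, has the same adjusted cost $\sdist_K$. Now replace the three edges $f_1,e,f_2$ by the single non-matching edge $(y,x)$. Alternation is preserved, because the edges of $S$ at $x$ and $y$ outside the subpath are matched. The net cost changes by $-( \sdist_K+\sdist_K-\sdist_K)+\sdist_K=0$. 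The result is an alternating cycle (respectively an augmenting path) with the same net cost and fewer edges, contradicting the minimality of $S$. The degenerate cases where $x=y$, or where the shortcut closes $C_i$ prematurely, need a brief separate check. In those cases we expect to obtain a shorter negative cycle, or a cycle of nonpositive cost that can be discarded.

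\emph{Step 3: $e$ was non-local under $M_{i-1}$.} This is the step I expect to be the main obstacle. Suppose $e=(r,b)$ was local under $M_{i-1}$ via a class $K'\in\mathcal K_{M_{i-1}}$. Then $r$ and $b$ lie in the refined descendants $U,V$ of the class type $(A,U,V)$ of $K'$. Consequently $\mathcal L(r,b)=A$ with the same refined pair, so $K$ and $K'$ have the same type. Let $w$ be the other endpoint of $f$. Since $f$ is local under $M_i$, $w\in R_K\cup B_K$. Two cases arise.
\begin{itemize}
\item If $w$'s $M_i$-edge already lay in $M_{i-1}$, then it belonged to $K'$. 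Also $v$ was an endpoint of a $K'$-edge, because $e$ was local via $K'$. Hence $f$ was already local under $M_{i-1}$, contradicting that $f$ is reducing.
\item If $w$'s $M_i$-edge $e''$ is new, then $e''$ also lies in $\Pi_i\cap M_i$ and is in class $K$. Here I would again try a shortcut in $S$ through class-$K$ local edges of equal cost, in the style of Step 2, to shorten $S$ without changing its net cost. Failing that, I would argue via the vertices of $K'$ that were matched under $M_{i-1}$ that $f$ had both endpoints in $R_{K'}\cup B_{K'}$.
\end{itemize}
Either outcome contradicts minimality or the assumption that $f$ is reducing, so $e$ was non-local under $M_{i-1}$.
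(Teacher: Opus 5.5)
\textbf{There is a genuine gap in Step 3, second bullet.}

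Your Steps 1 and 2 are the paper's argument. In Step 2, the only real degenerate case is $(y,x)\in M_i$. It forces $S$ to be the $4$-cycle $x,r,b,y$ of net cost $0$, which is impossible both for a negative $C_i$ and for a path. The case $x=y$ cannot occur, since $x$ and $y$ have opposite colors. Your first bullet in Step 3 is exactly the paper's closing argument.

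The gap is the second bullet: the case where the far endpoint $w$ of the reducing edge $f$ lies on $\Pi_i$. For this case you give no proof. Your Step-2 shortcut device does not help here. Since $w$ is matched by $e''$ and $S$ alternates, $S$ must contain the run $e,f,e''$ (matched, unmatched, matched, all of class $K$). Such a run cannot be replaced by fewer edges without breaking alternation, because its two ends are matched precisely through $e$ and $e''$. What is needed is information about $\Pi_i$ relative to $M_{i-1}$, not about $S$.

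The paper handles this case by excluding it. It asserts that, because $\Pi_i$ is a minimum-net-cost augmenting path, it has no reducing chords. Hence $w\notin\Pi_i$, the $M_i$-edge of $w$ already lay in $M_{i-1}$, and only your first bullet remains.

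Be aware that the paper gives no justification for this assertion, and the natural justification is circular. Suppose you shortcut the stretch of $\Pi_i$ between $e$ and $e''$ by a single cross edge of the same type (either $f$, or the edge joining the other endpoints of $e$ and $e''$). Bounding the middle part by ACI at $M_{i-1}$ then shows the shortcut is no more expensive only when both $e$ and $e''$ were non-local under $M_{i-1}$. Each of them that was local allows an increase of $\theta$, and non-locality is exactly the conclusion you want. So whichever route you take, you still owe a genuine argument for the case $w\in\Pi_i$.
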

\begin{proof}
Suppose $e = (r, b)$ is adjacent to a reducing edge in the structure.
If $e$ is a link edge, its endpoints $r$ and $b$ do not belong to any geometric equivalence class under $M_i$.
Any geometric edge incident to $r$ or $b$ remains non-local under $M_i$.
This contradicts the definition of a reducing edge.
Therefore, $e$ is a geometric edge.

Suppose $e$ is adjacent to two reducing edges $(b', r)$ and $(b, r')$ in the structure.
Both reducing edges are local under $M_i$ and share endpoints with $e$.
Thus, $(b', r)$, $e$, and $(b, r')$ belong to the same equivalence class under $M_i$.
This implies $(r', b')$ also belongs to this class and shares the identical split-tree-induced distance.
The edges $(b', r)$, $e$, and $(b, r')$ form a subpath $\langle b', r, b, r' \rangle$ in the alternating structure.
If $(r', b') \in M_i$, the cycle $\langle r', b', r, b, r' \rangle$ has exactly zero net cost.
The alternating structure must traverse $(r', b')$.
If the structure is $C_i$, it cannot self-intersect, making $C_i$ exactly this zero net-cost cycle.
This contradicts the negative net cost of $C_i$.
If the structure is $\Pi_{i+1}$, it cannot contain any cycle.
If $(r', b') \notin M_i$, replacing the subpath $\langle b', r, b, r' \rangle$ with $(r', b')$ yields an alternating structure of the same type with an identical net cost but strictly fewer edges.
This contradicts the edge-minimality of $C_i$ and $\Pi_{i+1}$.
Therefore, $e$ is adjacent to at most one reducing edge.

Assume $(b', r)$ is the unique reducing edge adjacent to $e$ in the structure.
If $r$ or $b$ was unmatched in $M_{i-1}$, $e$ was non-local under $M_{i-1}$ by definition.
Otherwise, let $(r, b'')$ be the matching edge incident to $r$ under $M_{i-1}$.
Because $\Pi_i$ is a minimum net-cost augmenting path, it contains no reducing chords.
Since $(b', r)$ is a reducing edge and $r \in \Pi_i$, $b'$ must lie outside $\Pi_i$.
Thus, the matching edge incident to $b'$ remains identical under $M_{i-1}$ and $M_i$.
Let this edge be $(r'', b')$.
Because $(b', r)$ is local under $M_i$, the edges $e$ and $(r'', b')$ belong to the same equivalence class under $M_i$.
This implies their endpoints share the identical pair of refined descendants.
Assume for contradiction that $e$ was local under $M_{i-1}$.
The matching edges incident to $r$ and $b$ under $M_{i-1}$ belong to a single equivalence class.
The refined descendants defining this class contain $r$ and $b$.
Because these refined descendants lie in different child clusters, their least common ancestor is exactly the least common ancestor of $r$ and $b$.
This class under $M_{i-1}$ is uniquely determined by the least common ancestor of $r$ and $b$ along with their refined descendants.
The edge $(r, b'')$ belongs to this class.
Because $r''$ and $b'$ lie in these identical refined descendants, their least common ancestor matches that of $r$ and $b$.
Thus, the edge $(r'', b')$ also belongs to this class under $M_{i-1}$.
Consequently, the matching edges incident to $r$ and $b'$ under $M_{i-1}$ belong to the same equivalence class.
This makes $(b', r)$ local under $M_{i-1}$, contradicting its definition as a reducing edge.
Therefore, $e$ was non-local under $M_{i-1}$.
\end{proof}
We use this property of reducing edges to bound the net costs of $C_i$ and $\Pi_{i+1}$.
We apply these bounds to prove the ACI and the LSI.

\subparagraph*{Proof of the alternating cycle invariant.}
We prove the ACI by induction.
The base case trivially holds since the initial matching $M_0$ is empty.
For the inductive step, we assume that no alternating cycle with a negative net cost exists up to phase $i-1$.
Suppose for the sake of contradiction that there exists an alternating cycle $C_i$ with a negative net cost with respect to $M_i$.
Without loss of generality, we assume $C_i$ has the minimum number of edges among all such cycles.
Under these assumptions, we show that $\phi_{M_i}(C_i) \ge 0$, which contradicts the negative net cost of $C_i$ and establishes the invariant.

We first prove that $C_i$ must share an edge in $M_i$ with the augmenting path $\Pi_i$.
\begin{lemma}\label{lem:ci_intersection}
If $C_i$ is an alternating cycle with a negative net cost under $M_i$, then $C_i$ contains at least one edge in $\Pi_i \cap M_i$.
\end{lemma}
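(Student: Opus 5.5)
We argue by contradiction, along the lines of the corresponding step in Raghvendra and Agarwal. Suppose $C_i$ is an edge-minimal alternating cycle with negative net cost under $M_i$, and $C_i$ contains no edge of $\Pi_i \cap M_i$. We aim to show that $C_i$ is also alternating under $M_{i-1}$ and that $\phi_{M_{i-1}}(C_i) \le \phi_{M_i}(C_i) < 0$. This would contradict the inductive hypothesis, namely the ACI for $M_{i-1}$.

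\textbf{Step 1: $C_i$ avoids $\Pi_i$.} Every vertex of $\Pi_i$ is matched in $M_i$, and its matching edge lies in $\Pi_i \cap M_i$. Since $C_i$ is an alternating cycle, each of its vertices is incident to its $M_i$-edge inside $C_i$. Hence if $C_i$ visited a vertex of $\Pi_i$, it would contain an edge of $\Pi_i \cap M_i$. So $C_i$ is vertex-disjoint from $\Pi_i$.

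\textbf{Step 2: $C_i$ is alternating under $M_{i-1}$.} Because $M_{i-1}$ and $M_i$ differ only on edges incident to $\Pi_i$, the matched edges of $C_i$ are the same under both matchings. Thus $C_i$ is an alternating cycle under $M_{i-1}$, with $C_i \cap M_{i-1} = C_i \cap M_i$.

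\textbf{Step 3: compare adjusted costs.} An edge of $C_i$ can change its adjusted cost only if its locality status changes. Locality of an edge $(r,b)$ depends only on the matching edges at $r$ and $b$, and those are unchanged for vertices outside $\Pi_i$. By Step 1, every edge of $C_i$ has both endpoints outside $\Pi_i$, so $\Phi_{M_{i-1}}(e)=\Phi_{M_i}(e)$ for all $e\in C_i$. In particular, no edge of $C_i$ is reducing or affected. It follows that $\phi_{M_{i-1}}(C_i)=\phi_{M_i}(C_i)<0$, which contradicts the ACI for $M_{i-1}$.

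\textbf{Main obstacle.} The delicate point is verifying in Step 3 that locality depends only on the matching edges at the endpoints. From the definition, $(r,b)$ is local iff the matching edges of $r$ and $b$ lie in a common class of $\mathcal{K}_M$. Classes are determined per edge by the LCA and the refined descendants, so adding or removing other matched edges does not alter the class of an existing edge. Hence the status of $(r,b)$ is determined by the two incident matching edges alone, and Step 3 goes through.

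If one insists on a weaker reading, in which $C_i$ may touch $\Pi_i$ at a vertex, that reading is impossible by Step 1, so no further case is needed. Link edges carry fixed costs, so they pose no issue.
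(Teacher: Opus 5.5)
Your proof is correct and follows essentially the paper's approach: it combines the same two ingredients, namely that vertex-disjointness from $\Pi_i$ leaves every adjusted cost on $C_i$ unchanged (contradicting ACI for $M_{i-1}$), and that alternation forces any vertex of $C_i$ lying on $\Pi_i$ to contribute its $M_i$-edge, which lies in $\Pi_i \cap M_i$, to $C_i$. You simply run the argument in contrapositive order, and your Step~1 is a slightly more direct version of the paper's second paragraph.
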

\begin{proof}
We first show that $C_i$ shares at least one vertex with $\Pi_i$.
Suppose $C_i$ shares no vertices with $\Pi_i$.
The matching edges incident to the vertices of $C_i$ remain identical under $M_{i-1}$ and $M_i$.
This preserves the local or non-local classification of every geometric edge in $C_i$.
The adjusted cost of every link edge is fixed.
Consequently, the adjusted cost of every edge in $C_i$ is identical under $M_{i-1}$ and $M_i$.
This yields $\phi_{M_{i-1}}(C_i) = \phi_{M_i}(C_i) < 0$.
This contradicts the induction hypothesis that no negative net cost alternating cycle exists under $M_{i-1}$.
Thus, $C_i$ shares at least one vertex with $\Pi_i$.

We now show that $C_i$ contains an edge in $\Pi_i \cap M_i$.
Since $C_i$ is a cycle and $\Pi_i$ is a path, $C_i$ cannot be a subgraph of $\Pi_i$.
As $C_i$ shares a vertex with $\Pi_i$, it contains an edge $(u, v) \notin \Pi_i$ such that $u \in \Pi_i$.
Because $u \in \Pi_i$, the edge in $M_i$ incident to $u$ lies entirely on $\Pi_i$.
This implies $(u, v) \notin M_i$.
Since $C_i$ is an alternating cycle under $M_i$ containing the unmatched edge $(u, v)$, it must traverse the edge in $M_i$ incident to $u$.
This edge belongs to $\Pi_i \cap M_i$.
\end{proof}
This lemma establishes that any negative net cost cycle under $M_i$ intersects $\Pi_i$ at one or more matched edges.

\begin{figure}
    \centering
    \includegraphics[width=0.9\linewidth]{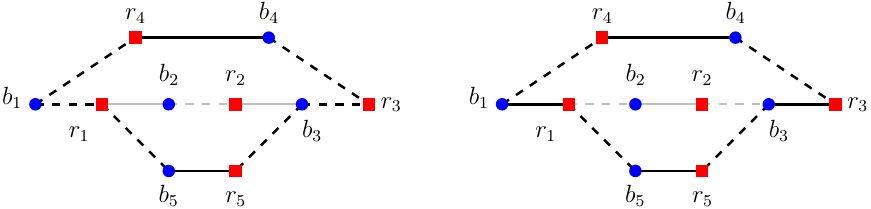}
    \caption{Illustration of $C_i$ and its decomposition. The left and right figures show the matching statuses under $M_{i-1}$ and $M_i$, respectively. Solid lines denote edges in the matching, and dashed lines denote unmatched edges. Vertices in $B_0 \cup B_1$ and $R_0 \cup R_1$ are represented by circles and squares, respectively. The augmenting path is $\Pi_i = \langle b_1, r_1, b_2, r_2, b_3, r_3 \rangle$. Under $M_i$, the alternating cycle is $C_i = \langle b_1, r_1, b_5, r_5, b_3, r_3, b_4, r_4, b_1 \rangle$. The decomposition yields $\mathcal{C}^0 = \{ \langle b_1, r_1 \rangle, \langle b_3, r_3 \rangle \}$, $\mathcal{C}^+ = \{ \langle b_1, r_4, b_4, r_3 \rangle \}$, and $\mathcal{C}^- = \{ \langle r_1, b_5, r_5, b_3    \rangle \}$.}
    \label{fig:Ci_decomposition_eg}
\end{figure}

By Lemma~\ref{lem:ci_intersection}, we decompose $C_i$ into subpaths to evaluate its net cost.
We partition $C_i$ into three sets of alternating paths under $M_i$: $\mathcal{C}^0$, $\mathcal{C}^+$, and $\mathcal{C}^-$.
Figure~\ref{fig:Ci_decomposition_eg} illustrates an example of the decomposition.
Let $\mathcal{C}^0$ be the set of connected components of $C_i \cap \Pi_i$, where each component is called a \emph{shared} component.
The proof of Lemma~\ref{lem:ci_intersection} shows that no edge in $C_i \setminus \Pi_i$ sharing a vertex with $\Pi_i$ belongs to $M_i$.
Thus, the first and last edges of every component in $\mathcal{C}^0$ belong to $M_i$.
Let $\mathcal{C}^+$ and $\mathcal{C}^-$ partition the connected components of $C_i \setminus \Pi_i$.
Every component $\pi \in \mathcal{C}^+ \cup \mathcal{C}^-$ begins and ends with an edge outside $M_i$.
Since $\pi$ is an alternating path with edges not in a matching at both ends, its two endpoints on $\Pi_i$ lie in opposite partitions $B_0 \cup B_1$ and $R_0 \cup R_1$.
Let $u \in B_0 \cup B_1$ and $v \in R_0 \cup R_1$ be these endpoints.
Since $\Pi_i$ is directed from $B_0 \cup B_1$ to $R_0 \cup R_1$, we classify $\pi$ as a \emph{forward} component in $\mathcal{C}^+$ if $u$ precedes $v$ along $\Pi_i$, and as a \emph{backward} component in $\mathcal{C}^-$ if $v$ precedes $u$.
Using this decomposition, we express the net cost of $C_i$ as
\begin{equation}\label{eq:net_cost_ci}
    \phi_{M_i}(C_i) = \sum_{\pi \in \mathcal{C}^+} \phi_{M_i}(\pi) + \sum_{\pi \in \mathcal{C}^-} \phi_{M_i}(\pi) + \sum_{\pi \in \mathcal{C}^0} \phi_{M_i}(\pi).
\end{equation}

We establish lower bounds on the net cost for each partition of $C_i$.
Let $n^0$ denote the number of geometric edges in $\mathcal{C}^0 \cap M_i$ that were non-local under $M_{i-1}$.
Let $r^+$ and $r^-$ denote the number of reducing edges in $\mathcal{C}^+$ and $\mathcal{C}^-$, respectively.
For any component $\pi \in \mathcal{C}^+ \cup \mathcal{C}^-$, let $\overline{\pi}$ denote the subpath of $\Pi_i$ between the endpoints of $\pi$.
The following lemma establishes the lower bounds.
\begin{lemma}\label{lem:ci_bounds}
Suppose $C_i$ is an alternating cycle under $M_i$ with a negative net cost and the minimum number of edges.
The net costs of the forward components $\mathcal{C}^+$, the backward components $\mathcal{C}^-$, and the shared components $\mathcal{C}^0$ satisfy 
\begin{align}
    \sum_{\pi \in \mathcal{C}^+} \phi_{M_i}(\pi) &\ge \sum_{\pi \in \mathcal{C}^+} \phi_{M_{i-1}}(\overline{\pi}) - r^+\theta, \label{eq:bound_c_plus} \\
    \sum_{\pi \in \mathcal{C}^-} \phi_{M_i}(\pi) &\ge -\sum_{\pi \in \mathcal{C}^-} \phi_{M_{i-1}}(\overline{\pi}) - r^-\theta, \label{eq:bound_c_minus} \\
    \sum_{\pi \in \mathcal{C}^0} \phi_{M_i}(\pi) &\ge -\sum_{\pi \in \mathcal{C}^0} \phi_{M_{i-1}}(\pi) + n^0\theta. \label{eq:bound_c_zero}
\end{align}
\end{lemma}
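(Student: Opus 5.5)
The three inequalities would be proved by tracking, edge by edge, how $\Phi$ changes from $M_{i-1}$ to $M_i$. Each resulting comparison is then paired with one of two facts about $M_{i-1}$: the minimality of $\Pi_i$, or the ACI for $M_{i-1}$ (the induction hypothesis).

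\textbf{Step 1 (per-edge cost changes).} I would first record three facts.
\begin{itemize}
\item A link edge has the same cost under every matching.
\item A matched geometric edge is always local, since it lies in its own class $K$ with $r\in R_K$ and $b\in B_K$. Hence $\Phi(e)=\sdist(e)$ for every matched geometric edge.
\item An edge not affected by $\Pi_i$ keeps its local/non-local status, because the matching edges at its endpoints are unchanged.
\end{itemize}
Consequently, for $e\notin \Pi_i$ that is unmatched in both matchings, we get $\Phi_{M_i}(e)\ge \Phi_{M_{i-1}}(e)$, except when $e$ is reducing; in that case $\Phi_{M_i}(e)=\Phi_{M_{i-1}}(e)-\theta$.

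For edges of $\Pi_i$ the roles flip.
\begin{itemize}
\item If $e\in \Pi_i\cap M_{i-1}$, then $e$ was local, so $\Phi_{M_{i-1}}(e)=\sdist(e)\le \Phi_{M_i}(e)$.
\item If $e\in \Pi_i\cap M_i$, then $\Phi_{M_i}(e)=\Phi_{M_{i-1}}(e)-\theta$ when $e$ is geometric and was non-local under $M_{i-1}$. Otherwise $\Phi_{M_i}(e)=\Phi_{M_{i-1}}(e)$.
\end{itemize}

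\textbf{Step 2 (shared components, \eqref{eq:bound_c_zero}).} Let $\pi\in\mathcal C^0$ be a subpath of $\Pi_i$. Its $M_i$-edges are exactly its $\Pi_i\setminus M_{i-1}$ edges, and vice versa. Using Step 1,
\[
\phi_{M_i}(\pi)=\Phi_{M_i}(\pi\cap M_{i-1})-\Phi_{M_i}(\pi\cap M_i)\ge \Phi_{M_{i-1}}(\pi\cap M_{i-1})-\Phi_{M_{i-1}}(\pi\setminus M_{i-1})+n^0_\pi\theta ,
\]
where $n^0_\pi$ counts the edges of $\pi\cap M_i$ that were non-local under $M_{i-1}$. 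The right-hand side equals $-\phi_{M_{i-1}}(\pi)+n^0_\pi\theta$. Summing over $\mathcal C^0$ gives \eqref{eq:bound_c_zero}.

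\textbf{Step 3 (forward and backward components).} Take $\pi\in\mathcal C^+\cup\mathcal C^-$. Its internal vertices avoid $\Pi_i$, so its matched edges are identical in $M_{i-1}$ and $M_i$ and their costs agree. Its unmatched edges lose at most $\theta$ each, and only when they are reducing. Therefore
\[
\phi_{M_i}(\pi)\ge\phi_{M_{i-1}}(\pi)-r_\pi\theta .
\]
Moreover, $\pi$ is alternating under $M_{i-1}$ too: its end edges are unmatched, and the $M_{i-1}$-edges at its endpoints lie on $\Pi_i$ (or the endpoints are free).

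It remains to compare $\phi_{M_{i-1}}(\pi)$ with $\phi_{M_{i-1}}(\overline\pi)$.
\begin{itemize}
\item \emph{Backward} $\pi$: $v\in R$ precedes $u\in B$ on $\Pi_i$. Since $\Pi_i$ starts at a free $B$-vertex with an unmatched edge, $\overline\pi$ starts and ends with $M_{i-1}$-edges. Hence $\pi\cup\overline\pi$ is an alternating cycle under $M_{i-1}$. The ACI for $M_{i-1}$ gives $\phi_{M_{i-1}}(\pi)+\phi_{M_{i-1}}(\overline\pi)\ge 0$, which yields \eqref{eq:bound_c_minus}.
\item \emph{Forward} $\pi$: splice the prefix of $\Pi_i$ up to $u$, then $\pi$, then the suffix of $\Pi_i$ from $v$. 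Parity works: the edge of $\Pi_i$ entering a $B$-vertex is matched, and $\pi$ leaves $u$ with an unmatched edge. The result is a simple augmenting path under $M_{i-1}$. Its net cost is $\phi_{M_{i-1}}(\Pi_i)-\phi_{M_{i-1}}(\overline\pi)+\phi_{M_{i-1}}(\pi)$. Minimality of $\Pi_i$ then gives $\phi_{M_{i-1}}(\pi)\ge\phi_{M_{i-1}}(\overline\pi)$, i.e.\ \eqref{eq:bound_c_plus}.
\end{itemize}
Summing over all components gives the stated bounds with $r^\pm$.

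\textbf{Main obstacle.} I expect the delicate point to be the forward case when several forward components are handled simultaneously. Each splice must be applied individually, so that the new path is simple; this needs $\pi$ to be internally disjoint from $\Pi_i$, which holds since $\pi$ is a component of $C_i\setminus\Pi_i$. A second delicate point is checking that a non-reducing affected edge never \emph{decreases} in cost, so that only reducing edges contribute the $-\theta$ terms.
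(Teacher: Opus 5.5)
Your proposal is correct and follows the paper's proof essentially step for step. It tracks $\Phi$ edge by edge from $M_{i-1}$ to $M_i$, uses the status swap on shared components to get \eqref{eq:bound_c_zero}, splices $\pi$ into $\Pi_i$ and applies the minimality of $\Pi_i$ for forward components, and applies the ACI for $M_{i-1}$ to the cycle $\pi\cup\overline{\pi}$ for backward components. One small precision: each $\pi\in\mathcal{C}^+\cup\mathcal{C}^-$ meets $\Pi_i$ only at its endpoints because of alternation, since every vertex of $\Pi_i$ is $M_i$-matched by an edge of $\Pi_i$ that $C_i$ must use; it does not follow merely from $\pi$ being a component of $C_i\setminus\Pi_i$.
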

\begin{proof}
For each forward component $\pi \in \mathcal{C}^+$, replacing $\overline{\pi}$ with $\pi$ along $\Pi_i$ yields an augmenting path under $M_{i-1}$. 
Since $\Pi_i$ is a minimum net-cost augmenting path under $M_{i-1}$, we obtain $\phi_{M_{i-1}}(\pi) \ge \phi_{M_{i-1}}(\overline{\pi})$.
The adjusted costs of the link edges remain identical under $M_{i-1}$ and $M_i$.
The adjusted costs of the geometric edges do not decrease, except for the reducing edges which decrease by $\theta$. 
Subtracting $\theta$ for each of the $r^+$ reducing edges across all forward components yields $\sum_{\pi \in \mathcal{C}^+} \phi_{M_i}(\pi) \ge \sum_{\pi \in \mathcal{C}^+} \phi_{M_{i-1}}(\pi) - r^+\theta \ge \sum_{\pi \in \mathcal{C}^+} \phi_{M_{i-1}}(\overline{\pi}) - r^+\theta$.
This establishes~\eqref{eq:bound_c_plus}.

For each backward component $\pi \in \mathcal{C}^-$, the union $\pi \cup \overline{\pi}$ forms an alternating cycle under $M_{i-1}$. 
By the induction hypothesis, its net cost is non-negative, which yields $\phi_{M_{i-1}}(\pi) + \phi_{M_{i-1}}(\overline{\pi}) \ge 0$. 
Subtracting $\theta$ for each of the $r^-$ reducing edges across all backward components yields $\sum_{\pi \in \mathcal{C}^-} \phi_{M_i}(\pi) \ge \sum_{\pi \in \mathcal{C}^-} \phi_{M_{i-1}}(\pi) - r^-\theta \ge -\sum_{\pi \in \mathcal{C}^-} \phi_{M_{i-1}}(\overline{\pi}) - r^-\theta$.
This proves~\eqref{eq:bound_c_minus}.

For every shared component $\pi \in \mathcal{C}^0$, its edges lie entirely on $\Pi_i$.
This implies $\pi \setminus M_{i-1} = \pi \cap M_i$ and $\pi \cap M_{i-1} = \pi \setminus M_i$.
The net cost of $\pi$ under $M_{i-1}$ is $\phi_{M_{i-1}}(\pi) = \Phi_{M_{i-1}}(\pi \setminus M_{i-1}) - \Phi_{M_{i-1}}(\pi \cap M_{i-1})$.
Substituting the edge relations yields $\phi_{M_{i-1}}(\pi) = \Phi_{M_{i-1}}(\pi \cap M_i) - \Phi_{M_{i-1}}(\pi \setminus M_i)$.
The geometric edges in $\pi \setminus M_i$ were matched and local under $M_{i-1}$, meaning their adjusted costs do not decrease under $M_i$.
The geometric edges in $\pi \cap M_i$ are matched and local under $M_i$.
Exactly $n^0$ geometric edges across all shared components were non-local under $M_{i-1}$.
This yields $\sum_{\pi \in \mathcal{C}^0} \Phi_{M_{i-1}}(\pi \cap M_i) = \sum_{\pi \in \mathcal{C}^0} \Phi_{M_i}(\pi \cap M_i) + n^0\theta$.
Applying these bounds gives $-\sum_{\pi \in \mathcal{C}^0} \phi_{M_{i-1}}(\pi) \le \sum_{\pi \in \mathcal{C}^0} \Phi_{M_i}(\pi \setminus M_i) - \sum_{\pi \in \mathcal{C}^0} \Phi_{M_i}(\pi \cap M_i) - n^0\theta$.
Since $\Phi_{M_i}(\pi \setminus M_i) - \Phi_{M_i}(\pi \cap M_i) = \phi_{M_i}(\pi)$, rearranging these terms establishes~\eqref{eq:bound_c_zero}.
\end{proof}
This lemma bounds the net cost of each partition in $C_i$.

Summing~\eqref{eq:bound_c_plus},~\eqref{eq:bound_c_minus}, and~\eqref{eq:bound_c_zero} yields a lower bound on the net cost of $C_i$.
Since $C_i$ is a cycle, every edge on $\Pi_i$ is covered by $\overline{\pi}$ for $\pi \in \mathcal{C}^+$ exactly as many times as it is covered by $\overline{\pi}$ for $\pi \in \mathcal{C}^-$ and by $\pi$ for $\pi \in \mathcal{C}^0$ combined.
This implies $\sum_{\pi \in \mathcal{C}^+} \phi_{M_{i-1}}(\overline{\pi}) - \sum_{\pi \in \mathcal{C}^-} \phi_{M_{i-1}}(\overline{\pi}) - \sum_{\pi \in \mathcal{C}^0} \phi_{M_{i-1}}(\pi) = 0$.
Applying this equality to the sum yields
\begin{equation}\label{eq:bound_ci}
    \phi_{M_i}(C_i) \ge (n^0 - r^+ - r^-)\theta.
\end{equation}
This inequality establishes that $C_i$ can have a negative net cost only if the total number of reducing edges in $\mathcal{C}^+$ and $\mathcal{C}^-$ strictly exceeds the number of geometric edges in $\mathcal{C}^0 \cap M_i$ that were non-local under $M_{i-1}$.

We now formally establish the ACI.

\aci*
\begin{proof}
We proceed by induction on $i$.
The base case holds because the initial matching $M_0$ is empty.
Assume the invariant holds up to phase $i-1$.
Suppose for contradiction that an alternating cycle with a negative net cost exists under $M_i$.
Let $C_i$ be such a cycle with the minimum number of edges.

By definition, a reducing edge lies outside $\Pi_i$ but shares at least one vertex with $\Pi_i$.
Since every vertex of $\Pi_i$ is matched within $\Pi_i$ under $M_i$, the reducing edge cannot belong to $M_i$.
As $C_i$ is an alternating cycle under $M_i$, it must traverse the edge in $M_i$ incident to the shared vertex.
Because this shared vertex lies on $\Pi_i$, this edge lies entirely on $\Pi_i$.
Thus, this edge in the matching belongs to $C_i \cap \Pi_i \cap M_i$, which is $\mathcal{C}^0 \cap M_i$.
Therefore, every reducing edge in $C_i$ is adjacent to at least one edge in $\mathcal{C}^0 \cap M_i$.

By Lemma~\ref{lem:reducing_edge_structure}, each edge in $\mathcal{C}^0 \cap M_i$ is adjacent to at most one reducing edge in $C_i$, and was non-local under $M_{i-1}$.
Therefore, the $r^+ + r^-$ reducing edges in $C_i$ map injectively to distinct geometric edges in $\mathcal{C}^0 \cap M_i$ that were non-local under $M_{i-1}$.
Since $n^0$ denotes the number of such geometric edges, this establishes $n^0 \ge r^+ + r^-$.
Applying this bound to~\eqref{eq:bound_ci} yields $\phi_{M_i}(C_i) \ge 0$.
This contradicts the assumption that $C_i$ has a negative net cost.
\end{proof}

\subparagraph*{Proof of the layer symmetry invariant.}
We prove the LSI by induction.
The base case holds since the initial matching $M_0$ is empty.
For the inductive step, we assume the matching $M_{i-1}$ is symmetric.
Let $\Pi_i$ be a minimum net-cost augmenting path computed under $M_{i-1}$ starting from a free vertex in $B_0$.
We analyze two cases based on whether $\Pi_i$ contains a link edge.
In either case, we show that the algorithm reaches a symmetric matching within at most two minimum net cost augmentation phases.

\begin{figure}
    \centering
    \includegraphics[width=0.75\linewidth]{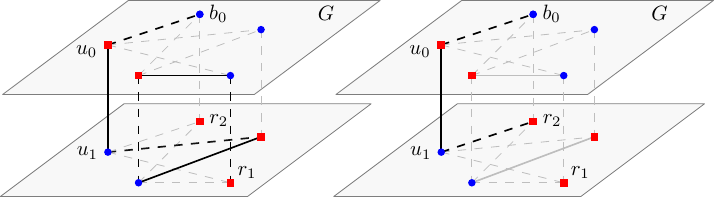}
    \caption{Illustration of an augmenting path $\Pi_i$ and its corresponding symmetric augmenting path $\Pi_i^1$. The left and right figures show the paths $\Pi_i$ and $\Pi_i^1$ highlighted in black, respectively. Solid lines denote matched edges, and dashed lines denote unmatched edges. Vertices in $B_0 \cup B_1$ and $R_0 \cup R_1$ are represented by circles and squares, respectively. In the left figure, $\Pi_i$ decomposes into the upper-layer geometric subpath $\pi_1 = \langle b_0, u_0 \rangle$, the first link edge $e = (u_0, u_1)$, and the remaining subpath $\pi_2$ from $u_1$ to $r_1$. In the right figure, the mirrored path of $\pi_1$ is $\hat{\pi}_1 = \langle u_1, r_2 \rangle$. The symmetric augmenting path $\Pi_i^1$ consists of $\pi_1$, the link edge $e$, and $\hat{\pi}_1$, which yields $\Pi_i^1 = \langle b_0, u_0, u_1, r_2 \rangle$.}
    \label{fig:aug_with_link}
\end{figure}

For any alternating path $\pi$, let $\hat{\pi}$ denote its \emph{mirrored path}.
We obtain $\hat{\pi}$ by replacing every geometric edge in $\pi$ with the geometric edge connecting the corresponding endpoint copies in the opposite layer, while traversing the identical link edges.
An augmenting path $\Pi$ is a \emph{symmetric augmenting path} if it satisfies one of two conditions.
First, $\Pi$ consists entirely of geometric edges in a single layer.
Second, $\Pi$ contains exactly one link edge $e$ and satisfies $\Pi = \pi \circ e \circ \hat{\pi}$ for some single-layer geometric subpath $\pi$.

We first consider the case where $\Pi_i$ contains at least one link edge.
Let $b_0 \in B_0$ be the starting free vertex of $\Pi_i$.
Let $e = (u_0, u_1)$ be the first link edge along $\Pi_i$, with $u_0$ in the upper layer and $u_1$ in the lower layer.
We decompose $\Pi_i$ into the subpath $\pi_1$ from $b_0$ to $u_0$, the edge $e$, and the remaining subpath $\pi_2$ from $u_1$ to the end of $\Pi_i$.
Since $\pi_1$ precedes the first link edge, it consists entirely of geometric edges in the upper layer.
Let $\Pi_i^1$ be the path formed by $\pi_1$, the edge $e$, and $\hat{\pi}_1$.
Figure~\ref{fig:aug_with_link} illustrates $\Pi_i$ and the construction of $\Pi_i^1$.
The following lemma establishes that $\Pi_i^1$ is a symmetric minimum net cost augmenting path under $M_{i-1}$.

\begin{lemma}\label{lem:lsi_link}
Suppose $M_{i-1}$ is symmetric and $\Pi_i$ contains at least one link edge.
Then $\Pi_i^1$ is a symmetric minimum net cost augmenting path under $M_{i-1}$.
\end{lemma}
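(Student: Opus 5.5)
We need to show two things: that $\Pi_i^1$ is a valid augmenting path under $M_{i-1}$, and that $\phi_{M_{i-1}}(\Pi_i^1) \le \phi_{M_{i-1}}(\Pi_i)$. Since $\Pi_i$ has minimum net cost, this forces equality. Symmetry of $\Pi_i^1$ then holds by construction, because it has exactly one link edge and the form $\pi_1 \circ e \circ \hat{\pi}_1$.

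\textbf{Validity.} Because $M_{i-1}$ is symmetric (LSI), the mirror map $x_0 \mapsto x_1$ sends matched upper geometric edges to matched lower geometric edges and free vertices to free vertices. Adjusted costs are also preserved: locality is defined through the equivalence classes $\mathcal{K}_M$, which are mirror images of each other, so an upper edge is local if and only if its mirror is local, and $\sdist$ is identical for the two copies. Hence $\phi_{M_{i-1}}(\hat{\pi}) = \phi_{M_{i-1}}(\pi)$ for any upper-layer alternating path $\pi$.

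The path $\pi_1$ runs from the free vertex $b_0$ to $u_0$ and ends with a non-matched edge, since $e$ is non-matched and must be preceded by a matched edge or be the first edge. So its reversed mirror $\hat{\pi}_1$ runs from $u_1$ to the copy $b_1$ of $b$. Here $b_1$ lies in $R_1$ and is free by symmetry. Thus $\pi_1 \circ e \circ \hat{\pi}_1$ is an alternating path between two free vertices. It is simple because $\pi_1$ lies in the upper layer and $\hat{\pi}_1$ in the lower layer. One point needs care: $e$ must be unmatched. If $e$ were matched, $u$ would be unmatched in both layers, and $\pi_1$ would have to reach $u_0$ via a matched edge, which is impossible. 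So $e$ is indeed non-matched. The degenerate case $\pi_1=\emptyset$ (that is, $u=b$) gives the path consisting of $e$ alone, which is also symmetric.

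\textbf{Cost comparison.} We have $\phi(\Pi_i) = \phi(\pi_1) + 2p(u) + \phi(\pi_2)$ and $\phi(\Pi_i^1) = 2\phi(\pi_1) + 2p(u)$, so it suffices to show $\phi(\pi_1) \le \phi(\pi_2)$. The subpath $\pi_2$ runs from $u_1$ to a free endpoint $z$ and may recross layers. Mirror it entirely to get $\hat{\pi}_2$ from $u_0$ to $\hat z$; this preserves net cost because link edges are fixed under mirroring. Now consider the alternative augmenting path $\hat{\pi}_2^{\mathrm{rev}} \circ e \circ \hat{\pi}_1$, going from $\hat z$ through $u_0$, across $e$, then down $\hat{\pi}_1$ to $b_1$. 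Its net cost is $\phi(\pi_2) + 2p(u) + \phi(\pi_1)$, the same as $\Pi_i$, so it is not directly useful. Instead, I would compare costs by exchanging halves. The two paths $\pi_1 \circ e \circ \hat{\pi}_1$ and $\hat{\pi}_2^{\mathrm{rev}} \circ e \circ \pi_2$ are both candidate augmenting paths, and their total net cost is $2\phi(\pi_1) + 2\phi(\pi_2) + 4p(u) = 2\phi(\Pi_i)$. By minimality of $\Pi_i$, each has cost at least $\phi(\Pi_i)$, so both equal $\phi(\Pi_i)$. In particular $\phi(\Pi_i^1) = \phi(\Pi_i)$.

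\textbf{Main obstacle.} The difficulty is that these candidate walks may fail to be simple paths, because $\pi_2$ can revisit vertices whose mirrors lie on $\pi_1$, or can cross $e$ again. When self-intersection occurs, I would decompose the resulting alternating walk into a simple augmenting path plus alternating cycles. ACI (Lemma~\ref{lem:aci}, applied at $M_{i-1}$) shows the cycles have nonnegative net cost, so removing them only decreases cost, and the inequality survives. Choosing $e$ as the first link edge and taking $\Pi_i$ edge-minimal should limit the possible overlaps to this decomposition argument.
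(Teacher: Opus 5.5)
Your argument is essentially the paper's. You pair $\Pi_i^1=\pi_1\circ e\circ\hat\pi_1$ with the walk $\hat\pi_2^{\mathrm{rev}}\circ e\circ\pi_2$ and note that their net costs sum to $2\phi_{M_{i-1}}(\Pi_i)$. You bound each from below by $\phi_{M_{i-1}}(\Pi_i)$: the first because it is a simple augmenting path, the second by splitting it into an augmenting path plus alternating cycles and invoking the ACI. This forces equality.

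One step is false, however: your claim that the first link edge $e$ must be unmatched. If $e=(u_0,u_1)$ is matched, alternation makes the last edge of $\pi_1$ \emph{unmatched}, so $u_0$ is entered through an unmatched geometric edge and nothing is contradicted. This case really occurs: a red point $u$ that is currently paying its penalty, with $(u_0,u_1)\in R_0\times B_1$ matched, becomes matched through $\Pi_i$. For instance, once the link edge of $r$ is matched, $\langle b_0,r_0,r_1,b_1\rangle$ is such a path; this is exactly what the negative link-crossing states $Y^-$ in the data structure handle. Your sentence that $\pi_1$ ends with a non-matched edge because $e$ is non-matched is also self-contradictory, since an unmatched $e$ must be preceded by a matched edge.

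The repair is to keep $\phi_{M_{i-1}}(e)=\pm 2p(u)$ symbolic, as the paper does. When $e$ is matched, $\pi_1$ has odd length and $\hat\pi_1$, traversed from $u_1$, begins with an unmatched edge, so $\Pi_i^1$ is still an alternating path between the free vertices $b_0$ and $b_1$. The identity $\phi(\Pi_i^1)+\phi(\hat\pi_2^{\mathrm{rev}}\circ e\circ\pi_2)=2\phi(\pi_1)+2\phi(\pi_2)+2\phi(e)=2\phi(\Pi_i)$ does not depend on the sign of $\phi(e)$.

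Your description of the obstacle is also slightly off. Since $\Pi_i$ is simple, $\pi_2$ never uses $e$ or visits $u_0$, and $\pi_1$ does not appear in the second walk at all. The real overlap is that $\hat\pi_2$ and $\pi_2$ share every link edge of $\pi_2$, so the walk can repeat edges, not only vertices. No edge-minimality of $\Pi_i$ is needed to handle this. Orient unmatched edges from the $B$-side to the $R$-side and matched edges the other way. Then $\hat\pi_2^{\mathrm{rev}}\circ e\circ\pi_2$ is a directed walk from the free vertex $\hat z$ to $z$, its closed subwalks split into simple directed cycles, and these are alternating cycles, so the ACI applies to them.
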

\begin{proof}
Since $M_{i-1}$ is symmetric, replacing geometric edges with their copies in the opposite layer preserves their matching statuses and adjusted costs.
Thus, $\hat{\pi}_1$ and $\hat{\pi}_2$ form valid alternating paths under $M_{i-1}$.
Their net costs satisfy $\phi_{M_{i-1}}(\hat{\pi}_1) = \phi_{M_{i-1}}(\pi_1)$ and $\phi_{M_{i-1}}(\hat{\pi}_2) = \phi_{M_{i-1}}(\pi_2)$.

Since $\pi_1$ lies entirely in the upper layer, $\hat{\pi}_1$ lies entirely in the lower layer.
Thus, $\pi_1$ and $\hat{\pi}_1$ share no vertices.
This implies $\Pi_i^1$ forms a simple alternating path connecting two free vertices.
As $\Pi_i^1$ is a valid augmenting path, we obtain $\phi_{M_{i-1}}(\Pi_i^1) \ge \phi_{M_{i-1}}(\Pi_i)$.

Let $\Pi_i^2$ be the alternating walk formed by $\hat{\pi}_2$, the edge $e$, and $\pi_2$.
This walk connects two free vertices and decomposes into an augmenting path and a set of alternating cycles.
By Lemma~\ref{lem:aci}, no alternating cycle has a negative net cost under $M_{i-1}$.
Thus, the net cost of $\Pi_i^2$ is bounded below by the net cost of the augmenting path, which yields $\phi_{M_{i-1}}(\Pi_i^2) \ge \phi_{M_{i-1}}(\Pi_i)$.

The sum of their net costs is
\begin{align}
    \phi_{M_{i-1}}(\Pi_i^1) + \phi_{M_{i-1}}(\Pi_i^2) &= \left(2\phi_{M_{i-1}}(\pi_1) + \phi_{M_{i-1}}(e)\right) + \left(2\phi_{M_{i-1}}(\pi_2) + \phi_{M_{i-1}}(e)\right) \nonumber \\
    &= 2\left(\phi_{M_{i-1}}(\pi_1) + \phi_{M_{i-1}}(e) + \phi_{M_{i-1}}(\pi_2)\right) \nonumber \\
    &= 2\phi_{M_{i-1}}(\Pi_i). \label{eq:sum_net_costs_lsi}
\end{align}
We previously established the lower bounds $\phi_{M_{i-1}}(\Pi_i^1) \ge \phi_{M_{i-1}}(\Pi_i)$ and $\phi_{M_{i-1}}(\Pi_i^2) \ge \phi_{M_{i-1}}(\Pi_i)$.
The equation~\eqref{eq:sum_net_costs_lsi} shows their sum is exactly $2\phi_{M_{i-1}}(\Pi_i)$.
This forces $\phi_{M_{i-1}}(\Pi_i^1) = \phi_{M_{i-1}}(\Pi_i)$, proving that $\Pi_i^1$ is a minimum net cost augmenting path.
By construction, $\Pi_i^1$ is a symmetric augmenting path.
\end{proof}
Since $\Pi_i^1$ is a symmetric minimum net cost augmenting path, augmenting $M_{i-1}$ along $\Pi_i^1$ changes the geometric edges identically in both layers.
This ensures the resulting matching $M_i$ remains symmetric.

We next consider the case where $\Pi_i$ contains no link edge.
Since $\Pi_i$ starts in $B_0$ and contains no link edge, it consists entirely of geometric edges in the upper layer.
Let $\hat{\Pi}_i$ be the mirrored path of $\Pi_i$.
As $M_{i-1}$ is symmetric, $\hat{\Pi}_i$ forms an augmenting path under $M_{i-1}$ with net cost $\phi_{M_{i-1}}(\hat{\Pi}_i) = \phi_{M_{i-1}}(\Pi_i)$.
Because augmenting $M_{i-1}$ along $\Pi_i$ changes edges only in the upper layer, $\hat{\Pi}_i$ remains a valid augmenting path under $M_i$.
Its net cost remains identical, i.e., $\phi_{M_i}(\hat{\Pi}_i) = \phi_{M_{i-1}}(\Pi_i)$.
The following lemma establishes that $\hat{\Pi}_i$ is a symmetric minimum net-cost augmenting path for phase $i+1$.

\begin{lemma}\label{lem:lsi_no_link}
Suppose $M_{i-1}$ is symmetric and $\Pi_i$ contains no link edge.
Then $\hat{\Pi}_i$ is a symmetric minimum net cost augmenting path with respect to $M_i$.
\end{lemma}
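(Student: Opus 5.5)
We argue by contradiction, reusing the machinery developed for the ACI. First, $\hat{\Pi}_i$ is an augmenting path under $M_i$ and is symmetric of type~(i). Since $\Pi_i$ touches only the upper layer, augmenting along it does not change the matching status of any lower-layer vertex. Locality of a lower-layer geometric edge depends only on the lower-layer matched edges, because the equivalence classes are defined layer by layer. So every edge of $\hat{\Pi}_i$ keeps its adjusted cost, and hence $\phi_{M_i}(\hat{\Pi}_i)=\phi_{M_{i-1}}(\Pi_i)$. It therefore suffices to show that every augmenting path $P$ under $M_i$ satisfies $\phi_{M_i}(P)\ge\phi_{M_{i-1}}(\Pi_i)$.

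Suppose not. Take $\Pi_{i+1}$ to be a minimum net-cost augmenting path under $M_i$ with the fewest edges; this is the object Lemma~\ref{lem:reducing_edge_structure} is stated for. If $\Pi_{i+1}$ shares no vertex with $\Pi_i$, its edges have the same adjusted costs under $M_{i-1}$ and $M_i$. Its endpoints are free under $M_i$, hence also under $M_{i-1}$, since augmentation only shrinks the set of free vertices. So $\Pi_{i+1}$ is an augmenting path under $M_{i-1}$ with $\phi_{M_{i-1}}(\Pi_{i+1})\ge\phi_{M_{i-1}}(\Pi_i)$, a contradiction. Otherwise we decompose $\Pi_{i+1}$ as in Lemma~\ref{lem:ci_bounds}:
\begin{itemize}
\item $\mathcal{C}^0$: the shared components $\Pi_{i+1}\cap\Pi_i$, each beginning and ending with an $M_i$-edge, as in Lemma~\ref{lem:ci_intersection};
\item $\mathcal{C}^+$ and $\mathcal{C}^-$: the forward and backward interior components of $\Pi_{i+1}\setminus\Pi_i$;
\item two end pieces: the initial segment from the free start of $\Pi_{i+1}$ to its first vertex on $\Pi_i$, and the final segment from its last vertex on $\Pi_i$ to its free end.
\end{itemize}
Neither endpoint of $\Pi_{i+1}$ lies on $\Pi_i$, because all vertices of $\Pi_i$ are matched under $M_i$. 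So both end pieces are nonempty.

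Next we bound each part. For the interior components we apply exactly the three bounds \eqref{eq:bound_c_plus}--\eqref{eq:bound_c_zero}. Their proofs use only that $\Pi_i$ is a minimum net-cost augmenting path under $M_{i-1}$ and the ACI under $M_{i-1}$ (Lemma~\ref{lem:aci}), so they carry over verbatim. For the end pieces we splice. Concatenating the initial piece, the portion of $\Pi_i$ from its attachment vertex onward (or backward, depending on the side), and then the final piece should give an augmenting walk under $M_{i-1}$. This walk has net cost at least $\phi_{M_{i-1}}(\Pi_i)$: decompose it into an augmenting path plus alternating cycles and invoke ACI, as in the proof of Lemma~\ref{lem:lsi_link}. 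Summing everything, the $\Pi_i$ segments telescope as in the derivation of \eqref{eq:bound_ci}, but the two end pieces leave one full copy of $\phi_{M_{i-1}}(\Pi_i)$. This should give
\[
\phi_{M_i}(\Pi_{i+1})\ \ge\ \phi_{M_{i-1}}(\Pi_i)+(n^0-r^+-r^- - r^{e})\theta,
\]
where $r^{e}$ counts reducing edges in the end pieces. The counting argument from the proof of Lemma~\ref{lem:aci}, now using the path case of Lemma~\ref{lem:reducing_edge_structure}, gives an injection from all reducing edges to the edges counted by $n^0$. Hence $n^0\ge r^++r^-+r^{e}$, which yields the contradiction.

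The main obstacle is the bookkeeping for the end pieces. We must check that the coverage of $\Pi_i$'s edges by the $\overline{\pi}$'s, the shared components, and the splice segment cancels to exactly one copy of $\phi_{M_{i-1}}(\Pi_i)$. This needs a case split on whether the first attachment vertex lies in $B$ or $R$ and on its position relative to the last attachment vertex. My first attempt would be to close $\Pi_{i+1}$ into a cycle with $\Pi_i$, using a virtual edge joining the free endpoints, and then reuse the cycle computation directly.
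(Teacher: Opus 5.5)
Your plan is viable. Rerunning the shared/forward/backward decomposition from the ACI proof on $\Pi_{i+1}$ and closing it off with the two end pieces can work. The component bounds \eqref{eq:bound_c_plus}--\eqref{eq:bound_c_zero} and the injection via Lemma~\ref{lem:reducing_edge_structure} carry over as you say, and your target inequality is true. However, the step you leave open is where the real content is, and the one concrete splice you describe fails in a common case.

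Let $u$ be the first and $w$ the last vertex of $\Pi_{i+1}$ on $\Pi_i$. Parity already fixes the sides: the initial piece starts and ends with non-$M_i$ edges, so $u$ is on the $R$-side and $w$ on the $B$-side, and no case split on that is needed. Traverse $\Pi_{i+1}$ from its free $B$-end. Shared components then run backwards along $\Pi_i$, forward components jump forward, and backward components jump back. So the telescoping does not leave a full copy of $\phi_{M_{i-1}}(\Pi_i)$. It leaves the signed segment $+\phi_{M_{i-1}}(\Pi_i[u,w])$ if $u$ precedes $w$, and $-\phi_{M_{i-1}}(\Pi_i[w,u])$ if $w$ precedes $u$.

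In the first case your splice (initial piece, then $\Pi_i[u,w]$, then final piece) is in fact a simple augmenting path under $M_{i-1}$, and you are done. In the second case no such walk exists. The edge of $\Pi_i$ leaving $u$ backwards is unmatched under $M_{i-1}$, just like the last edge of the initial piece, so alternation breaks at $u$. This case is not exotic: it is exactly what happens when $\Pi_{i+1}$ touches $\Pi_i$ in a single $M_i$-edge.

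The repair is to use two augmenting paths under $M_{i-1}$:
\begin{itemize}
\item the initial piece followed by $\Pi_i$ from $u$ to its free $R$-end;
\item $\Pi_i$ from its free $B$-end to $w$ followed by the final piece.
\end{itemize}
Both are simple, and each has net cost at least $\phi_{M_{i-1}}(\Pi_i)$. Summing gives $\phi_{M_{i-1}}(\mathrm{init})+\phi_{M_{i-1}}(\mathrm{fin})\ge\phi_{M_{i-1}}(\Pi_i)+\phi_{M_{i-1}}(\Pi_i[w,u])$, which absorbs the negative leftover. Your virtual-edge idea would need its own justification, since the ACI says nothing about cycles through a virtual edge.

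This two-augmenting-paths step is exactly what the paper gets for free by working globally. It sets $\Gamma=\Pi_i\oplus\Pi_{i+1}$ and decomposes it into two $M_{i-1}$-augmenting paths plus alternating cycles, which gives $\phi_{M_{i-1}}(\Gamma)\ge 2\phi_{M_{i-1}}(\Pi_i)$. It then compares $\phi_{M_{i-1}}(\Pi_i)+\phi_{M_i}(\Pi_{i+1})$ with $\phi_{M_{i-1}}(\Gamma)$ edge by edge. Only reducing edges (each $-\theta$) and formerly non-local edges of $\Pi_i\cap\Pi_{i+1}$ that became matched (each $+\theta$) matter, and the same injection finishes. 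That route needs no forward/backward classification, no positional case split, and no separate treatment of the vertex-disjoint case.
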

\begin{proof}
Let $\Pi_{i+1}$ be an arbitrary minimum net-cost augmenting path under $M_i$.
Assume $\Pi_{i+1}$ has the minimum number of edges among all such paths.
Let $\Gamma = \Pi_i \oplus \Pi_{i+1}$.
The symmetric difference $\Gamma$ can be decomposed into two augmenting paths $A_1$ and $A_2$ with respect to $M_{i-1}$, and a possibly empty set $\mathcal{C}$ of alternating cycles under $M_{i-1}$.
Since $\Pi_i$ is a minimum net cost augmenting path under $M_{i-1}$, both $\phi_{M_{i-1}}(A_1)$ and $\phi_{M_{i-1}}(A_2)$ are bounded below by $\phi_{M_{i-1}}(\Pi_i)$.
By Lemma~\ref{lem:aci}, $\phi_{M_{i-1}}(C) \ge 0$ holds for every cycle $C \in \mathcal{C}$.
Thus, the net cost of $\Gamma$ under $M_{i-1}$ satisfies $\phi_{M_{i-1}}(\Gamma) \ge 2\phi_{M_{i-1}}(\Pi_i)$.

We evaluate the difference $\phi_{M_{i-1}}(\Pi_i) + \phi_{M_i}(\Pi_{i+1}) - \phi_{M_{i-1}}(\Gamma)$ by partitioning the edges of $\Pi_i \cup \Pi_{i+1}$:
\begin{equation}\label{eq:cost_diff}
    \sum_{e \in \Pi_{i+1} \setminus \Pi_i} \left( \phi_{M_i}(e) - \phi_{M_{i-1}}(e) \right) + \sum_{e \in \Pi_i \cap \Pi_{i+1}} \left( \phi_{M_i}(e) + \phi_{M_{i-1}}(e) \right).
\end{equation}
Since $M_i = M_{i-1} \oplus \Pi_i$, the edges in $\Pi_{i+1} \setminus \Pi_i$ maintain identical matching statuses under $M_{i-1}$ and $M_i$.
Their adjusted costs decrease by $\theta$ under $M_i$ if and only if they are reducing edges.
Let $r_1$ be the number of reducing edges in $\Pi_{i+1} \setminus \Pi_i$.
This bounds the first sum in~\eqref{eq:cost_diff} below by $-r_1\theta$.

The edges in $\Pi_i \cap \Pi_{i+1}$ have opposite matching statuses under $M_{i-1}$ and $M_i$.
For any edge $e \in (\Pi_i \cap \Pi_{i+1}) \cap M_{i-1}$, it is unmatched under $M_i$.
This yields $\phi_{M_i}(e) + \phi_{M_{i-1}}(e) = \Phi_{M_i}(e) - \Phi_{M_{i-1}}(e) \ge 0$.
For any edge $e \in (\Pi_i \cap \Pi_{i+1}) \setminus M_{i-1}$, it is matched under $M_i$.
This yields $\phi_{M_i}(e) + \phi_{M_{i-1}}(e) = \Phi_{M_{i-1}}(e) - \Phi_{M_i}(e)$.
Since $e$ becomes matched and local under $M_i$, this difference exactly equals $\theta$ if $e$ was non-local under $M_{i-1}$, and is non-negative otherwise.
Let $r_2$ be the number of such non-local geometric edges in $(\Pi_i \cap \Pi_{i+1}) \setminus M_{i-1}$.
This bounds the second sum in~\eqref{eq:cost_diff} below by $r_2\theta$.
Combining these bounds yields $\phi_{M_{i-1}}(\Pi_i) + \phi_{M_i}(\Pi_{i+1}) - \phi_{M_{i-1}}(\Gamma) \ge (r_2 - r_1)\theta$.

Since adjusted costs decrease exclusively for geometric edges sharing a vertex with $\Pi_i$, every reducing edge in $\Pi_{i+1} \setminus \Pi_i$ is adjacent to at least one vertex of $\Pi_i$.
As $\Pi_{i+1}$ is an alternating path under $M_i$, the edge in $\Pi_{i+1} \cap M_i$ incident to this shared vertex lies entirely on $\Pi_i$.
This edge belongs to $\Pi_{i+1} \cap \Pi_i \cap M_i$, which exactly equals $(\Pi_i \cap \Pi_{i+1}) \setminus M_{i-1}$.
By Lemma~\ref{lem:reducing_edge_structure}, each edge in $(\Pi_i \cap \Pi_{i+1}) \setminus M_{i-1}$ is adjacent to at most one reducing edge in $\Pi_{i+1} \setminus \Pi_i$, and was non-local under $M_{i-1}$.
This injective mapping establishes $r_2 \ge r_1$.
Applying this to the cost difference yields $\phi_{M_{i-1}}(\Pi_i) + \phi_{M_i}(\Pi_{i+1}) - \phi_{M_{i-1}}(\Gamma) \ge 0$.

Rearranging this inequality and applying the lower bound on $\phi_{M_{i-1}}(\Gamma)$ yields $\phi_{M_{i-1}}(\Pi_i) + \phi_{M_i}(\Pi_{i+1}) \ge \phi_{M_{i-1}}(\Gamma) \ge 2\phi_{M_{i-1}}(\Pi_i)$.
This simplifies to $\phi_{M_i}(\Pi_{i+1}) \ge \phi_{M_{i-1}}(\Pi_i)$.
This establishes that any augmenting path under $M_i$ has a net cost of at least $\phi_{M_{i-1}}(\Pi_i)$.

We previously established that $\hat{\Pi}_i$ is an augmenting path under $M_i$ with net cost exactly $\phi_{M_{i-1}}(\Pi_i)$.
Thus, $\hat{\Pi}_i$ is a minimum net cost augmenting path under $M_i$.
Since $\hat{\Pi}_i$ consists entirely of geometric edges confined to the lower layer, it perfectly satisfies the definition of a symmetric augmenting path.
\end{proof}
By Lemma~\ref{lem:lsi_no_link}, we can select the symmetric augmenting path $\hat{\Pi}_i$ as the minimum net cost augmenting path for phase $i+1$.
Augmenting $M_i$ along $\hat{\Pi}_i$ changes the lower layer symmetrically to the upper layer.
This ensures the resulting matching $M_{i+1}$ is symmetric.

We now formally establish the LSI.

\lsi*
\begin{proof}
We proceed by induction on $i$.
The base case holds since the initial matching $M_0$ is empty.
For the inductive step, we assume the matching $M_{i-1}$ is symmetric.
Let $\Pi_i$ be a minimum net cost augmenting path computed under $M_{i-1}$.
If $\Pi_i$ contains at least one link edge, Lemma~\ref{lem:lsi_link} establishes that we can transform $\Pi_i$ into a symmetric minimum net cost augmenting path.
Augmenting along this symmetric path yields a symmetric matching $M_i$.
If $\Pi_i$ contains no link edge, it is confined to a single layer.
Lemma~\ref{lem:lsi_no_link} guarantees that the mirrored path of $\Pi_i$ is a minimum net cost augmenting path with respect to $M_i$.
By augmenting along this mirrored path in phase $i+1$, the algorithm yields a symmetric matching $M_{i+1}$.
In either case, the algorithm maintains a symmetric matching within at most two augmentation phases.
\end{proof}

We analyze the approximation bound of the algorithm.
Let $\tilde{M}_f$ be the matching in $\tilde{G}$ returned by the algorithm.
By Lemma~\ref{lem:lsi}, $\tilde{M}_f$ is symmetric.
We define two cost functions for a perfect matching $\tilde{M}$ in $\tilde{G}$.
Let $\tilde{\mathrm{w}}_{\sdist}(\tilde{M})$ denote the cost of $\tilde{M}$ under the split-tree-induced distance, defined as the sum of $\sdist(e)$ for its geometric edges and $\tilde{c}(e)$ for its link edges.
Let $\tilde{\mathrm{w}}(\tilde{M})$ denote the cost of $\tilde{M}$ under the original metric $d$, defined as the sum of $d(e)$ for its geometric edges and $\tilde{c}(e)$ for its link edges.
The following lemma establishes the approximation bound of the algorithm.

\approxbound*
\begin{proof}
Let $\tilde{M}^*$ be an optimal matching in $\tilde{G}$ such that $\tilde{\mathrm{w}}(\tilde{M}^*) = \tilde{W}^*$.
The symmetric difference $\tilde{M}_f \oplus \tilde{M}^*$ decomposes into a set $\mathcal{C}$ of vertex-disjoint alternating cycles.
By Lemma~\ref{lem:aci}, no alternating cycle in $\mathcal{C}$ has a negative net cost under $\tilde{M}_f$.
Summing the net costs of all cycles in $\mathcal{C}$ yields $\sum_{C \in \mathcal{C}} \phi_{\tilde{M}_f}(C) \ge 0$.
Because $\tilde{M}_f$ and $\tilde{M}^*$ are perfect matchings, expanding the net costs yields $\Phi_{\tilde{M}_f}(\tilde{M}^*) - \Phi_{\tilde{M}_f}(\tilde{M}_f) \ge 0$.

Every edge in $\tilde{M}_f$ is local under $\tilde{M}_f$.
Thus, its adjusted cost exactly equals its split-tree-induced cost.
This yields $\Phi_{\tilde{M}_f}(\tilde{M}_f) = \tilde{\mathrm{w}}_{\sdist}(\tilde{M}_f)$.
For the optimal matching $\tilde{M}^*$, the adjusted cost of each link edge equals its original cost.
The adjusted cost of each geometric edge in $\tilde{M}^*$ is bounded above by its split-tree-induced distance plus $\theta$.
Since a perfect matching in $\tilde{G}$ contains exactly $n$ edges, $\tilde{M}^*$ contains at most $n$ geometric edges.
This bounds the adjusted cost of $\tilde{M}^*$ by $\Phi_{\tilde{M}_f}(\tilde{M}^*) \le \tilde{\mathrm{w}}_{\sdist}(\tilde{M}^*) + n\theta$.
Substituting these values into the net cost inequality yields $\tilde{\mathrm{w}}_{\sdist}(\tilde{M}_f) \le \tilde{\mathrm{w}}_{\sdist}(\tilde{M}^*) + n\theta$.

By Lemma~\ref{lem:split-tree-distance-approx}, we obtain $\tilde{\mathrm{w}}(\tilde{M}_f) \le \tilde{\mathrm{w}}_{\sdist}(\tilde{M}_f)$ and $\mathbb{E}[\tilde{\mathrm{w}}_{\sdist}(\tilde{M}^*)] \le (1+\varepsilon/2)\tilde{W}^*$.
Taking the expectation on both sides yields $\mathbb{E}[\tilde{\mathrm{w}}(\tilde{M}_f)] \le (1+\varepsilon/2)\tilde{W}^* + n\theta$.
Since $\theta \le \frac{\varepsilon \tilde{W}^*}{3n}$, this implies $\mathbb{E}[\tilde{\mathrm{w}}(\tilde{M}_f)] \le (1 + 5\varepsilon/6)\tilde{W}^* \le (1+\varepsilon)\tilde{W}^*$.
\end{proof}
By Lemma~\ref{lem:reduction-pm-mcpm}, we can construct a bipartite matching with penalties $M_f$ for the original instance from $\tilde{M}_f$ in $O(n)$ time.
Since $\tilde{\mathrm{w}}(\tilde{M}_f) = 2\mathrm{w}(M_f)$ and $\tilde{W}^* = 2\mathrm{w}(M^*)$, Lemma~\ref{lem:approx_bound} implies $\mathbb{E}[\mathrm{w}(M_f)] \le (1+\varepsilon)\mathrm{w}(M^*)$.
This guarantees an expected $(1+\varepsilon)$-approximate optimal matching for the original problem.

\subsection{Time complexity}
We first bound the total length of the augmenting paths produced by the algorithm.
This bound is the key ingredient in the output-sensitive running-time analysis, since the data structure spends time proportional to the length of the extracted path.

We prove Lemma~\ref{lem:length-restriction} as follows.
\length*
\begin{proof}
Let $M_i=M_{i-1}\oplus \Pi_i$, with $M_0=\emptyset$.
Also, let $\tilde{M}^*$ be an optimal perfect matching in the prism graph under $\tilde{\mathrm{w}}$, with cost $\tilde{W}^*=\tilde{\mathrm{w}}(\tilde{M}^*)$.
For each path $\Pi_i$, let $\nu_i$ be the number of non-local geometric edges in $\Pi_i$.
By the definition of adjusted cost,
\begin{align}
\phi_{M_{i-1}}(\Pi_i)
=
\tilde{\mathrm{w}}_{\sdist}(M_i)-\tilde{\mathrm{w}}_{\sdist}(M_{i-1})+\theta\nu_i.
\label{eq:net-cost-weight-difference}
\end{align}
We first bound $\mathbb{E}[\sum_i \nu_i]$.
Fix an iteration $i$.
The symmetric difference $\tilde{M}^*\oplus M_{i-1}$ decomposes into alternating cycles and $n-i+1$ augmenting paths with respect to $M_{i-1}$.
Let these augmenting paths be $P_1,\ldots,P_{n-i+1}$.
For each $P_j$, we have $\phi_{M_{i-1}}(P_j)\le \Phi_{M_{i-1}}(P_j\setminus M_{i-1})$.
The edges in $\bigcup_j(P_j\setminus M_{i-1})$ are contained in $\tilde{M}^*$.
Therefore,
\begin{align}
\sum_{j=1}^{n-i+1}\phi_{M_{i-1}}(P_j)
\le
\tilde{\mathrm{w}}_{\sdist}(\tilde{M}^*)+n\theta.
\label{eq:path-net-upper-random}
\end{align}
Since $\Pi_i$ is chosen as a minimum-net-cost augmenting path, $\phi_{M_{i-1}}(\Pi_i)\le(\tilde{\mathrm{w}}_{\sdist}(\tilde{M}^*)+n\theta)/(n-i+1)$.
Summing this inequality over all iterations gives
\begin{align}
\sum_{i=1}^n\phi_{M_{i-1}}(\Pi_i)
\le
\left(\tilde{\mathrm{w}}_{\sdist}(\tilde{M}^*)+n\theta\right)H_n,
\label{eq:sum-net-cost-upper-random}
\end{align}
where $H_n$ is the $n$th harmonic number.
By Lemma~\ref{lem:split-tree-distance-approx}, we have $\mathbb{E}[\tilde{\mathrm{w}}_{\sdist}(\tilde{M}^*)]\le(1+\varepsilon/2)\tilde{W}^*$.
Taking expectation in~\eqref{eq:sum-net-cost-upper-random} and using $\theta\le \varepsilon \tilde{W}^*/(3n)$ gives $\mathbb{E}[\sum_{i=1}^n\phi_{M_{i-1}}(\Pi_i)]=O(\tilde{W}^*\log n)$.
On the other hand, summing~\eqref{eq:net-cost-weight-difference} over all iterations gives
\begin{align*}
\sum_{i=1}^n\phi_{M_{i-1}}(\Pi_i)
=
\tilde{\mathrm{w}}_{\sdist}(M_n)-\tilde{\mathrm{w}}_{\sdist}(M_0)+\theta\sum_{i=1}^n\nu_i.
\end{align*}
Since $M_n$ is a perfect matching and $\sdist$ dominates the original metric on geometric edges, we have $\tilde{\mathrm{w}}_{\sdist}(M_n)\ge \tilde{W}^*$.
Also $\tilde{\mathrm{w}}_{\sdist}(M_0)=0$.
Thus $\sum_{i=1}^n\phi_{M_{i-1}}(\Pi_i)\ge \tilde{W}^*+\theta\sum_{i=1}^n\nu_i$.
Taking expectation and combining this lower bound with the previous upper bound yields $\theta\cdot\mathbb{E}[\sum_{i=1}^n\nu_i]=O(\tilde{W}^*\log n)$.
Since $\theta\ge \varepsilon \tilde{W}^*/(6n)$, we obtain $\mathbb{E}[\sum_{i=1}^n\nu_i]=O((n/\varepsilon)\log n)$.
It remains to relate $\nu_i$ to $|\Pi_i|$.
By compactness, the number of local geometric edges in $\Pi_i$ is at most three times the number of non-local geometric edges, so the number of geometric edges in $\Pi_i$ is $O(\nu_i)$.
By the construction of the one-layer search and the LSI, each extracted prism-graph augmenting path contains only $O(1)$ link edges.
Thus $|\Pi_i|=O(\nu_i+1)$.
Therefore, $\mathbb{E}[\sum_{i=1}^n|\Pi_i|]=O(\mathbb{E}[\sum_{i=1}^n\nu_i]+n)=O((n/\varepsilon)\log n)$.
\end{proof}
Combining Lemma~\ref{lem:length-restriction} with Lemma~\ref{lem:data-structure-time-complexity}, the expected total time spent on all augmentations and data-structure updates is
\begin{align*}
\mathbb{E}\left[\sum_i |\Pi_i|\cdot\poly(\log\Delta,1/\varepsilon)\right]
=
O\left(n\log n \cdot\poly(\log\Delta,1/\varepsilon)\right).
\end{align*}
The split-tree construction takes $O(n\log\Delta)$ time and the initialization of the data structure takes $O(n\poly(\log\Delta,1/\varepsilon))$ time.
By the bounded-spread reduction in Section~\ref{sec:decompose_bounded_spread}, we may assume $\Delta=\poly(n,1/\varepsilon)$.
We abort an execution if it exceeds the time budget $O(n\poly(\log n,1/\varepsilon))$.
Since one repetition succeeds with constant probability, repeating the procedure independently $O(\log n)$ times gives success probability at least $1-1/n^{\Omega(1)}$.
Consequently, the algorithm runs in $O(n\poly(\log n,1/\varepsilon))$ time and returns a $(1+\varepsilon)$-approximate matching with high probability.
Combining these overall time bounds and the success probability with Lemma~\ref{lem:data-structure-time-complexity} establishes our main theorem.
\main*

\section{Data Structure}\label{sec:Data-Structure}
In this section, we prove Lemma~\ref{lem:data-structure-time-complexity}.
More specifically, we maintain a data structure for $M$ in the course of the algorithm such that  one can find 
a compact symmetric augmenting path $P$ and augment $M$ along $P$
in time near-linear in the number of edges of $P$.

\subparagraph*{Directed prism graph.}
We use a directed representation of the prism graph.
The purpose of this directed representation is to reduce the search for a minimum-net-cost augmenting path to the problem
of finding a \emph{shortest path} from a free vertex in $B_0\cup B_1$ to a free vertex in $R_0\cup R_1$ containing at most one link edge.
The weight of a directed path is defined so that it coincides with the net cost of the corresponding alternating path.

We maintain two directed instances at the same time.
The first one is the \emph{upper-to-lower instance}, and the second one is the \emph{lower-to-upper instance}.
The two instances use the same underlying prism graph and the same edge weights, but only the directions of the link edges are reversed.
Thus, it is enough to describe the upper-to-lower instance; the lower-to-upper instance is defined by reversing the directions of the link edges.
Given the current matching $M$, we orient the edges of the upper-to-lower instance as follows.

If a geometric edge $e=(r,b)\in (R_0\times B_0)\cup (R_1\times B_1)$ is local with respect to $M$, then we direct it from $r$ to $b$ and assign it weight $-\Phi_M(e)$.
Otherwise, we direct it from $b$ to $r$ and assign it weight $\Phi_M(e)$.
The link edges are directed from the upper layer to the lower layer.
For the lower-to-upper instance, the link edges are directed from the lower layer to the upper layer.
The weight of the directed link edge is $\Phi_M(e)$ if it is not matched, and $-\Phi_M(e)$ otherwise.
As in~\cite{DBLP:journals/jacm/RaghvendraA20}, a directed path in this graph can be converted into a compact alternating path with the same net cost.
\begin{lemma}
\label{lem:directed-path-to-compact-path}
Any shortest directed path from a free vertex in $B_0\cup B_1$ to a free vertex in $R_0\cup R_1$ can be converted into a compact augmenting path with the same net cost.
\end{lemma}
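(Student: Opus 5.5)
The plan is to follow the Raghvendra--Agarwal argument in two stages. First, a directed path $P$ is mapped to an alternating walk with the same net cost. Then that walk is shortcut into a simple, compact augmenting path without raising the net cost.

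\textbf{Stage 1: directed path $\to$ alternating walk.} Let $P$ be a shortest directed path from a free vertex $s\in B_0\cup B_1$ to a free vertex $t\in R_0\cup R_1$.
\begin{itemize}
\item The orientation rule sends non-local edges from the $B$-side to the $R$-side with weight $+\Phi_M$. It sends local edges from the $R$-side to the $B$-side with weight $-\Phi_M$. Link edges get weight $\pm\Phi_M$ according to matching status.
\item So the weight of $P$ equals $\Phi_M$ of the edges treated as ``entering'' minus $\Phi_M$ of the edges treated as ``leaving''.
\item A local edge on $P$ need not actually be a matching edge. It only lies in some class $K\in\mathcal K_M$ with $r\in R_K$, $b\in B_K$, and all edges of $R_K\times B_K$ share the same $\sdist$ value.
\item Fix a maximal run of consecutive local edges inside one class $K$. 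It enters $K$ at some $r\in R_K$ and leaves at some $b\in B_K$. I replace the run by the alternating segment $r\to b_r\to r_b\to b$, where $b_r=M(r)$ and $r_b=M(b)$. This uses matched edges $(r,b_r)$ and $(r_b,b)$ and one local non-matching edge $(b_r,r_b)$, all with the same $\sdist$.
\item The run itself has an odd number of local edges (it goes from the $R$-side to the $B$-side), and all of them carry weight $-\sdist_K$, so its weight is $-(2m+1)\sdist_K$. The replacement has weight $-\sdist_K+\sdist_K-\sdist_K=-\sdist_K$.
\item I have to check which way this changes the cost. If it could go up, I instead argue that by shortest-path optimality a run of length at least $3$ would create a negative cycle, which is impossible: the weight-$0$ cycles $r\to b\to r'\to b'\to r$ within a class are nonpositive, and shortest paths are well defined because ACI forbids negative cycles in the directed graph. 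Then every run in a shortest path can be taken to have length $1$ without changing the weight.
\item At every vertex of $P$, the alternation between non-local (unmatched) edges and matched edges then gives an alternating walk from $s$ to $t$ with the same net cost.
\end{itemize}

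\textbf{Stage 2: walk $\to$ simple compact augmenting path.}
\begin{itemize}
\item The walk may repeat vertices. Decompose it into a simple augmenting path plus alternating cycles.
\item By Lemma~\ref{lem:aci}, every cycle has nonnegative net cost, so deleting the cycles does not increase the net cost.
\item The result cannot be strictly cheaper than $P$. Otherwise the simple augmenting path would itself be realized by a directed path of smaller weight, because every alternating path orients consistently in the directed graph: unmatched non-local edges go $B\to R$, matched edges go $R\to B$. This would contradict minimality of $P$. So the net cost is equal.
\item For compactness, after the run-shortening each maximal local segment has at most three edges. Two local segments cannot be consecutive without a non-local edge or a link edge between them.
\item Charge each segment to the non-local edge preceding it. The first segment after a link edge or after $s$ needs separate handling; but $s$ is free, so the first edge out of $s$ is non-local, and I plan to argue in the same way at link edges.
\item This gives $|\Pi_1|\le 3|\Pi_2|$.
\end{itemize}

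\textbf{Main obstacle.} The delicate step is the interaction between run shortening and cycle removal. Removing cycles must preserve simplicity without destroying the $\le 3$ bound per segment, and one must confirm that the decomposed simple path still corresponds to a directed path of equal weight.

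I would handle this as follows. Among all shortest directed paths, pick one with the fewest edges; this makes it simple and gives runs of length $1$ directly, since any longer run can be shortcut by a direct local edge of equal weight. Then apply the $3$-edge expansion only once at the end. Distinct runs expand into distinct matched edges, except that two runs in the same class could reuse matched edges. I would exclude that case by the same zero-cycle shortcut argument used in Lemma~\ref{lem:reducing_edge_structure}.
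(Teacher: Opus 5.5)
Your core step is exactly the paper's proof. Each unmatched local directed edge $r\to b$ is replaced by $r\to M(r)\to M(b)\to b$, whose weight $-\Phi_K+\Phi_K-\Phi_K$ equals that of the directed edge. Here $\Phi_K$ is the common adjusted cost on $R_K\times B_K$. Matched local edges, non-local edges and link edges are kept as they are. Your closing plan (take a fewest-edge shortest path, expand once, exclude collisions) is sound and more careful than the paper, which never checks that the expanded walk is simple. However, several of your supporting arguments are wrong.

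\textbf{Runs and Stage~2.} The ``maximal runs'' analysis misreads the orientation. Every local edge is directed from the $R$-side to the $B$-side, and every edge leaving a $B$-side vertex is non-local or a link edge. So two local edges are never consecutive on a directed path, and runs always have length one. You should drop the weight $-(2m+1)\sdist_K$ and the ``weight-$0$ cycles'' $r\to b\to r'\to b'\to r$, which would traverse local edges backwards and do not exist in the digraph. Also drop the claim that a long run is shortcut ``by a direct local edge of equal weight'', which contradicts your own computation.

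The same orientation fact gives compactness directly. Each local edge is flanked on both sides by non-local or link edges, since free endpoints have no local edges, so $k$ local edges need at least $k+1$ such edges around them. In a fixed instance all link edges point from one layer to the other, so a directed path uses at most one link edge. Hence the number of local directed edges is at most the number of non-local ones, and after expansion $|\Pi_1|\le 3|\Pi_2|$, with no special argument at link edges.

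In Stage~2, the claim that every alternating path is oriented consistently in the digraph fails for unmatched local edges such as $(b_r,r_b)$, which your own expansion creates: they are traversed $B\to R$ but directed $R\to B$. Cycle removal can also splice local stretches into ones longer than three. Your final plan makes Stage~2 unnecessary, so discard it.

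\textbf{Collisions.} This step is not the zero-cycle argument of Lemma~\ref{lem:reducing_edge_structure}; you need ACI for an arbitrary cycle. Write $w$ for directed weight, and suppose $r\to b$ and, later, $r'\to b'$ are local directed edges of the same class $K$ on $P$. Replacing the subpath $r\to b\leadsto r'\to b'$ by the local edge $r\to b'$ changes the weight by $\Phi_K-w(b\leadsto r')$. This is minus the weight of the directed cycle $b\leadsto r'\to b$.

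Every directed cycle lies in one layer and alternates non-local and local edges. Expanding it yields an alternating closed walk of equal net cost, which splits into alternating cycles of nonnegative net cost by ACI. So the shortcut is no heavier and has strictly fewer edges, and a fewest-edge shortest path contains at most one local directed edge per class.

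Now $M(r)$ (resp.\ $M(b)$) could lie on $P$ only as the head (resp.\ tail) of another local edge of class $K$. Hence no vertex introduced by an expansion is already on $P$, and distinct expansions lie in distinct classes, so they share no vertices. The expanded path is therefore a simple compact augmenting path with the same net cost.
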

\begin{proof}
Consider a local directed edge on the path.
If this edge already belongs to the current matching, we keep it as the corresponding matched edge of the alternating path.
Otherwise, by the definition of locality, both endpoints of this edge are incident to matched geometric edges in the same equivalence class.
We then replace the local directed edge by the length-three alternating subpath consisting of these two matched edges and the local edge between their matched partners.
Since all geometric edges within the same equivalence class share the same least common ancestor cluster and refined descendants, they have identical adjusted costs, thereby ensuring that this replacement preserves the net cost.
Applying this replacement to every local directed edge, and keeping all non-local geometric edges and link edges unchanged, yields a compact augmenting path whose net cost is equal to the cost of the directed path.
\end{proof}

From a simple observation, a symmetric augmenting path whose endpoints lie in $B_0$ and $R_1$ corresponds to a directed path in the upper-to-lower instance.
Similarly, a symmetric augmenting path whose endpoints lie in $B_1$ and $R_0$ corresponds to a directed path in the lower-to-upper instance.
Therefore, by maintaining both the upper-to-lower and lower-to-upper instances, we can search for symmetric augmenting paths by solving shortest-path problems in the two directed prism graph instances.

We do not explicitly represent all matched link edges in each directed instance.
Consider the upper-to-lower instance.
Suppose that a matched link edge has its upper endpoint in $B_0$ and its lower endpoint in $R_1$.
This edge cannot be traversed by any directed path in this instance, because its upper endpoint is not reachable through any directed geometric edge.
Indeed, no local geometric edge is incident to this endpoint, and every non-local geometric edge incident to it is directed away from it.
Thus, in the upper-to-lower instance, only matched link edges whose upper endpoint lies in $R_0$ can be reached.
The lower-to-upper instance is symmetric.
The following observation summarizes the reachable matched link edges in the two directed instances.

\begin{observation}\label{obs:matched-link-edge-one-side}
In the upper-to-lower instance, it is sufficient to represent only the matched link edges from $R_0$ to $B_1$.
In the lower-to-upper instance, it is sufficient to represent only the matched link edges from $R_1$ to $B_0$.
\end{observation}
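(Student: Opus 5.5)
The plan is to fix the upper-to-lower instance (the lower-to-upper one is symmetric, with the roles of the layers exchanged) and argue that no directed path from a free vertex of $B_0\cup B_1$ can traverse a matched link edge whose upper endpoint lies in $B_0$. Such an edge is $(b_1,b_0)$ for some $b\in B$, with $b_0\in B_0$ and $b_1\in R_1$. In the upper-to-lower instance it is oriented from $b_0$ to $b_1$, so a path can only use it by first entering $b_0$. It therefore suffices to show that $b_0$ has no incoming arc other than possibly from outside the represented edges, and that $b_0$ is not itself a start vertex.

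First, $b_0$ is not a start vertex, because it is matched by the link edge and so is not free. Next, I enumerate the arcs incident to $b_0$.
\begin{itemize}
\item \emph{Link edges.} The only link edge at $b_0$ is $(b_1,b_0)$ itself, and it leaves $b_0$.
\item \emph{Local geometric edges.} A local edge $(r,b_0)$ would need $b_0\in B_K$ for some class $K\in\mathcal{K}_M$, which means $b_0$ is an endpoint of a matched geometric edge. But $b_0$ is matched by a link edge, so it lies in no class and no local edge is incident to it.
\item \emph{Non-local geometric edges.} Every geometric edge at $b_0$ is therefore non-local, and non-local edges are directed from the $B$-side to the $R$-side, that is, out of $b_0$.
\end{itemize}
Hence $b_0$ has in-degree zero, so it is unreachable and the edge $(b_1,b_0)$ can never lie on a directed path. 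It can be omitted without changing any shortest path, and in particular without changing the path found by the search behind Lemma~\ref{lem:directed-path-to-compact-path}.

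The remaining matched link edges in this instance are those with upper endpoint in $R_0$, namely the edges $(r_0,r_1)$ oriented from $R_0$ to $B_1$, and these are kept. The lower-to-upper case repeats the argument with the layers swapped. There, a matched link edge $(r_0,r_1)$ is oriented from $r_1\in B_1$ to $r_0$, and $r_1$ is unreachable by the same three checks, so only the edges from $R_1$ to $B_0$ need to be represented.

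The only delicate point is confirming that a link-matched vertex belongs to no equivalence class. This follows because $\mathcal{K}_M$ partitions only the matched geometric edges, and each vertex is incident to at most one matched edge.
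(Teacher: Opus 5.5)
Your proof is correct and follows the paper's own argument. The upper endpoint $b_0\in B_0$ of such a matched link edge lies in no equivalence class, so no local geometric edge enters it, and every non-local geometric edge is directed away from it; hence $b_0$ is unreachable, and the lower-to-upper case is symmetric. Your explicit checks that $b_0$ is not a free start vertex and that its only link edge points outward are small completeness details the paper leaves implicit.
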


\subparagraph*{Reduction to a vertex-weighted instance.}
Thus in the following, we focus on the upper-to-lower layer instance.
Now the goal is reduced to finding two shortest paths in the upper-to-lower instance:
(i) a shortest path from a free vertex in $B_0$ to a free vertex in $R_0$ contained in the upper layer, or (ii) 
a symmetric shortest path from a free vertex in $B_0$ to a free vertex in $R_1$.
A shortest path belonging to the first type can be computed exactly the same as in~\cite{DBLP:journals/jacm/RaghvendraA20} as it is an augmenting path fully contained in the upper layer.

Thus focus on a shortest path $\pi$ belonging to the second type, and let $e_\text{L}=(v_0,v_1)$ be the unique link edge in $\pi$ with $v_0\in R_0\cup B_0$
and $v_1\in R_1\cup B_1$.
The weight of $\pi$ is exactly $2\cdot (\ell(\pi_0)+\ell(e_\text{L})/2)$,
where $\ell(\cdot)$ denotes the length of a path (or an edge),
and $\pi_0$ denotes the maximal subpath of $\pi$ contained in the upper layer.
Here is an alternative view for our problem: we can view $\ell(e_L)/2$ as the \emph{vertex-weight} of $v_0$. Imagine that we add the copy of each point of $R_0\cup B_0$ at the same location with its corresponding vertex-weight 
while keeping the original points of $R_0\cup B_0$. Then we add all possible directed edges to $v_0$.
The weight of a path ending at a weighted point is
the sum of the weight of its edges and the weight of the endpoint.
Then computing a shortest path belonging to the second type is equivalent
to computing a shortest path in the upper layer from a free vertex in $B_0$ to any weighted point in $R_0\cup B_0$.
In this way, we can focus on the upper layer only.
In other words, we now focus on the geometric edges only.
Therefore, we can show that a slight modification of the data structure of~\cite{DBLP:journals/jacm/RaghvendraA20} is sufficient to address the vertex-weights.
By a slight abuse of notation, we write $R_0=R$ and $B_0=B$.

\subparagraph*{The data structure of Raghvendra and Agarwal.}
We briefly describe the data structure of Raghvendra and Agarwal~\cite{DBLP:journals/jacm/RaghvendraA20}, adapted to our split-tree setting\footnote{Their original construction is stated for Euclidean metrics using randomly shifted quadtrees. In doubling metrics, the same approach can be implemented by replacing the shifted quadtree with the randomized split-tree used to define the split-tree-induced distance.}, and briefly explain why it can be used for our purpose with modification.

Consider the randomized split-tree
$\{\mathcal P_0,\ldots,\mathcal P_{\delta+1}\}$ used to define the split-tree-induced
distances. Recall that this hierarchy can be viewed as a rooted tree whose root
is the unique cluster in $\mathcal P_{\delta+1}$. 
The data structure of Raghvendra and Agarwal~\cite{DBLP:journals/jacm/RaghvendraA20} works as follows.
For each cluster $A$ in the split-tree, 
recall that $\mathcal D[A]$ is a partition of the points in $A$ defined by descendants of $A$ that lie $\omega$ levels below $A$. They partition it further into \emph{entry} subclusters and \emph{exit} subclusters. 
If any subcluster contains a free vertex of $R$ (and $B$),
it is an exit (and entry) subcluster consisting only of
free vertices of $R$ (and $B$).

Then for each split-tree cluster $A$, they maintain an
edge-weighted graph $G(A)$, called the \emph{compressed graph}.
Each vertex $v$ of
$G(A)$ corresponds to a subcluster $A_v$  associated with a child of $A$. 
The directed edges of $G(A)$ are defined as follows.
For an entry subcluster $A_u$ and an exit subcluster $A_v$ contained in the same child $A'$ of $A$,
we add a directed edge from $u$ to $v$, and call this edge an \emph{interior edge}.
Its weight is the length of a shortest path, with respect to the 
adjusted edge weights in the directed prism graph, consisting of vertices of $A'$ from a point in $A_u$ to a
point in $A_v$.
For any two subclusters $A_u$ and $A_v$ coming from different children of $A$,
we add a directed edge from $u$ to $v$
if there is a directed edge from $A_u$ to $A_v$
in the directed prism graph. In this case, every vertex in $A_u$ has an outgoing edge to every vertex in $A_v$ in the directed prism graph. 
The weights of all these edges are all the same. 
We call this edge $uv$ a \emph{bridge edge}
and its weight is defined as the weight 
of the corresponding edges in the directed prism graph.
The complexity of $G(A)$ is $O(\poly(\log n,1/\varepsilon))$. 
Then they showed how to construct all compressed graphs in a bottom-up fashion.
In doing so, we compute all-pairs shortest paths in each compressed  graph $G(A)$. This can be done in near-linear time for all clusters as each compressed graph has complexity of $O(\poly(\log n,1/\varepsilon))$.

Then the query algorithm works in a top-down fashion as follows.
Starting from the root of the split-tree, it recovers a shortest path as follows. For the compressed graph $G(A)$ stored in the root,
it computes a shortest path $\pi$ in $G(A)$ between two vertices, one representing an entry subcluster containing free vertices in $B$ and one representing an exit cluster containing free vertices in $R$.
Recall that all-pairs shortest paths are already stored in the data structure, and thus we can find the shortest path $\pi$ efficiently.
An edge of $\pi$ is either a bridge edge or an interior edge. 
To recover the path represented by each interior edge, we recursively enter the child of $A$ containing the path. We 
repeat this until we reach leaves.
In this way the paths represented by the interior edges are recovered. 
For a bridge edge $uv$, we can pick any edge from a vertex in $A_u$ to a vertex in $A_v$. 
Observe that no two interior edges appear consecutively along $\pi$.
Therefore, 
we can obtain a shortest path between a free vertex in $B$ and a free vertex in $R$ in a consistent way.

\subparagraph*{Our modification.}
Now we briefly explain how to modify this data structure for our purpose.
Recall that our goal is to compute a shortest path from a free vertex in $B$ to a weighted non-free vertex in $B\cup R$. We call the weighted non-free vertices the \emph{terminal vertices}.
Recall that, for every compressed graph $G(A)$, 
we have $O(\poly(\log n,1/\varepsilon))$ vertices.
We copy a vertex $v$ of $G(A)$ whose corresponding subcluster contains a non-free vertex in $B\cup R$, and then we call each new vertex $v'$ a \emph{terminal node} of $G(A)$.
In addition to the original edges,
we add a directed edge $uv'$ to $G(A)$ for each original directed edge $uv$ of $G(A)$. 
If $uv$ is a bridge (or interior) edge, so is $uv'$.
Let $A_v$ be the subcluster represented by $v$. 
For a bridge edge $uv'$,
recall that any vertex in $A_u$ and any vertex in $A_v$ have the same split-tree-induced distance. 
Then the weight of $uv'$ is set as this distance plus the minimum weight of the vertices of $A_v$.
For an interior edge $uv'$,
let $A'$ be the child of $A$ containing both $A_u$ and $A_v$.
The weight of $uv'$ is the length of a shortest path, with respect to the 
adjusted edge weights in the directed prism graph, consisting of vertices of $A'$ from a point in $A_u$ to a \emph{weighted}
point in $A_v$.
Then the size of each compressed graph remains the same asymptotically, and the graph can be constructed in a bottom-up fashion as in~\cite{DBLP:journals/jacm/RaghvendraA20}.

Then the query algorithm works in a top-down fashion as follows.
Starting from the root of the split-tree, it recovers a shortest path as follows. For the compressed graph $G(A)$ stored in the root,
it computes a shortest path $\pi$ in $G(A)$ from a node representing an entry cluster 
to a terminal node.
By construction, $\pi$ contains a terminal vertex only at an endpoint.
We first recover the path represented by each interior edge, and then we recursively enter the child of $A$ containing the path. We 
repeat this until we reach leaves.
If we have a bridge edge $uv'$ in $\pi$ where $v'$ is a terminal node, 
we choose a vertex from $A_{v}$ with minimum weight,
and we may choose any vertex from $A_u$.
As before, no two interior edges appear consecutively along $\pi$.
Therefore, all paths represented by every edge of $\pi$ can be recovered in a consistent way.
In the following subsections, we describe the implementation of the data structure outlined above in detail.

\subsection{Hierarchical Clustering}
We now describe the hierarchical clustering for one fixed geometric layer.
The same construction is applied independently to the other geometric layer.
Both layers use the same randomized split-tree, where each copy $u_i$ is identified with the location of its original point $u$.

Let $A$ be a cluster of the split-tree.
We write $R_A$ and $B_A$ for the $R$-side and $B$-side vertices of the fixed layer whose locations lie in $A$.
Let $R_F$ and $B_F$ denote the sets of free $R$-side and $B$-side vertices under the current matching $M$.
Recall that $\mathcal D[A]$ is the family of descendants of $A$ that lie $\omega$ levels below $A$.

We next define the external refined descendants of a split-tree cluster $A$.
The purpose of this family is to classify vertices in $A$ whose matching partners lie outside $A$ according to the refined descendant that contains the partner.
Intuitively, if a point in $A$ is matched to a point outside $A$, then the partner must lie in a sibling branch of some ancestor of $A$.
Therefore, it is enough to collect the refined descendants of all such sibling branches.
This gives a bounded set of possible external locations for matching partners of vertices in $A$.
The external refined descendants $\mathcal N^*(A)$ is defined as follows.
For each ancestor $A'$ of $A$, consider the child of $A'$ that contains $A$.
For every other child $S$ of $A'$, we add the refined descendants of $S$.
Formally,
\begin{align*}
    \mathcal N^*(A)
    :=
    \bigcup_{A'\succeq A}
    \ \bigcup_{\substack{S\in \mathrm{ch}(A')\\ S\cap A=\emptyset}}
    \mathcal D[S].
\end{align*}
Here $A'\succeq A$ means that $A'$ is an ancestor of $A$, and $\mathrm{ch}(A')$ denotes the children of $A'$.
Since the split-tree has depth $O(\log\Delta)$ and each cluster has at most $2^{O(\ddim)}$ children, $|\mathcal N^*(A)| \le O(\log\Delta)\cdot 2^{O(\ddim)\cdot \omega} = \poly(\log\Delta,1/\varepsilon),$ for constant $\ddim$.

\subparagraph*{Hierarchical clustering.}
For each refined descendant $\xi\in\mathcal D[A]$, let $R_\xi:=R\cap \xi$ and $B_\xi:=B\cap \xi$.
We partition the points in $R_\xi$ and $B_\xi$ according to their matching status.
This gives four types of subclusters, namely free, internal, boundary, and link-matched subclusters.
The free subclusters are the sets $R_\xi^F:=R_F\cap R_\xi$ and $B_\xi^F:=B_F\cap B_\xi$.
The internal subclusters, consisting of points whose matching partners remain inside $A$, are
\begin{align*}
    R_\xi^I
    &:=
    \{r\in R_\xi \mid (r,b)\in M,\ b\in B_A\},&
    B_\xi^I
    &:=
    \{b\in B_\xi \mid (r,b)\in M,\ r\in R_A\}.
\end{align*}

For every external refined descendant $\eta\in\mathcal N^*(A)$, let $R_\eta:=R\cap\eta$ and $B_\eta:=B\cap\eta$.
The boundary subclusters, which are defined according to the refined external location of the matching partner, are
\begin{align*}
    R_\xi^\eta
    &:=
    \{r\in R_\xi \mid (r,b)\in M,\ b\in B_\eta\},&
    B_\xi^\eta
    &:=
    \{b\in B_\xi \mid (r,b)\in M,\ r\in R_\eta\}.
\end{align*}
Finally, we define the link-matched subclusters.
A vertex is called link-matched if it is matched via a link edge in $M$.
To represent link-matched vertices in the fixed geometric layer, we define
$R_\xi^L:=\{r\in R_\xi \mid r \text{ is matched via a link edge in } M\}$.
By the above observation~\ref{obs:matched-link-edge-one-side}, it suffices to store these link-matched states only on the $R$-side.
See Figure~\ref{fig:hierarchical_clustering}.

For each fixed refined descendant $\xi\in\mathcal D[A]$, we create a constant number of free, internal, and link-matched subclusters, and  $O(|\mathcal N^*(A)|)$ boundary subclusters.
Therefore, the total number of subclusters associated with $A$ is
$O(|\mathcal D[A]|\cdot(1+|\mathcal N^*(A)|))=\poly(\log\Delta,1/\varepsilon)$.

\begin{figure}
    \centering
    \includegraphics[width=0.6\textwidth]{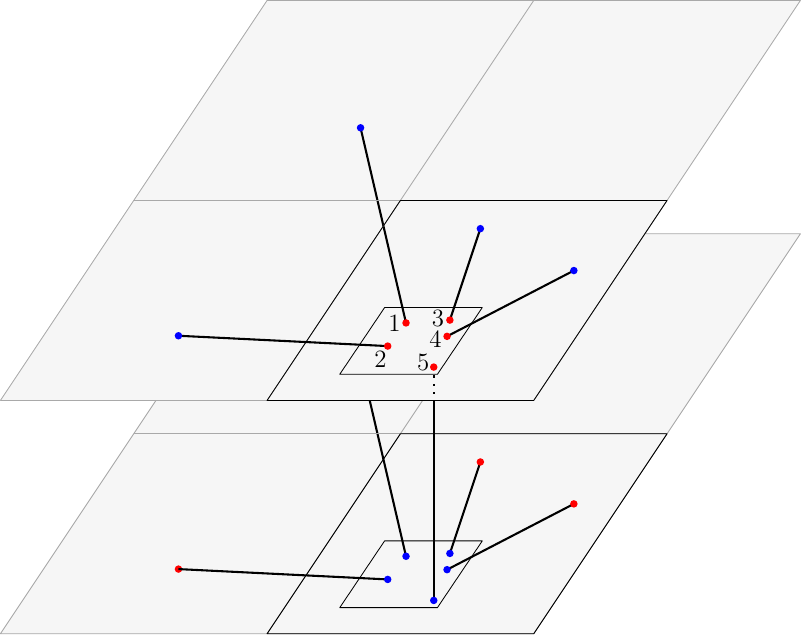}
    \caption{Hierarchical clustering for the cluster outlined in black in the upper layer. For simplicity, the figure is drawn in the Euclidean plane using quadtree cells. The singleton sets $\{1\}$ and $\{2\}$ are boundary subclusters, $\{3,4\}$ is an internal subcluster, and the singleton set $\{5\}$ is a link-matched subcluster.}
    \label{fig:hierarchical_clustering}
\end{figure}

The above clustering ensures uniform behavior of geometric edges between subclusters lying in different children of the same split-tree cluster.
Let $A_1$ and $A_2$ be two distinct children of a split-tree cluster $A$, and let $X$ and $Y$ be subclusters of $A_1$ and $A_2$, respectively.
For every $x\in X$ and $y\in Y$, the least common ancestor cluster is $A$, and the refined descendants used to define $\sdist(x,y)$ are fixed by $X$ and $Y$.
Therefore, all edges in $X\times Y$ have the same split-tree-induced distance.
Moreover, since the subclusters are defined according to the matching status and the refined location of external partners, every edge in $X\times Y$ is either local or non-local simultaneously.
Thus, all edges in $X\times Y$ have the same direction in the directed prism graph.

\subparagraph*{Hierarchical consistency.}
We next state the consistency of subclusters across consecutive levels of the split-tree.
This consistency is used to construct the compressed graph of a cluster from the information stored at its children.
Let $A_1,\ldots,A_t$ be the children of $A$ in the split-tree.
For a refined descendant $\xi\in\mathcal D[A]$, let $\mathcal D_A(\xi)$ denote the set of refined descendants of the children of $A$ that are contained in $\xi$:
\begin{align*}
    \mathcal D_A(\xi)
    :=
    \{\xi'\in \bigcup_{s=1}^t \mathcal D[A_s]\mid \xi'\subseteq \xi\}.
\end{align*}

The free, boundary, and link-matched subclusters of $A$ are obtained as unions of the corresponding child clusters:
\begin{align*}
    R_\xi^F &= \bigcup_{\xi'\in\mathcal D_A(\xi)} R_{\xi'}^F,
    &
    B_\xi^F &= \bigcup_{\xi'\in\mathcal D_A(\xi)} B_{\xi'}^F,\\
    R_\xi^\eta &= \bigcup_{\xi'\in\mathcal D_A(\xi)} R_{\xi'}^\eta,
    &
    B_\xi^\eta &= \bigcup_{\xi'\in\mathcal D_A(\xi)} B_{\xi'}^\eta,\\
    R_\xi^L &= \bigcup_{\xi'\in\mathcal D_A(\xi)} R_{\xi'}^L.
\end{align*}

The only remaining case is the internal subclusters.
A vertex in a boundary subcluster of a child $A_s$ becomes internal with respect to $A$ if its matching partner lies in another child of $A$.
For each $\xi'\in\mathcal D[A_s]$, define
$\mathcal N_A(\xi'):=\bigcup_{\substack{A_i\in\mathrm{ch}(A)\\ A_i\neq A_s}}\mathcal D[A_i].$

Then the internal subclusters are
\begin{align*}
    R_\xi^I
    &=
    \bigcup_{\xi'\in\mathcal D_A(\xi)}
    \left(
        R_{\xi'}^I
        \cup
        \bigcup_{\zeta\in\mathcal N_A(\xi')} R_{\xi'}^\zeta
    \right),
    &B_\xi^I
    =
    \bigcup_{\xi'\in\mathcal D_A(\xi)}
    \left(
        B_{\xi'}^I
        \cup
        \bigcup_{\zeta\in\mathcal N_A(\xi')} B_{\xi'}^\zeta
    \right).
\end{align*}
Thus, once the child-level subclusters have been updated, the subclusters of $A$ are determined by the formulas above.

\subsection{Compressed Graph}\label{subsec:Compressed-graph}
In this section, we construct a compressed graph for each split-tree cluster $A$.
Its vertices are the subclusters of the children of $A$.
The graph edges compactly represent directed paths inside $A$: interior edges summarize minimum-weight directed paths inside a child cluster, while bridge edges summarize single geometric transitions between different child branches.

\subparagraph*{Entry and exit subclusters.}
We define the entry subclusters of $A$ as
$\mathcal X_A^{\downarrow}:=\{\,B_\xi^F,\ R_\xi^I,\ B_\xi^\eta
\mid \xi\in\mathcal D[A],\ \eta\in\mathcal N^*(A)\,\}$.
Similarly, the exit subclusters are
$\mathcal X_A^{\uparrow}:=\{\,R_\xi^F,\ B_\xi^I,\ R_\xi^\eta,\ R_\xi^L
\mid \xi\in\mathcal D[A],\ \eta\in\mathcal N^*(A)\,\}$.
These definitions specify the possible entry and exit subclusters of a maximal directed subpath contained in $A$.
Let $\vec{\Pi}$ be a directed path of a fixed geometric layer, and let $\pi$ be a maximal connected subpath of $\vec{\Pi}$ that lies inside $A$.
If $\pi$ contains at least one geometric edge, then its first endpoint lies in $\mathcal X_A^{\downarrow}$ and its last endpoint lies in $\mathcal X_A^{\uparrow}$.
The possible entry subclusters are characterized by the first point of the maximal subpath:
\begin{enumerate}
    \item If the subpath starts at a source, then its entry point lies in a free $B$-subcluster.
    \item If the first edge of the subpath is local, then its entry point lies in an internal $R$-subcluster.
    \item If the first edge of the subpath is non-local, then its entry point lies in a boundary $B$-subcluster.
\end{enumerate}
Symmetrically, the exit point lies in a free $R$-subcluster, an internal $B$-subcluster, a boundary $R$-subcluster, or a link-matched $R$-subcluster.
Hence every maximal subpath inside $A$ is represented by a transition from an entry subcluster to an exit subcluster.

\subparagraph*{Link-crossing state.}
We also need to handle the case in which an augmenting path crosses a link edge.
For this purpose, we introduce auxiliary \emph{link-crossing states} in addition to entry and exit subclusters.
They represent the event that a directed path terminates in the current layer and then crosses a link edge to the other layer of the prism graph.

As explained in Section~\ref{sec:Data-Structure}, when the augmenting path contains a link edge $e_L=(v_0,v_1)$, the contribution of this link edge can be represented as a vertex weight of value $\ell(e_L)/2$ assigned to the vertex.
This value is $+p(v)$ if the link edge is unmatched, and $-p(v)$ if the link edge is matched.
A link-crossing state is introduced to account for this vertex contribution.
It therefore plays the same role as the terminal node used in the overview of Section~\ref{sec:Data-Structure}, and is used later as a vertex of the compressed graph.

There are two types of link-crossing states.
First, for every non-free $B$-side subcluster $Y$, we introduce a positive link-crossing state $Y^+$.
This state represents crossing an unmatched link edge at some vertex $v\in Y$.
Since the directed weight of the corresponding link edge is $+2p(v)$, the one-layer value stored at $Y^+$ includes the vertex contribution $+p(v)$.
Second, if $Y$ is a link-matched $R$-side subcluster, in particular $Y=R_\xi^L$, we introduce a negative link-crossing state $Y^-$.\footnote{In Section~\ref{sec:Data-Structure}, we describe the modification as copying all non-free vertices for simplicity. In the actual data structure, only link-matched vertices need to be copied on the $R$-side.}
This state represents crossing a matched link edge at some vertex $v\in Y$.
Since the directed weight of the corresponding link edge is $-2p(v)$, the one-layer value stored at $Y^-$ includes the vertex contribution $-p(v)$.

For each entry subcluster $X\in\mathcal X_A^\downarrow$ and each exit subcluster $Y\in\mathcal X_A^\uparrow$, we maintain a value $\psi_A(X,Y)$,
which denotes the minimum one-layer directed path weight from $X$ to $Y$ inside $A$.
Whenever the corresponding link-crossing states are defined, we also maintain $\psi_A(X,Y^+)$ and $\psi_A(X,Y^-)$, where
\begin{align*}
    \psi_A(X,Y^+)
    &:=
    \min_{v\in Y}\{\mu(X\leadsto v)+p(v)\},
    &\psi_A(X,Y^-)
    :=
    \min_{v\in Y}\{\mu(X\leadsto v)-p(v)\}.
\end{align*}
Here $\mu(X\leadsto v)$ denotes the minimum weight accumulated before crossing the link edge at $v$.
Thus, positive link-crossing states require the minimum penalty value in $Y$, while negative link-crossing states require the maximum penalty value in $Y$.
For each subcluster $Y$, we maintain vertices attaining $\min_{v\in Y}p(v)$ and $\max_{v\in Y}p(v)$, together with the corresponding penalty values.
These witnesses are used when a compressed path terminates at $Y^+$ or $Y^-$.
The factor of $2$ in the prism-graph link cost is accounted for when the resulting root-level value is doubled to obtain the weight of the corresponding symmetric alternating path.

\subparagraph*{Compressed graph.}
We now define the compressed graph associated with a split-tree cluster $A$.
Let $A_1,\ldots,A_t$ be the children of $A$.
For each child $A_s$, let $\mathcal C_{A_s}:=\mathcal X_{A_s}^{\downarrow}\cup \mathcal X_{A_s}^{\uparrow}$ be the union of entry and exit subclusters of $A_s$.
We also define the set of link-crossing states by
\begin{align*}
    \mathcal C_{A_s}^{+}
    &:=
    \{\,Y^+ \mid Y \text{ is a non-free $B$-side subcluster of } A_s\,\},\\
    \mathcal C_{A_s}^{-}
    &:=
    \{\,Y^- \mid Y \text{ is a link-matched $R$-side subcluster of } A_s\,\}.
\end{align*}
The compressed graph of $A$ is a weighted directed graph $H_A=(V_A,E_A)$ whose vertex set is
\begin{align*}
    V_A
    :=
    \bigcup_{s=1}^t
    \left(
        \mathcal C_{A_s}
        \cup
        \mathcal C_{A_s}^{+}
        \cup
        \mathcal C_{A_s}^{-}
    \right).
\end{align*}
The link-crossing states in $\mathcal C_{A_s}^{+}\cup\mathcal C_{A_s}^{-}$ have no outgoing edges.

The edge set $E_A$ consists of two kinds of directed edges.
The first kind consists of interior edges.
For each child $A_s$, every entry subcluster $X\in\mathcal X_{A_s}^{\downarrow}$, and every exit subcluster $Y\in\mathcal X_{A_s}^{\uparrow}$, we add an edge $X\to Y$ of weight $\psi_{A_s}(X,Y)$.
If the link-crossing states $Y^+$ or $Y^-$ are defined, we also add edges $X\to Y^+$ and $X\to Y^-$ with weights $\psi_{A_s}(X,Y^+)$ and $\psi_{A_s}(X,Y^-)$, respectively.
These edges summarize minimum-weight directed paths that are contained entirely inside the child $A_s$.

Second, we add bridge edges between different child branches.
Let $X\in\mathcal C_{A_s}$ and $Y\in\mathcal C_{A_{s'}}$ for two distinct children $A_s$ and $A_{s'}$.
Suppose that $X$ and $Y$ lie on opposite sides of the bipartition and that the geometric edges in $X\times Y$ are directed from $X$ to $Y$.
Since all geometric edges in $X\times Y$ have the same split-tree-induced distance and the same direction, they also have the same directed weight.
Let $x\in X$ and $y\in Y$ be arbitrary. 
We define $\beta=-\Phi_M(x,y)$ if these edges are local, and $\beta=\Phi_M(x,y)$ if these edges are non-local.
We add a bridge edge $X\to Y$ with weight $\beta$.
If $Y^+$ is defined, we also add a bridge edge $X\to Y^+$ with weight $\beta+\min_{v\in Y}p(v)$.
If $Y^-$ is defined, we add a bridge edge $X\to Y^-$ with weight $\beta-\max_{v\in Y}p(v)$.
For the edges with link-crossing states, we also store a witness vertex attaining the corresponding minimum or maximum penalty. See Figure~\ref{fig:compressed_graph}.

\begin{figure}
    \centering
    \includegraphics[width=0.81\textwidth]{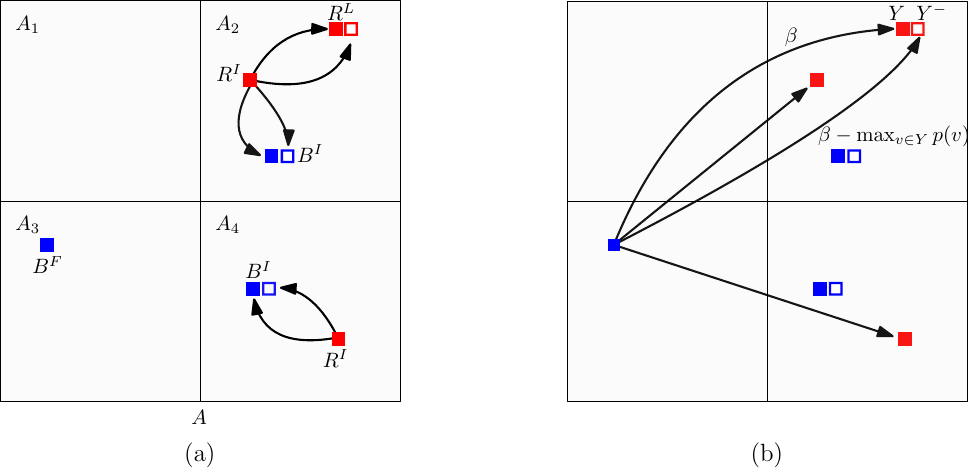}
    \caption{
    Illustration of the compressed graph $H_A$ for a split-tree cluster $A$.
    For simplicity, the figure is drawn in the Euclidean plane using quadtree cells.
    The four rectangles represent the child clusters $A_1,\ldots,A_4$ of $A$.
    Filled squares represent ordinary child-level subclusters, and open squares represent link-crossing states attached to the adjacent subclusters.
    (a) Interior edges summarize shortest-path values already computed in the compressed graph of a child cluster.
    (b) Bridge edges summarize uniform directed geometric transitions between subclusters in different child branches.
    An edge entering a link-crossing state also includes the corresponding link-crossing weight.}
    \label{fig:compressed_graph}
\end{figure}

The definition above assumes that $A$ is an internal split-tree cluster with children.
Recall that we construct a split-tree on the locations of the vertices of $X$, and two vertices of $X$ may have the same locations. 
When we say that a split-tree cluster contains a \emph{labeled vertex} $u\in X$, we mean that it contains the location of $u$.
A leaf may contain several labeled vertices with the same location.
If $A$ is a leaf, then the compressed graph is initialized directly on its nonempty subclusters, together with the minimum and maximum penalty witnesses of each subcluster.

Thus, $H_A$ compactly represents all relevant directed paths inside $A$.
The interior edges represent paths inside a child, bridge edges represent transitions between different child branches, and link-crossing states represent paths that contain link edges.

\subsection{Finding and Updating Augmenting Paths}
We now describe how the data structure finds an augmenting path and updates itself after augmentation.
The procedure is built on two basic operations, \textsc{Ascend} and \textsc{ExtractPath}.
The procedure \textsc{Ascend} constructs the compressed graph $H_A$ and computes the values $\psi_A$ using the compressed graphs, summary values, and witnesses already computed for the children of $A$.
The procedure \textsc{ExtractPath} recursively expands a path in the compressed graph into an actual directed path in the prism graph.

\subparagraph*{Ascend procedure.}
Let $A$ be an internal split-tree cluster with children $A_1,\ldots,A_t$.
Assume that the compressed graph of every child $A_s$ has already been computed, also with the values $\psi_{A_s}(X,Y)$, $\psi_{A_s}(X,Y^+)$, and $\psi_{A_s}(X,Y^-)$ for the entry, exit, and link-crossing states of $A_s$.

The procedure \textsc{Ascend}$(A)$ first constructs the compressed graph $H_A$.
The vertices of $H_A$ are the child-level subclusters and link-crossing states induced by the children of $A$.
The interior edges of $H_A$ are added using the already computed values $\psi_{A_s}$ of the children, and the bridge edges are added using the uniform directed weights between subclusters in different child branches.
It then computes shortest-path values in $H_A$ from child-level subclusters contained in an entry subcluster of $A$ to child-level subclusters contained in an exit or link-crossing state of $A$.
By hierarchical consistency, every entry subcluster $X\in\mathcal X_A^\downarrow$ and every exit subcluster $Y\in\mathcal X_A^\uparrow$ can be written as unions of subclusters from the children of $A$.
Therefore, we set $\psi_{A}(X,Y)$ as the minimum path weight in $H_A$ from any child subcluster contained in $X$ to any child subcluster contained in $Y$.

The values $\psi_A(X,Y^+)$ and $\psi_A(X,Y^-)$ are computed in the same way.
For a link-crossing state $Y^+$ or $Y^-$, the value $\psi_A(X,Y^\pm)$ is the minimum weight in $H_A$ from a child-level state contained in $X$ to a child-level link-crossing state whose underlying subcluster is contained in $Y$.
For each value computed by \textsc{Ascend}, we store the predecessor information in $H_A$ so that the corresponding compressed shortest path can later be recovered by \textsc{ExtractPath}.

\subparagraph*{ExtractPath procedure.}
The \textsc{ExtractPath} procedure reverses the computation performed by \textsc{Ascend}.
Given a split-tree cluster $A$, an entry subcluster $X$, and an exit or link-crossing state $Y$, it reconstructs an actual directed path in the prism graph.

We first consider the case where $Y$ is an ordinary exit subcluster.
The procedure recovers the corresponding shortest path in the compressed graph $H_A$ using the predecessor information stored during \textsc{Ascend}.
It then expands the edges of this compressed path in order.
If a compressed edge is an interior edge of a child $A_s$, then it corresponds to a stored $\psi$-value inside $A_s$, and the procedure recursively calls \textsc{ExtractPath} inside $A_s$.
When the recursion reaches a leaf, the stored leaf-level information specifies the actual labeled vertices used to realize the corresponding transition.
If a compressed edge is a bridge edge $X\to Y$ between two child branches, then after expanding the adjacent child-level paths, the procedure obtains an endpoint $x\in X$ on the first side and an endpoint $y\in Y$ on the second side.
It then inserts the geometric edge $(x,y)$ between the two expanded paths.
By the uniformity of inter-subcluster edges, every geometric edge in $X\times Y$ has the same direction and the same directed weight, so this inserted edge has exactly the bridge-edge weight stored in $H_A$.

When a recovered path starts or ends at a free endpoint subcluster, the actual endpoint is chosen at the leaf level.
If several free vertices at the same location are feasible, we choose a stored vertex with maximum penalty.
This choice preserves the weight of the recovered path and is best for the original penalty objective, since the chosen free vertex becomes matched after augmentation.

We next consider the case where the recovered path ends at a link-crossing state.
Suppose first that the link-crossing state is $Y^+$.
Then \textsc{ExtractPath} recovers the compressed prefix that reaches the underlying ordinary subcluster $Y$.
It chooses the stored witness in $Y$ attaining $\min_{v\in Y}p(v)$ as the endpoint of the one-layer path.
The case of $Y^-$ is analogous, except that the stored terminal witness is one attaining $\max_{v\in Y}p(v)$.
By appending the corresponding link edge and the symmetric counterpart in the other layer, we obtain a full symmetric augmenting path of weight $2\psi_A(X,Y^+)$ or $2\psi_A(X,Y^-)$.

Therefore, recursively expanding the stored information reconstructs the path represented by the selected compressed path.
If the selected path ends at an ordinary exit subcluster $Y$, the recovered one-layer directed path has weight $\psi_A(X,Y)$.
If the selected path ends at a link-crossing state $Y^+$ or $Y^-$, the resulting full symmetric augmenting path has weight $2\psi_A(X,Y^+)$ or $2\psi_A(X,Y^-)$, respectively.

\subparagraph*{Finding and updating augmenting paths.}
We now prove the data-structure guarantee stated in Lemma~\ref{lem:data-structure-time-complexity}.
\DataStructureLemma*
\begin{proof}
We maintain the upper-to-lower instance and the lower-to-upper instance simultaneously.
For the upper-to-lower instance, we maintain the one-layer data structure on the upper layer.
For the lower-to-upper instance, we maintain the one-layer data structure on the lower layer.
The two data structures are constructed and updated in the same way.

We first describe the initialization.
For each data structure, we construct the compressed graphs for all split-tree clusters in a bottom-up order.
At each internal cluster $A$, the procedure \textsc{Ascend} constructs $H_A$ from the information already computed for the children of $A$ and then computes the summary values $\psi_A$.
Since each split-tree cluster has $\poly(\log\Delta,1/\varepsilon)$ subclusters and the split-tree has $O(n\log\Delta)$ total cluster occurrences, the total initialization time is $O(n\poly(\log\Delta,1/\varepsilon))$.

At the root cluster of each data structure, we consider all stored values from free $B$-subclusters to free $R$-subclusters and to link-crossing states.
An ordinary value ending at a free $R$-subcluster represents the weight of a valid one-layer directed augmenting path.
On the other hand, a value ending at a link-crossing state stores only the one-layer contribution before adding the symmetric counterpart in the other layer.
Therefore, if a link-crossing candidate has value $\phi$, its full value is $2\phi$.
This doubling also recovers the factor $2$ in the prism-graph link weight, since the stored link-crossing value contains only $\pm p(v)$.
To handle the case in which an augmenting path immediately crosses an unmatched link edge from a free $B$-side vertex, we also include, for each free $B$-side subcluster $Y$, the direct-link candidate of full value $2\min_{v\in Y}p(v)$.
For each directed instance, we choose the minimum candidate under this comparison.
This gives one candidate from the upper-to-lower instance and one candidate from the lower-to-upper instance.
We choose the candidate with smaller full value and augment along the corresponding path.
By Lemma~\ref{lem:directed-path-to-compact-path}, the selected directed path corresponds to a compact augmenting path in the prism graph.

We now bound the update time after one augmentation.
Let $\Gamma$ be the selected compressed path.
The procedure \textsc{ExtractPath} expands $\Gamma$ into an actual one-layer directed path in the directed prism graph.
If $\Gamma$ ends at a link-crossing state, we add the corresponding link edge and the symmetric counterpart according to LSI, obtaining the full augmenting path.
Let $\Pi$ denote the resulting augmenting path.
By the construction of \textsc{ExtractPath}, the path $\Pi$ is compact and its net cost is equal to the selected full value.

After augmenting along $\Pi$, only the vertices on $\Pi$ change their matching status.
For each such vertex, we update the affected subclusters along the corresponding leaf-to-root path of the split-tree.
The witness vertices of the affected subclusters are updated together with the subclusters.
We then recompute the affected compressed graphs bottom-up using \textsc{Ascend}.
Since the split-tree has depth $O(\log\Delta)$ and each affected split-tree cluster has $\poly(\log\Delta,1/\varepsilon)$ subclusters, the total time for extracting $\Pi$ and updating the data structure is
$
|\Pi|\cdot\poly(\log\Delta,1/\varepsilon).
$
This proves Lemma~\ref{lem:data-structure-time-complexity}.
\end{proof}

\section{Decomposition into Subproblems with Bounded Spread}\label{sec:decompose_bounded_spread}
In this section, we detail how to decompose the original problem into independent subproblems defined on pseudo-metric spaces with bounded spread.
The \emph{diameter} of a set is the maximum distance between any pair of points.
To bound the probability of cutting the edges of a fixed optimal matching during the space partitioning, the diameter of the resulting subsets must be proportional to the optimal cost.
Since this cost is not inherently bounded, we scale both the metric and the penalty function to constrain the optimal cost within $\Omega(n/\varepsilon)$ and $O(n^2/\varepsilon)$.
We then map the points to a set of representative centers $Y \subseteq X$, inducing a pseudometric $d_s$ that enforces a minimum non-zero distance of at least one.
The additive error introduced by the mapping $s: X \to Y$ is bounded by the scaling.
Next, we apply a randomized low-diameter decomposition to partition the space into disjoint subsets.
This preserves an optimal matching within the subsets with probability at least $1/2$.
Consequently, solving the subproblems independently yields a $(1+O(\varepsilon))$-approximate solution.
The \emph{spread} of a set is the ratio of its diameter to its minimum non-zero distance.
The decomposition allows the assumption that the input instance has a bounded spread of $O(n^2/\varepsilon)$, which restricts the depth of hierarchical data structures and enables a near-linear running time.

\subsection{Metric Perturbation via Scaling}\label{subsec:perturbation_via_scaling}
We first scale the metric and the penalty function to bound the optimal cost in terms of $n$, and perturb the points to define a pseudometric $d_s$ that has a minimum non-zero distance of at least one.
We compute an estimate $T$ to scale $d$ and $p$ by $\alpha = O(n / (\varepsilon T))$, normalizing the optimal cost within $\Omega(n/\varepsilon)$ and  $O(n^2/\varepsilon)$.
We then select a set of centers $Y \subseteq X$ under the scaled metric and define a mapping $s \colon X \to Y$ to induce $d_s$, ensuring that the distance between any two distinct centers is strictly greater than one.
The scaling absorbs the $O(n)$ additive error from the mapping into a relative error, preserving a $(1+O(\varepsilon))$-approximation.

\medskip

To obtain an estimate $T$ of the optimal cost under $d$, we first compute a sparse graph $G_2$ to approximate the metric space $(X, d)$.
We then construct a tree metric $d_H$ from $G_2$ and compute the corresponding optimal cost.
The estimate satisfies $T \le \mathrm{w}(M^*) \le 2(n-1) \cdot T$, where $M^*$ is an optimal matching under $d$.

\subparagraph*{Computing a $2$-spanner of the metric space.}
Given a parameter $t \ge 1$, a \emph{$t$-spanner} of $(X, d)$ is a weighted graph $G_t$ defined on $X$.
The weight of an edge in $G_t$ is the distance between its endpoints under $d$.
For any pair of points $u, v \in X$, the shortest path distance $d_{G_t}(u, v)$ in the graph $G_t$ satisfies $d(u, v) \le d_{G_t}(u, v) \le t \cdot d(u, v)$.
Gottlieb and Roditty~\cite{DBLP:conf/esa/GottliebR08} introduced a dynamic data structure that maintains a $t$-spanner with a constant maximum degree in metric spaces with constant doubling dimension.
This data structure supports a point insertion in logarithmic time.
Using this result, we compute a $2$-spanner $G_2$ of the metric space $(X, d)$ with a constant maximum degree in $O(n \log n)$ time.

\subparagraph*{Construction of a tree metric.}
We define a \emph{hierarchically well-separated tree} (HST) as a metric space over the leaves of a tree.
\begin{definition}\label{def:hst}
A hierarchically well-separated tree (HST) is a metric space defined on the leaves of a rooted tree. 
Each node $v$ in the tree is associated with a label $\Delta_v \ge 0$ such that $\Delta_v = 0$ if and only if $v$ is a leaf, and $\Delta_u \le \Delta_v$ for any child $u$ of $v$. 
The distance between any two leaves is the label of their lowest common ancestor.
\end{definition}
Without loss of generality, we assume that the underlying tree of an HST is a binary tree~\cite{DBLP:journals/siamcomp/Har-PeledM06}.
The following lemma guarantees that an HST over the vertices of a weighted connected graph can be constructed in near-linear time with respect to the number of its vertices and edges.
\begin{lemma}[\cite{DBLP:journals/siamcomp/Har-PeledM06}]\label{lem:graph_to_hst}
For a weighted connected graph $G' = (V', E')$ with $n'$ vertices and $m'$ edges, we can compute an HST $H'$ on $V'$ in $O(n' \log n' + m')$ time.
The tree metric $d_{H'}$ of $H'$ satisfies $d_{G'}(u', v') \le d_{H'}(u', v') \le (n' - 1) \cdot d_{G'}(u', v')$ for all $u', v' \in V'$, where $d_{G'}$ is the shortest path metric of $G'$.
\end{lemma}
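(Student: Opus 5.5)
We follow the standard Har-Peled--Mendel construction, which is essentially a Kruskal-type (single-linkage) clustering of $G'$. First sort the $m'$ edges of $G'$ by weight. Then process them in increasing order while maintaining a union--find structure over $V'$, where each current component is represented by a node of the tree being built.

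Whenever an edge $(x,y)$ of weight $w$ joins two distinct components $C_1,C_2$, create a new node $v$ whose children are the roots of $C_1$ and $C_2$. Set its label to $\Delta_v := (|C_1|+|C_2|-1)\cdot w$, so the label is the new component size minus one, times $w$. Leaves receive label $0$.

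We then check the HST conditions in order.

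\textbf{Tree and monotonicity.} Since $G'$ is connected, exactly $n'-1$ merges occur and the result is a binary rooted tree with leaf set $V'$. Labels are monotone because edge weights are nondecreasing in processing order and component sizes only grow. Hence a child, created earlier with a smaller size and a no-larger weight, has a label no larger than its parent's.

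\textbf{Lower bound $d_{G'}\le d_{H'}$.} Let $v$ be the lowest common ancestor of $u',v'$, created when merging via an edge of weight $w$ into a component $C$ of size $k$. Before this merge, $C$ is connected using only edges of weight at most $w$. So there is a path inside $C$ between $u'$ and $v'$ with at most $k-1$ edges, each of weight at most $w$. Therefore $d_{G'}(u',v')\le (k-1)w=\Delta_v$.

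\textbf{Upper bound $d_{H'}\le (n'-1)\,d_{G'}$.} Take a shortest path between $u'$ and $v'$ and let $w_{\max}$ be its heaviest edge. When all edges of weight at most $w_{\max}$ have been processed, $u'$ and $v'$ lie in a common component. Hence their lowest common ancestor was created by an edge of weight at most $w_{\max}$, and its label is at most $(n'-1)w_{\max}\le (n'-1)\,d_{G'}(u',v')$.

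\textbf{Running time.} Sorting costs $O(m'\log m')$, which is not yet the claimed bound. I expect this accounting to be the only real obstacle. The construction itself takes $O(m'\,\alpha(n'))$ for the union--find operations plus $O(n')$ for building the tree.

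To reach $O(n'\log n'+m')$, I would first try the Har-Peled--Mendel route: compute a minimum spanning tree in $O(n'\log n'+m')$ time (e.g., Prim's algorithm with Fibonacci heaps), then run the Kruskal merging on its $n'-1$ edges only. The MST preserves the minimax-path structure, so the merge tree and the lower-bound argument, which only uses edges within components, still hold. The upper bound also still holds, since the minimax edge weight between two vertices in the MST is at most the heaviest edge of any path in $G'$, in particular of a shortest path. Sorting the $n'-1$ MST edges then costs $O(n'\log n')$. In the paper's application $G_2$ has constant degree, so $m'=O(n')$ anyway.
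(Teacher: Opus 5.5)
Your proposal is correct and matches the argument behind the cited result. The paper gives no proof of its own, and Har-Peled and Mendel likewise compute a minimum spanning tree in $O(n'\log n'+m')$ time, then build the HST bottom-up by Kruskal-style merging of its sorted edges, using the minimax property of the MST for $w\le d_{G'}(u',v')$ and the at most $|C|-1$ edges of weight at most $w$ in the merged component for the other direction; the only difference is that they label each new root by $(n'-1)w$, whereas your label $(|C_1|+|C_2|-1)w$ is slightly tighter and, as you checked, still monotone.
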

By applying Lemma \ref{lem:graph_to_hst} to $G_2$, we compute an HST $H$ on $X$ whose tree metric $d_H$ guarantees an $(n-1)$-approximation of the shortest path metric of $G_2$.

Given a matching $M \subseteq R \times B$, we define its cost under the tree metric $d_H$ as $\mathrm{w}_H(M) = \sum_{(r,b) \in M} d_H(r, b) + \sum_{v \in U(M)} p(v)$. 
We establish the following lemma to bound the cost of a matching under $d_H$ in terms of the original cost under $d$.
\begin{lemma}\label{lem:matching_cost_hst}
For every matching $M \subseteq R \times B$, $\mathrm{w}(M) \le \mathrm{w}_H(M) \le 2(n - 1) \cdot \mathrm{w}(M)$.
\end{lemma}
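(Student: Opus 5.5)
The proof compares costs term by term. The penalty part $\sum_{v\in U(M)}p(v)$ is identical in $\mathrm{w}(M)$ and $\mathrm{w}_H(M)$, because $U(M)$ does not depend on the metric. So it suffices to bound each edge: for every $(r,b)\in M$ I will show
\[
d(r,b)\le d_H(r,b)\le 2(n-1)\,d(r,b).
\]
Summing over the pairs of $M$ and adding the common penalty sum then gives both inequalities. This uses only the trivial fact that $\sum_{v} p(v)\le 2(n-1)\sum_{v}p(v)$ when $n\ge 2$ and $p\ge 0$. If $n=1$, the matching $M$ is empty and both costs coincide.

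For the per-edge bound I chain the two approximation guarantees. The graph $G_2$ is connected, since it is a $2$-spanner of a finite metric and hence has finite distances. Lemma~\ref{lem:graph_to_hst} applied to $G_2$ (with $n'=n$) gives
\[
d_{G_2}(u,v)\le d_H(u,v)\le (n-1)\,d_{G_2}(u,v).
\]
The spanner property gives
\[
d(u,v)\le d_{G_2}(u,v)\le 2\,d(u,v).
\]
Composing these, the lower bound is $d\le d_{G_2}\le d_H$ and the upper bound is $d_H\le (n-1)d_{G_2}\le 2(n-1)d$.

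There is no real obstacle; the argument is a direct composition. The one point to check is that $d_H$ is defined on all of $X$, so that each pair $(r,b)$ has a tree distance. This holds because $H$ is built on the vertex set of $G_2$, which is $X$. Co-located points with $d=0$ need a separate remark: the spanner must then have $d_{G_2}=0$, which holds if zero-weight edges are allowed, and the HST bound inherits it. Alternatively, such points can be handled by identifying locations, as done elsewhere in the paper.
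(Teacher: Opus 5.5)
Your proof is correct and is essentially the paper's argument: compose the spanner bound $d\le d_{G_2}\le 2d$ with the HST bound $d_{G_2}\le d_H\le (n-1)d_{G_2}$ edge by edge, and absorb the common penalty sum using $1\le 2(n-1)$. One small nit concerns your $n=1$ remark: the two costs coinciding does not give $\mathrm{w}_H(M)\le 0\cdot \mathrm{w}(M)$ unless the penalty is zero, so the statement itself (for you and the paper alike) implicitly needs $n\ge 2$. Your co-located-points caveat is also unnecessary here, since $d$ is still the original metric on disjoint point sets at this stage.
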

\begin{proof}
Since $d_H$ provides an $(n-1)$-approximation of the shortest path metric of the spanner $G_2$ of $(X, d)$, the distance function satisfies $d(r, b) \le d_H(r, b) \le 2(n - 1) \cdot d(r, b)$ for all $(r, b) \in R \times B$.
Applying the lower bound $d(r, b) \le d_H(r, b)$ to each edge in $M$ guarantees $\mathrm{w}(M) \le \mathrm{w}_H(M)$.
Similarly, applying $d_H(r, b) \le 2(n - 1) \cdot d(r, b)$ to each edge in $M$ gives $\mathrm{w}_H(M) \le 2(n - 1) \sum_{(r,b) \in M} d(r, b) + \sum_{v \in U(M)} p(v)$.
Bounding the penalty term by $2(n - 1) \sum_{v \in U(M)} p(v)$ yields $\mathrm{w}_H(M) \le 2(n - 1) \cdot \mathrm{w}(M)$.
\end{proof}

\subparagraph*{Estimation of the optimal cost.}
We now compute the optimal cost under the tree metric $d_H$ to establish the estimate $T$.
Sato et al.~\cite{DBLP:conf/nips/SatoYK20} presented a quasi-linear time exact algorithm for generalized unbalanced optimal transport problems on tree metrics.
As detailed in Appendix~\ref{sec:penaly_matching_tree}, by adapting their algorithm to our setting, we can exactly compute the optimal cost $W^*_H$ under $d_H$ in $O(n \log^2 n)$ time.
Since $d(r, b) \le d_H(r, b)$ for all $(r, b) \in R \times B$, Lemma \ref{lem:matching_cost_hst} implies $\mathrm{w}(M^*) \le W^*_H \le 2(n - 1) \cdot \mathrm{w}(M^*)$, where $M^*$ is an optimal matching under $d$.
Therefore, setting the estimate $T = W^*_H/(2(n - 1))$ yields $T \le \mathrm{w}(M^*) \le 2(n-1) \cdot T$.
We assume $T > 0$.

\medskip

Using the estimate $T$, we define a scaling factor $\alpha = C \cdot \frac{n}{\varepsilon T}$, where $C$ is a sufficiently large constant.
The scaled distance function $d_{\alpha}$ and the scaled penalty function $p_{\alpha}$ are defined as $d_{\alpha}(u, v) = \alpha \cdot d(u, v)$ for all $u, v \in X$ and $p_{\alpha}(x) = \alpha \cdot p(x)$ for all $x \in X$.
We now consider the metric space $(X, d_{\alpha})$ with the penalty function $p_{\alpha}$, where the optimal cost is guaranteed to be bounded between $\Omega(n/\varepsilon)$ and $O(n^2/\varepsilon)$ due to the bounds on $T$.

\medskip

Within the metric space $(X, d_\alpha)$, we compute a set of representative centers $Y \subseteq X$ and define a mapping $s \colon X \to Y$.
For a parameter $r > 0$ and a \emph{packing constant} $\beta \ge 1$, we define an \emph{$r$-net} of a metric space as follows.
\begin{definition}\label{def:r_net}
Given a metric space $(X', d')$, a parameter $r > 0$, and a packing constant $\beta \ge 1$, an $r$-net of $X'$ is a subset $Y'$ of $X'$ such that $\min_{u, v \in Y', u \neq v} d'(u, v) \ge r/\beta$ and $\max_{x \in X'} d'(x, Y') \le r$, where $d'(x, Y') = \min_{y \in Y'} d'(x, y)$.
\end{definition}
For a constant $\kappa > 1$, we compute a $\kappa$-net $Y$ of $X$ under the metric $d_\alpha$ using a packing constant $\beta = \kappa$, which guarantees that $d_\alpha(u, v) > 1$ for any distinct points $u, v \in Y$.
The mapping $s \colon X \to Y$ is defined during the computation.

\subparagraph*{Computation of the $\kappa$-net and the mapping.}
We initialize $Y$ as an empty set.
Cole and Gottlieb~\cite{DBLP:conf/stoc/ColeG06} introduced a dynamic data structure for metric spaces with constant doubling dimension that supports approximate nearest neighbor queries and point insertions in logarithmic time.
This data structure $D$ is maintained over $Y$ to answer $\kappa$-approximate nearest neighbor queries under the metric $d_\alpha$.
For each point $v \in X$, if $Y$ is empty, we add $v$ to $Y$, insert $v$ into $D$, and set $s(v) = v$.
Otherwise, we query $D$ to find a $\kappa$-approximate nearest neighbor $q \in Y$ of $v$.
This query guarantees that $d_\alpha(v, Y) \le d_\alpha(v, q) \le \kappa \cdot d_\alpha(v, Y)$.
If $d_\alpha(v, q) > \kappa$, we add $v$ to $Y$, insert $v$ into $D$, and set $s(v) = v$.
If $d_\alpha(v, q) \le \kappa$, we do not add $v$ to $Y$ and simply set $s(v) = q$.
Since querying and updating $D$ take logarithmic time, the entire procedure takes $O(n \log n)$ time.

To prove that $Y$ is a $\kappa$-net of $X$ under $d_\alpha$ with a packing constant $\kappa$, we show that $\min_{u, v \in Y, u \neq v} d_\alpha(u, v) > 1$ and $\max_{x \in X} d_\alpha(x, Y) \le \kappa$.
The construction of $Y$ directly implies $\max_{x \in X} d_\alpha(x, Y) \le \kappa$.
To verify that distinct points in $Y$ are separated by a distance of at least $1$, consider an arbitrary point $y$ added to $Y$.
Let $Y_{\textsf{old}}$ denote the set of points in $Y$ existing prior to the insertion of $y$.
The insertion rule guarantees $\kappa < d_\alpha(y, q) \le \kappa \cdot d_\alpha(y, Y_{\textsf{old}})$ for any $q \in Y_{\textsf{old}}$, which implies $d_\alpha(y, Y_{\textsf{old}}) > 1$.
Therefore, $Y$ is a $\kappa$-net of $X$ under $d_\alpha$ satisfying $\min_{u, v \in Y, u \neq v} d_\alpha(u, v) > 1$.

\medskip

We now define a pseudometric $d_s$ on $R \times B$ and bound the cost of an optimal matching under $d_s$ with respect to the original metric $d$.
For all $(r, b) \in R \times B$, we define $d_s(r, b) = d_\alpha(s(r), s(b))$.
For a matching $M \subseteq R \times B$, its cost under $d_s$ is defined as $\mathrm{w}_s(M) = \sum_{(r,b) \in M} d_s(r, b) + \sum_{v \in U(M)} p_\alpha(v)$.
The following lemma shows that an optimal matching under $d_s$ yields a valid approximation for the original problem.
\begin{lemma}\label{lem:approx_bound_ds}
Let $M^*$ and $M^*_s$ be the minimum-cost bipartite matchings with penalties between $R$ and $B$ under $d$ and $d_s$, respectively.
Then, $\mathrm{w}(M^*) \le \mathrm{w}(M^*_s) \le (1 + O(\varepsilon)) \cdot \mathrm{w}(M^*)$.
\end{lemma}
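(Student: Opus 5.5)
The plan is a per-edge comparison between the scaled metric $d_\alpha$ and the pseudometric $d_s$, followed by the usual sandwich argument between the two optima. I read $\mathrm{w}(M^*_s)$ as the cost of the matching $M^*_s$ evaluated under the original metric $d$ and penalty $p$. The lower bound $\mathrm{w}(M^*)\le \mathrm{w}(M^*_s)$ is then immediate: $M^*_s$ is a feasible matching with penalties, and $M^*$ is optimal under $d$.

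\textbf{Step 1 (per-edge distortion).} By construction of the $\kappa$-net, every $x\in X$ satisfies $d_\alpha(x,s(x))\le\kappa$. Indeed, either $s(x)=x$, or $s(x)=q$ with $d_\alpha(x,q)\le\kappa$. The triangle inequality then gives, for every $(r,b)\in R\times B$,
\[
\bigl|d_s(r,b)-d_\alpha(r,b)\bigr| \le d_\alpha(r,s(r))+d_\alpha(b,s(b))\le 2\kappa .
\]
The penalties are scaled identically in $\mathrm{w}_s$ and in $\alpha\,\mathrm{w}$, namely $p_\alpha=\alpha p$. A matching has at most $n/2$ edges, so summing the per-edge bound gives, for every matching $M$,
\[
\bigl|\mathrm{w}_s(M)-\alpha\,\mathrm{w}(M)\bigr|\le \kappa n .
\]

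\textbf{Step 2 (sandwich).} Apply Step 1 to $M^*_s$, then use the optimality of $M^*_s$ under $d_s$, then apply Step 1 to $M^*$:
\[
\alpha\,\mathrm{w}(M^*_s)\le \mathrm{w}_s(M^*_s)+\kappa n\le \mathrm{w}_s(M^*)+\kappa n\le \alpha\,\mathrm{w}(M^*)+2\kappa n .
\]
Dividing by $\alpha=Cn/(\varepsilon T)$ yields $\mathrm{w}(M^*_s)\le \mathrm{w}(M^*)+2\kappa\varepsilon T/C$.

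\textbf{Step 3 (relative error).} Here I use the estimate $T\le \mathrm{w}(M^*)$ from the HST-based bound established earlier. It turns the additive error into a relative one:
\[
\mathrm{w}(M^*_s)\le (1+2\kappa\varepsilon/C)\,\mathrm{w}(M^*)=(1+O(\varepsilon))\,\mathrm{w}(M^*),
\]
since $\kappa$ and $C$ are constants. If $T=0$ were allowed, this step would break down, but the paper assumes $T>0$.

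No step is genuinely hard. The only place needing care is the bookkeeping of scales: $\mathrm{w}_s$ lives in the $\alpha$-scaled world and $\mathrm{w}$ in the original one, so the factor $\alpha$ has to be kept explicit, together with the fact that the penalties scale exactly by $\alpha$ and contribute no error. The statement's notation is ambiguous on this point. If it were instead meant that both quantities are measured at scale $\alpha$, the same chain of inequalities applies verbatim with $\alpha\,\mathrm{w}$ in place of $\mathrm{w}$.
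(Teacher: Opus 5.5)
Your proposal is correct and matches the paper's proof step for step: the per-edge bound $|d_s(r,b)-d_\alpha(r,b)|\le 2\kappa$ gives $|\mathrm{w}_s(M)-\alpha\,\mathrm{w}(M)|\le \kappa n$, then the sandwich uses the optimality of $M^*_s$ under $d_s$, you divide by $\alpha=Cn/(\varepsilon T)$, and you finish with $T\le \mathrm{w}(M^*)$. Your reading of $\mathrm{w}(M^*_s)$ as the cost under the original $d$ and $p$ is also the one the paper intends.
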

\begin{proof}
As $M^*$ is an optimal matching under $d$, $\mathrm{w}(M^*) \le \mathrm{w}(M^*_s)$ trivially holds.
By the definition of the mapping $s$, $d_\alpha(v, s(v)) \le \kappa$ for all $v \in X$.
For any pair $(r, b) \in R \times B$, the triangle inequality yields $|d_s(r, b) - d_\alpha(r, b)| \le 2\kappa$.
Since the penalty terms are identical under $d_s$ and $d_\alpha$, the cost difference for any matching $M \subseteq R \times B$ is bounded by 
$|\mathrm{w}_s(M) - \alpha \cdot \mathrm{w}(M)| \le \sum_{(r,b) \in M} |d_s(r, b) - d_\alpha(r, b)| \le 2\kappa |M| \le \kappa n$.

The optimality of $M^*_s$ under $d_s$ implies $\mathrm{w}_s(M^*_s) \le \mathrm{w}_s(M^*)$.
Applying the cost difference bound to both $M^*$ and $M^*_s$ yields
$\alpha \cdot \mathrm{w}(M^*_s) \le \mathrm{w}_s(M^*_s) + \kappa n \le \mathrm{w}_s(M^*) + \kappa n \le \alpha \cdot \mathrm{w}(M^*) + 2\kappa n$.
Dividing by $\alpha = Cn/\varepsilon T$ results in $\mathrm{w}(M^*_s) \le \mathrm{w}(M^*) + \frac{2\kappa}{C} \varepsilon \cdot T$.
Since $T \le \mathrm{w}(M^*)$, we conclude that $\mathrm{w}(M^*) \le \mathrm{w}(M^*_s) \le (1 + \frac{2\kappa}{C} \varepsilon) \cdot \mathrm{w}(M^*) = (1 + O(\varepsilon)) \cdot \mathrm{w}(M^*)$.
\end{proof}
By Lemma~\ref{lem:approx_bound_ds}, finding an optimal matching under $d_s$ guarantees a $(1+O(\varepsilon))$-approximate solution for the original problem.
Under the pseudometric $d_s$, the minimum non-zero distance is at least one and the optimal cost is bounded in terms of $n$.
This structure ensures that applying a low-diameter decomposition partitions the instance into independent subproblems with a bounded spread of $O(n^2/\varepsilon)$.

\subsection{Randomized Low-Diameter Decomposition}\label{subsec:randomized_ldd}
We now partition the point set $X$ into disjoint clusters of diameter of $O(n^2/\varepsilon)$ under $d_\alpha$ while preserving an optimal matching $M^*_s$ under the pseudometric $d_s$ with probability at least $1/2$.
We first select a subset of centers $Z \subseteq Y$ that forms an $(L/4)$-net under $d_\alpha$ for a radius bound $L = O(n^2/\varepsilon)$.
Following the randomized low-diameter decomposition scheme of~\cite{DBLP:conf/stoc/Talwar04}, we assign each point of $Y$ to a center in $Z$, partitioning $Y$ into disjoint subsets with a bounded spread of $O(L)$ under $d_\alpha$.
Through the mapping $s \colon X \to Y$, this assignment induces a partition of $X$ into disjoint clusters, ensuring that each cluster has a bounded diameter of $O(L)$ under $d_\alpha$.
Independently computing an optimal matching under $d_s$ within each cluster yields a $(1+O(\varepsilon))$-approximate solution for the original problem.

\medskip

We detail the procedure to compute the randomized decomposition.
We first construct the $(L/4)$-net $Z$ of $Y$ with a packing constant $\chi > 1$.
We then assign each point in $Y$ to a center in $Z$ based on a random radius $\ell$ and a random permutation $\pi$ of $Z$.
Relying on the properties of the bounded doubling dimension, this procedure runs in $O(n \log n)$ time using the dynamic data structure of~\cite{DBLP:conf/stoc/ColeG06}.

\subparagraph*{Computation of the randomized decomposition.}
We initialize $Z = \emptyset$ and maintain a dynamic data structure $D_1$~\cite{DBLP:conf/stoc/ColeG06} over $Z$ to answer $\chi$-approximate nearest neighbor queries under $d_\alpha$.
For each $y \in Y$, if $Z = \emptyset$, we insert $y$ into $Z$ and $D_1$.
Otherwise, we query $D_1$ for a $\chi$-approximate nearest neighbor $q \in Z$ of $y$.
If $d_\alpha(y, q) > L/4$, we insert $y$ into $Z$ and $D_1$.

To partition $Y$, we sample $\rho \in [1/2, 1)$ uniformly at random and set $\ell = \rho L$.
We generate a permutation $\pi$ of $Z$ uniformly at random.
We define a mapping $t \colon Y \to Z$ using a dynamic data structure $D_2$~\cite{DBLP:conf/stoc/ColeG06} initialized with $Z$.
For each $y \in Y$, we initialize $Z_y = \emptyset$ and repeatedly query $D_2$ for a $\chi$-approximate nearest neighbor $q$ of $y$.
If $d_\alpha(y, q) \le \chi \ell$, we add $q$ to $Z_y$ and remove it from $D_2$.
Otherwise, we terminate the queries for $y$.
This ensures that $d_\alpha(y, z) > \ell$ for all centers $z$ remaining in $D_2$, guaranteeing that $Z_y$ contains every center $z \in Z$ satisfying $d_\alpha(y, z) \le \ell$.
We set $t(y)$ to the center among $\{z \in Z_y \mid d_\alpha(y, z) \le \ell\}$ that appears first in $\pi$, and we reinsert all points of $Z_y$ into $D_2$.

For each $z \in Z$, we define a disjoint cluster $X_z = \{v \in X \mid t(s(v)) = z\}$.
We also let $R_z = X_z \cap R$ and $B_z = X_z \cap B$.
The clusters $X_z$ for all $z \in Z$ form the partition of $X$.
By construction, the diameter of $X_z$ and the spread of $X_z \cap Y$ under $d_\alpha$ are both bounded by $O(L)$.

\subparagraph*{Correctness and time complexity.}
As in Section~\ref{subsec:perturbation_via_scaling}, $Z$ forms an $(L/4)$-net of $Y$ under $d_\alpha$ with a packing constant $\chi$.
Computing $Z$ takes $O(n \log n)$ time using the dynamic data structure.

To bound the time complexity of defining the mapping $t$, we rely on a fundamental property of doubling metric spaces.
\begin{lemma}[\cite{DBLP:conf/focs/GuptaKL03}]\label{lem:doubling_aspect_ratio}
Let $(X', d')$ be a doubling metric space with doubling dimension $\ddim$.
If a subset $Y' \subseteq X'$ has a spread of $\Lambda$, then $|Y'| \le 2^{\ddim \cdot \lceil \log_2 \Lambda \rceil}$.
\end{lemma}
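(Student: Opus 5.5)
The plan is the standard repeated-halving covering argument, with the parameters chosen to match the exponent $\ddim\cdot\lceil\log_2\Lambda\rceil$. Let $D$ be the diameter of $Y'$ and $\delta$ its minimum non-zero distance, so $\Lambda=D/\delta$. If $|Y'|\le 1$ the claim is trivial, so assume $D>0$.

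\textbf{Step 1: iterated covering.} Fix any $y\in Y'$; then $Y'\subseteq B(y,D)$. By the definition of doubling dimension, a ball of radius $\rho$ is covered by $2^{\ddim}$ balls of radius $\rho/2$. Applying this $k$ times shows that $Y'$ is covered by at most $2^{\ddim\cdot k}$ balls of radius $D/2^k$. Discarding the balls that miss $Y'$ changes nothing in the count.

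\textbf{Step 2: balls are singletons.} Set $k=\lceil\log_2\Lambda\rceil$ and show that each final ball contains at most one point of $Y'$. Then $|Y'|$ is at most the number of balls, which is the stated bound. Two points of $Y'$ in a common ball of radius $r=D/2^k$ are at distance at most $2r$. Since $2^k\ge\Lambda$, we have $r\le\delta$, but this only gives $2r\le 2\delta$. That does not contradict $\delta$-separation.

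\textbf{Main obstacle: the factor of two.} Closing this gap is the one real difficulty. The naive count needs one extra halving and only gives $2^{\ddim(\lceil\log_2\Lambda\rceil+1)}$. The slack must be removed from the first step rather than the last. Instead of covering from a radius-$D$ ball around a point of $Y'$, I would cover $Y'$ at the start by $2^{\ddim}$ balls of radius $D/2$. I would then track the \emph{diameter} of each piece $Y'\cap B$ rather than its radius, aiming for the invariant that after $j$ rounds every piece has diameter at most $D/2^{j}$. At round $k$ every piece then has diameter at most $D/2^k\le\delta$. If $\Lambda$ is not a power of two, this is strictly less than $\delta$, so each piece has at most one point. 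The power-of-two boundary case has to be checked separately.

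\textbf{Risk and fallback.} The risk is that general doubling metrics only control radii, not diameters, so this invariant may fail by exactly a factor of two. If it cannot be established, I would fall back on the cited packing statement of~\cite{DBLP:conf/focs/GuptaKL03} under its own conventions for spread and doubling. Any constant-factor slack in the exponent is harmless where the lemma is used, namely to bound the number of centers of $Z$ within distance $O(\ell)$ of a point by $2^{O(\ddim)}$. That is what makes the construction of the mapping $t$ run in $O(n\log n)$ time.
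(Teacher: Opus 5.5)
Your diagnosis of the factor of two is correct, but the repair you propose is the step that fails. It cannot be rescued, because the inequality $|Y'|\le 2^{\ddim\lceil\log_2\Lambda\rceil}$ is false as stated.

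At powers of two this is immediate. Two points at distance $1$ have $\Lambda=1$ and bound $2^0=1$. The set $\{0,1,2\}\subset\mathbb{R}$ has $\Lambda=2$ and $\ddim(\mathbb{R})=1$, so the bound is $2$. So ``checking the power-of-two case separately'' is a dead end.

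With the paper's ball-based definition it also fails for other spreads. Let $X'=\{c_1,\dots,c_k\}\cup\{y_{g,i}: 1\le g,i\le k\}$ carry the path metric of a tree. An extra root (not in $X'$) is joined to each $c_g$ by an edge of length $1/4$, and each $c_g$ is joined to $y_{g,1},\dots,y_{g,k}$ by edges of length $1/2$. The nonzero distances are $d(c_g,c_h)=d(c_g,y_{g,i})=1/2$, $d(c_g,y_{h,j})=d(y_{g,i},y_{g,j})=1$, and $d(y_{g,i},y_{h,j})=3/2$ for $g\ne h$.
\begin{itemize}
\item Closed balls of radius below $1/2$ are singletons.
\item Hence $B(c_g,r)$ with $1/2\le r<1$, which has exactly $2k$ points, needs $2k$ half-radius balls.
\item Every other ball is covered by at most $\max\{k,2\}$ half-radius balls (centered at the $c_h$, or singletons).
\end{itemize}
Thus $2^{\ddim}=2k$. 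Yet $Y'=\{y_{g,i}\}$ has spread $3/2$, so the claimed bound is $2k$, whereas $|Y'|=k^2>2k$ for $k\ge 3$.

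This also shows concretely why your invariant breaks. Any cover of $Y'$ by $2^{\ddim}$ balls of radius $D/2=3/4$ must put two points of $Y'$ in one piece, so that piece has diameter at least $1>3/4$. In general, halving diameters with balls costs two rounds, i.e.\ $2^{2\ddim}$ pieces. Your invariant is exactly the set-based definition of doubling, and under it your argument does give the bound whenever $\Lambda$ is not a power of two. But the paper defines $\ddim$ with balls.

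What is provable is your naive count, run with a strict inequality. $Y'\subseteq B(y,D)$ is covered by $2^{\ddim k}$ balls of radius $D/2^k$. Their pieces have diameter at most $2D/2^k<\delta$ once $k=\lfloor\log_2\Lambda\rfloor+2$. So $|Y'|\le 2^{\ddim(\lfloor\log_2\Lambda\rfloor+2)}$. When $\Lambda$ is a power of two, the exponent is $\lceil\log_2\Lambda\rceil+2$, not $+1$ as you wrote.

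The paper gives no proof of its own, only the citation to Gupta, Krauthgamer and Lee. So falling back on that citation reproduces the misstatement rather than closing the gap. Your closing remark is the right resolution. In Lemma~\ref{lem:candidate_set_size} and Lemma~\ref{lem:prob_separation} the spreads are the constants $8\chi^2$ and $16\chi$. The corrected bound therefore still gives $|Z_y|=O(1)$ and $K=O(1)$ for constant $\ddim$ and $\chi$, and both lemmas go through with changed constants.
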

Defining $t$ requires repeatedly querying $D_2$ to compute $Z_y$ for each $y \in Y$.
The number of queries per point $y$ is exactly $|Z_y| + 1$.
Let $B_{d_\alpha}(y, r)$ denote the closed ball of radius $r$ centered at $y$ under $d_\alpha$.
The following lemma bounds $|Z_y|$.
\begin{lemma}\label{lem:candidate_set_size}
For every $y \in Y$, $|Z_y| = O(1)$.
\end{lemma}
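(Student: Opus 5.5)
We bound $|Z_y|$ by a packing argument in the doubling metric $(X,d_\alpha)$, using Lemma~\ref{lem:doubling_aspect_ratio}.

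First, we locate the points of $Z_y$. Every $q$ added to $Z_y$ passed the test $d_\alpha(y,q)\le \chi\ell$. Since $\ell=\rho L<L$, this gives
\[
Z_y\subseteq B_{d_\alpha}(y,\chi L).
\]

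Second, we use the separation of the net $Z$. The net $Z$ was built incrementally with $\chi$-approximate nearest neighbor queries, exactly as the $\kappa$-net $Y$ was built in Section~\ref{subsec:perturbation_via_scaling}. A point is inserted only when its approximate nearest center is at distance more than $L/4$. Repeating that argument, every newly inserted point $z$ satisfies
\[
\chi\, d_\alpha(z,Z_{\textsf{old}})\ge d_\alpha(z,q)>L/4,
\]
so distinct centers of $Z$ are at distance greater than $L/(4\chi)$. Consequently, any two distinct points of $Z_y$ are at distance greater than $L/(4\chi)$. Any two points of $Z_y$ also lie within distance $2\chi L$ of each other, by the triangle inequality through $y$.

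Third, we apply the packing bound. If $|Z_y|\le 1$ there is nothing to prove. Otherwise the spread of $Z_y$ is at most
\[
\Lambda\le \frac{2\chi L}{L/(4\chi)}=8\chi^2.
\]
Lemma~\ref{lem:doubling_aspect_ratio} then gives
\[
|Z_y|\le 2^{\ddim\lceil\log_2(8\chi^2)\rceil}.
\]
This is $O(1)$ because $\ddim$ and $\chi$ are constants.

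The only delicate step is the separation bound for $Z$. It must be derived from the approximate nearest neighbor insertion rule, carrying the factor $\chi$, rather than from exact distances. One should also check that deletions and reinsertions in $D_2$ do not matter: $Z_y$ is a subset of the fixed set $Z$, so it inherits the separation regardless. The containment in $B_{d_\alpha}(y,\chi L)$ likewise uses only the acceptance test, so the termination rule plays no role in the count.
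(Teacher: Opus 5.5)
Your proof is correct and follows essentially the same argument as the paper. You place $Z_y$ inside $B_{d_\alpha}(y,\chi L)$ using $\ell<L$, get separation $L/(4\chi)$ from the $\chi$-approximate insertion rule, bound the spread by $8\chi^2$, and apply Lemma~\ref{lem:doubling_aspect_ratio}; the only differences are cosmetic (you bound $Z_y$ directly rather than $Z\cap B_{d_\alpha}(y,\chi L)$, and you derive the separation explicitly where the paper just cites the net property).
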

\begin{proof}
By construction, every center $q \in Z_y$ satisfies $d_\alpha(y, q) \le \chi \ell$.
Since $\rho < 1$, we have $\ell < L$, which implies $Z_y \subseteq Z \cap B_{d_\alpha}(y, \chi L)$.

As $Z$ is an $(L/4)$-net with a packing constant $\chi$, the minimum distance between distinct centers in $Z$ is at least $L / (4\chi)$.
The diameter of $Z \cap B_{d_\alpha}(y, \chi L)$ is at most $2\chi L$.
Thus, the spread of $Z \cap B_{d_\alpha}(y, \chi L)$ is bounded by $8\chi^2$.
By Lemma~\ref{lem:doubling_aspect_ratio}, we obtain $|Z \cap B_{d_\alpha}(y, \chi L)| \le 2^{\ddim \lceil \log_2 (8\chi^2) \rceil}$.
Since $\ddim$ and $\chi$ are constants, we conclude $|Z_y| \le |Z \cap B_{d_\alpha}(y, \chi L)| = O(1)$.
\end{proof}
Since $|Z_y| = O(1)$, computing $Z_y$ requires $O(1)$ queries for each point $y \in Y$.
Therefore, the entire decomposition procedure runs in $O(n \log n)$ time.

\medskip

We determine the radius bound $L$ that preserves an optimal matching $M^*_s$ under $d_s$ within the clusters with probability at least $1/2$.
We first bound the probability that the mapping $t$ assigns two points in $Y$ to different centers.

\begin{lemma}\label{lem:prob_separation}
For any pair of points $y, y' \in Y$, $\Pr[t(y) \neq t(y')] \le \gamma \cdot \frac{d_\alpha(y, y')}{L}$, where $\gamma = O(\ddim \cdot \log \chi)$.
\end{lemma}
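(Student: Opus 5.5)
We follow the Calinescu--Karloff--Rabani / Fakcharoenphol--Rao--Talwar analysis of the random-radius, random-permutation decomposition, in the form used by Talwar~\cite{DBLP:conf/stoc/Talwar04}. Set $D:=d_\alpha(y,y')$. If $D\ge L/\gamma$ the bound is trivial, so assume $D$ is small compared to $L$.

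\textbf{Step 1: who can separate $y$ and $y'$.} Fix the pair. Say a center $z\in Z$ \emph{cuts} the pair if $z$ is the first center in $\pi$ with $\min\{d_\alpha(y,z),d_\alpha(y',z)\}\le \ell$, and moreover $\max\{d_\alpha(y,z),d_\alpha(y',z)\}>\ell$. If $t(y)\neq t(y')$, then some center cuts the pair. Indeed, let $z$ be the first center in $\pi$ within distance $\ell$ of $y$ or $y'$. Every center within distance $\ell$ of $y$ (resp.\ $y'$) lies in $Z_y$ (resp.\ $Z_{y'}$), and $t$ picks the first such center in $\pi$. So if $z$ were within $\ell$ of both points, both would be assigned to $z$.

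Only centers with $d_\alpha(y,z)\le \ell+D< L+D\le 2L$ can cut the pair. These centers lie in $Z\cap B_{d_\alpha}(y,2L)$, and their pairwise distances are at least $L/(4\chi)$. By Lemma~\ref{lem:doubling_aspect_ratio}, as in the proof of Lemma~\ref{lem:candidate_set_size}, there are at most $m=2^{O(\ddim\log\chi)}$ of them.

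\textbf{Step 2: the probability that a fixed center cuts.} Sort the relevant centers $z_1,z_2,\dots,z_m$ by $a_j:=\min\{d_\alpha(y,z_j),d_\alpha(y',z_j)\}$. By the triangle inequality, $z_j$ cuts only if $\ell\in[a_j,a_j+D)$. Since $\ell$ is uniform on $[L/2,L)$, this interval has probability at most $2D/L$. In addition, $z_j$ cuts only if it precedes every center $z_k$ with $k<j$ in $\pi$, because such a $z_k$ has $a_k\le a_j\le \ell$ and would be the first center reaching $y$ or $y'$. The radius and the permutation are independent, so this event has probability $1/j$. A union bound gives
\[
\Pr[t(y)\neq t(y')]\le \sum_{j=1}^{m}\frac{2D}{L}\cdot\frac1j\le \frac{2D}{L}(1+\ln m)=O\!\left(\ddim\log\chi\right)\cdot\frac{D}{L}.
\]
This yields $\gamma=O(\ddim\cdot\log\chi)$.

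\textbf{Main obstacle.} The delicate point is Step~1's reliance on the approximate nearest-neighbor procedure. We must check that $Z_y$ contains every center within exact distance $\ell$ of $y$, since this is what makes $t(y)$ the first such center in $\pi$. The construction of $Z_y$ guarantees this, as noted in the text. The packing count is also needed to get $\log m=O(\ddim\log\chi)$ rather than a bound depending on $n$. Everything else is the standard harmonic-sum argument.
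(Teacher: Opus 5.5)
Your proposal is correct and follows essentially the same argument as the paper. You order the centers by $\min\{d_\alpha(y,z),d_\alpha(y',z)\}$, bound the radius event by $2D/L$ and the permutation event by $1/j$ using independence, take a union bound, and control the harmonic sum via the packing bound from Lemma~\ref{lem:doubling_aspect_ratio} on centers within distance $2L$ of $y$. Your explicit check that the first covering center in $\pi$ must cover exactly one of the two points (because $Z_y$ contains every center within distance $\ell$) is a slightly more careful version of the paper's corresponding remark.
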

\begin{proof}
Assume $d_\alpha(y, y') \le L / (2\gamma)$.
Otherwise, the bound is trivial.
Order the centers $z \in Z$ as $z_1, z_2, \dots$ in non-decreasing order of $\min(d_\alpha(y, z), d_\alpha(y', z))$.
The event $t(y) \neq t(y')$ occurs only if some center $z_j$ covers exactly one of $y$ and $y'$ under the radius $\ell$, and it appears before all other centers covering either $y$ or $y'$ in the random permutation $\pi$.

For $z_j$ to cover exactly one point, $\ell$ must fall between $d_\alpha(y, z_j)$ and $d_\alpha(y', z_j)$.
By the triangle inequality, the length of this interval is at most $d_\alpha(y, y')$.
Since $\ell$ is chosen uniformly at random from $[L/2, L)$, this occurs with probability at most $2 d_\alpha(y, y') / L$.

Conditioned on this radius event, $z_j$ dictates the assignment only if it appears before all other covering centers in $\pi$.
As the centers are ordered by the minimum distance to the pair, $\ell \ge \min(d_\alpha(y, z_j), d_\alpha(y', z_j))$ guarantees that $z_1, \dots, z_{j-1}$ also cover at least one point of the pair.
Thus, it must appear before $z_1, \dots, z_{j-1}$.
Since $\pi$ is chosen uniformly at random and is independent of $\ell$, this occurs with probability of $1/j$.

By the union bound over all centers, $\Pr[t(y) \neq t(y')] \le \frac{2 d_\alpha(y, y')}{L} \sum_{j=1}^K \frac{1}{j} \le \frac{2 d_\alpha(y, y')}{L} \cdot (\ln K + 1)$, where $K$ is the number of centers capable of covering at least one point of the pair.
Any such center lies within a distance of $L + d_\alpha(y, y') \le 2L$ from $y$.
Since $Z$ is an $(L/4)$-net with a packing constant $\chi$, the minimum distance between distinct centers in $Z$ is at least $L/(4\chi)$.
Thus, these $K$ centers form a subset with a spread of at most $\frac{4L}{L/(4\chi)} = 16\chi$ under $d_\alpha$.
By Lemma~\ref{lem:doubling_aspect_ratio}, $K \le 2^{\ddim \lceil \log_2 (16\chi) \rceil}$, bounding the harmonic sum by $O(\ddim \cdot \log \chi)$.
Setting $\gamma = O(\ddim \cdot \log \chi)$ concludes the proof.
\end{proof}
An edge $(r, b) \in M^*_s$ is cut by the partition if $t(s(r)) \neq t(s(b))$.
By Lemma~\ref{lem:prob_separation}, the probability that the partition cuts any edge of $M^*_s$ is at most $\sum_{(r,b) \in M^*_s} \gamma \cdot \frac{d_s(r, b)}{L} \le \gamma \cdot \frac{\mathrm{w}_s(M^*_s)}{L}$.
Since $\mathrm{w}_s(M^*_s) \le 2Cn(n-1) / \varepsilon + \kappa n$, setting $L = O(n^2/\varepsilon)$ guarantees that $M^*_s$ is completely preserved within the clusters with probability at least $1/2$.

\medskip

The decomposition reduces the original problem to computing an optimal matching independently on each cluster, which yields a $(1+O(\varepsilon))$-approximate solution with probability at least $1/2$.
By construction, each cluster has a bounded diameter of $O(n^2/\varepsilon)$ under $d_\alpha$, and the pseudometric $d_s$ enforces a minimum non-zero distance of at least one.

To connect this decomposition to the setting analyzed in the preceding sections, we focus on a single cluster as our problem instance.
Let $X = R \cup B$ be the point set of the cluster.
We redefine the distance function $d$ to be the pseudometric $d_s$, and the penalty function $p$ to be the scaled penalty $p_\alpha$.
Under this new distance function, distinct points in $X$ may have a distance of zero. We say that they have the same \emph{location}.
We maintain these co-located points as distinct entities to preserve their individual penalties.
This formalizes our earlier assumption that the input $X$ with $d$ and $p$ has a minimum non-zero distance of at least one and a bounded spread of $O(n^2/\varepsilon)$.



\bibliography{lipics-v2021-sample-article}

\newpage

\appendix

\section{Penalty Matching on a Tree}\label{sec:penaly_matching_tree}
In this section, we present an exact algorithm for the minimum-cost bipartite matching with penalties when the underlying metric is a tree metric.
This algorithm will be used as a subroutine for the perturbation step.
After constructing an HST for the input point set, we solve the penalty matching instance on a tree exactly and use its value as a coarse estimate of the optimum value of the original instance.
For simplicity, we regard the HST as an edge-weighted binary tree.
This assumption is without loss of generality, since any HST can be transformed into an edge-weighted binary tree without changing the induced metric on the leaves.

\subsection{Problem setting}
Let $\mathcal{T}=(V,E,w)$ be a weighted binary tree.
For a node $v$, let $\mathcal{T}_v$ denote the subtree rooted at $v$.
The distance between $x$ and $y$ is denoted by $d_{\mathcal T}$, where $d_{\mathcal T}(x,y)$ is the total weight of the unique path between $x$ and $y$ on $\mathcal{T}$.
For a non-root node $v\in V$, let $\operatorname{par}(v)$ denote the parent of $v$.
We assume that every point lies on a leaf, and that internal nodes do not contain input points.
Each point $q\in R\cup B$ has a nonnegative penalty $p(q)$.

A matching is a set $M\subseteq R\times B$ such that each point in $R\cup B$ is incident to at most one pair in $M$.
Let $U(M)\subseteq R\cup B$ be the set of unmatched points. The cost of $M$ on
$\mathcal{T}$ is defined as
\begin{align*}
\operatorname{cost}_{\mathcal{T}}(M)
=
\sum_{(r,b)\in M} d_{\mathcal{T}}(r,b)
+
\sum_{q\in U(M)} p(q).
\end{align*}
The goal is to compute a matching of minimum cost.

\subsection{Algorithm}
We compute the optimum value by a bottom-up dynamic program on the tree.
For an integer $k$, we define $F_v(k)$ as the minimum cost of a partial solution inside $\mathcal{T}_v$ with the imbalance $k$.
Here, imbalance $k$ means the $|k|$ points that remain exposed at the root of $\mathcal{T}_v$ after all
decisions inside the subtree are made.
A positive value $k>0$ means that $k$ red points in $\mathcal{T}_v$ remain unmatched inside $\mathcal{T}_v$ and must be matched to blue points outside $\mathcal{T}_v$. A negative value $k<0$ means that $-k$ blue points in $\mathcal{T}_v$ remain unmatched inside $\mathcal{T}_v$ and must be matched to red points outside $\mathcal{T}_v$. The case $k=0$ means that no point is exposed outside the subtree.

The following uncrossing lemma justifies why a single signed imbalance $k$ is sufficient.

\begin{lemma}
    For every node $v$, there exists an optimal solution in which the points of $\mathcal{T}_v$ matched outside $\mathcal{T}_v$ have only one color.
\end{lemma}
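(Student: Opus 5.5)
We prove the lemma with a standard uncrossing (exchange) argument on a fixed optimal solution. Choose an optimal matching $M$ that, among all optimal matchings, minimizes the total tree length $\sum_{(r,b)\in M} d_{\mathcal T}(r,b)$. If several optimal matchings tie, the tie-breaking potential $\Psi(M) := \sum_{(r,b)\in M} d_{\mathcal T}(r,b)$ still selects a well-defined one. Fix a node $v$ and suppose, for contradiction, that two pairs cross the boundary of $\mathcal{T}_v$ in opposite directions. That is, there are pairs $(r,b),(r',b')\in M$ with $r\in\mathcal{T}_v$, $b\notin\mathcal{T}_v$, and $b'\in\mathcal{T}_v$, $r'\notin\mathcal{T}_v$.

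The key step is to swap these two pairs for $(r,b')$ and $(r',b)$. The point set and the set of unmatched points are unchanged, so the penalty term is unchanged and the new set is still a valid matching. It remains to compare lengths. Both paths $r\to b$ and $r'\to b'$ use the edge $(v,\operatorname{par}(v))$, because each has exactly one endpoint inside $\mathcal{T}_v$. Since $r,b'\in\mathcal{T}_v$, the triangle inequality through $v$ gives
\begin{align*}
d_{\mathcal T}(r,b')\le d_{\mathcal T}(r,v)+d_{\mathcal T}(v,b').
\end{align*}
Since $r',b\notin\mathcal{T}_v$, the triangle inequality through $\operatorname{par}(v)$ gives
\begin{align*}
d_{\mathcal T}(r',b)\le d_{\mathcal T}(r',\operatorname{par}(v))+d_{\mathcal T}(\operatorname{par}(v),b).
\end{align*}
The original two paths decompose as $d_{\mathcal T}(r,b)=d_{\mathcal T}(r,v)+w(v,\operatorname{par}(v))+d_{\mathcal T}(\operatorname{par}(v),b)$, and similarly for $(r',b')$. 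Adding everything, the new total is at most the old total minus $2w(v,\operatorname{par}(v))$.

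This yields the contradiction in one of two ways. If $w(v,\operatorname{par}(v))>0$, the swap strictly decreases the cost, contradicting optimality. If the edge weight is zero, the cost does not increase, so the new matching is also optimal. In that case we need a secondary potential that strictly decreases, and I expect this degenerate case to be the main obstacle. I would first try the number of pairs crossing the edge $(v,\operatorname{par}(v))$: after the swap both new pairs lie entirely inside or entirely outside $\mathcal{T}_v$, so this count drops by two.

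To make the result hold simultaneously for every $v$, I would refine the choice of $M$. Among optimal matchings, choose one minimizing the total number of (pair, tree edge) incidences, i.e. $\sum_{(r,b)\in M}$ of the number of edges on the tree path from $r$ to $b$. A swap at $v$ never increases the incidence count on any edge. The new paths are sub-walks of the old ones with the edge $(v,\operatorname{par}(v))$ removed twice, so the count strictly decreases, which contradicts the choice of $M$.

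Therefore the chosen optimal $M$ has no opposite-direction crossing at any node. For every $v$, the points of $\mathcal{T}_v$ matched outside $\mathcal{T}_v$ all have the same color, which is exactly the statement.
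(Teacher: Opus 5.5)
Your proposal is correct and uses the same uncrossing exchange as the paper, with the identical triangle-inequality computation showing that the swap lowers the cost by at least $2w(v,\operatorname{par}(v))$. The paper stops at ``the total cost does not increase'' and leaves the repeated swapping implicit. Your extremal choice of an optimal matching minimizing the total number of (pair, tree edge) incidences handles zero-weight edges and termination rigorously, and it also yields a single optimal solution that is uncrossed at every node simultaneously. (Your initial tie-breaker $\Psi$ is superfluous: a swap fixes the unmatched set, so it changes $\Psi$ exactly as it changes the cost.)
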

\begin{proof}
    Let $e=(v,\operatorname{par}(v))$ be the edge connecting $\mathcal{T}_v$ to the rest of the tree.
    Suppose that a matching contains two crossing pairs $(r_1,b_2)$ and $(r_2,b_1)$ such that
    $r_1,b_1\in \mathcal{T}_v$ and $r_2,b_2\notin \mathcal{T}_v$.
    We show that replacing these two pairs by $(r_1,b_1)$ and $(r_2,b_2)$ does not increase the cost.
    Note that
    $d_{\mathcal{T}}(r_1,b_2)= d_{\mathcal{T}}(r_1,v) +w(e)+ d_{\mathcal{T}}(\operatorname{par}(v),b_2)$.
    Similarly,
    $d_{\mathcal{T}}(r_2,b_1)=d_{\mathcal{T}}(r_2,\operatorname{par}(v))+w(e)+d_{\mathcal{T}}(v,b_1)$.
    
    On the other hand, by the triangle inequality inside the two components, we have
    \begin{align*}
    d_{\mathcal{T}}(r_1,b_1)
    \le d_{\mathcal{T}}(r_1,v)+d_{\mathcal{T}}(v,b_1),\qquad
    d_{\mathcal{T}}(r_2,b_2)
    \le d_{\mathcal{T}}(r_2,\operatorname{par}(v))
    +d_{\mathcal{T}}(\operatorname{par}(v),b_2).
    \end{align*}
    
    Combining the above inequalities gives
    \begin{align*}
    d_{\mathcal{T}}(r_1,b_1)+d_{\mathcal{T}}(r_2,b_2)
    &\le d_{\mathcal{T}}(r_1,v)+d_{\mathcal{T}}(v,b_1)
    +d_{\mathcal{T}}(r_2,\operatorname{par}(v))
    +d_{\mathcal{T}}(\operatorname{par}(v),b_2)\\
    &= d_{\mathcal{T}}(r_1,b_2)+d_{\mathcal{T}}(r_2,b_1)-2w(e)\\
    &\le d_{\mathcal{T}}(r_1,b_2)+d_{\mathcal{T}}(r_2,b_1).
    \end{align*}
    Thus, the total cost does not increase.

    Consequently, for each subtree, it is sufficient to record only the signed imbalance.
\end{proof}

We now describe the dynamic program. If $v$ is a leaf containing a red point $r$, then we set $F_v(0)=p(r)$ and $F_v(1)=0$, and $F_v(k)=+\infty$ for all other values of $k$. If $v$ is a leaf containing a blue point $b$, then we set $F_v(0)=p(b)$ and $F_v(-1)=0$, and $F_v(k)=+\infty$ for all other values of $k$.
For a non-root node $v$, we define the extended function $\widehat F_v(k)=F_v(k)+|k|\cdot w(v,\operatorname{par}(v))$. The term $|k|\cdot w(v,\operatorname{par}(v))$ is the cost of moving the exposed imbalance from $v$ to its parent.
    
Let $u_1$ and $u_2$ be the children of an internal node $v$. Then the dynamic programming recurrence is
\begin{align*}
F_v(k)=\min_{k_1+k_2=k}\left\{\widehat F_{u_1}(k_1)+\widehat F_{u_2}(k_2)\right\}.
\end{align*}
Equivalently, $F_v=\widehat F_{u_1}\otimes \widehat F_{u_2}$, where the min-sum convolution is defined by $(F\otimes G)(k)=\min_{i+j=k}\{F(i)+G(j)\}$. Finally, if $\rho$ is the root of $\mathcal{T}$, then the optimum value is $F_\rho(0)$.

The recurrence above is not evaluated by enumerating all possible pairs $(k_1,k_2)$. 
Instead, we use the fact that every function $F_v$ is discrete convex.
For a function $F$, define its slope at $k$ by
$s_k(F)=F(k+1)-F(k)$.
Discrete convexity means that the slopes are nondecreasing.
    
The base functions at the leaves are discrete convex.
The edge extension $F_v(k)\mapsto F_v(k)+|k|w(v,\operatorname{par}(v))$ also preserves discrete convexity. 
Moreover, the min-sum convolution of two discrete convex functions is again discrete convex, and the slope list of the convolution is obtained by merging the two slope lists in nondecreasing order.
Therefore, each merge step can be performed by slope-list merging rather than by enumerating all imbalance splits.
    
\subsection{Time Complexity}

In this section, we explain how the running time can be made near-linear.
We represent each discrete convex function by its minimum value, a minimizer, and its slope list.
The slope list is stored in a balanced binary search tree, and each slope element corresponds to one node of the tree.
Each node stores its slope value and a lazy offset that is applied to all slope elements in its subtree.
The lazy offset is used for the edge extension $F(k)\mapsto F(k)+|k|w(v,\operatorname{par}(v))$.
This operation changes the slopes by adding uniform offsets to the relevant slope elements.
When the same value has to be added to all slopes in a subtree, we update only the lazy offset of the subtree root.
    
Since each input point contributes one slope element, the total number of slope elements is $O(n)$, where $n=|R|+|B|$.
When two child functions are merged, we merge their slope lists.
We use a small-to-large merging strategy in which all slope elements of the smaller list are inserted into the larger list.
During the search, we accumulate and push the lazy offsets along the search path, compare the resulting actual slope values, and insert the new slope element at the correct position.
After the insertion, we update the lazy offsets of the affected subtrees.
Therefore, each insertion into the balanced binary search tree takes $O(\log n)$ time.
Under the small-to-large strategy, each slope element is inserted into a larger list only $O(\log n)$ times over the entire bottom-up computation.
Thus, all merge operations take $O(n\log^2 n)$ time.
The edge-extension operations are dominated by the merge operations.
Therefore, the total running time is $O(n\log^2 n)$.
\end{document}